\def\D{\CMcal{D}}
\def\E{\CMcal{E}}
\def\H{\CMcal{H}}
\def\O{\CMcal{O}}
\def\U{\CMcal{U}}
\def\X{\CMcal{X}}
\def\Y{\CMcal{Y}}
\theoremstyle{plain}
\newtheorem{theorem}{Theorem}[section]
\newtheorem{lemma}[theorem]{Lemma}
\theoremstyle{definition}
\newtheorem{definition}[theorem]{Definition}
\newtheorem{claim}[theorem]{Claim}
\newtheorem{fact}[theorem]{Fact}
\newcommand {\minusspace} {\: \! \!}
\newcommand {\smallspace} {\: \!}
\newcommand {\fn} [2] {\ensuremath{ #1 \minusspace \br{ #2 } }}
\newcommand {\Fn} [2] {\ensuremath{ #1 \minusspace \Br{ #2 } }}
\newcommand {\set} [1] {\ensuremath{ \left\lbrace #1 \right\rbrace }}
\newcommand {\br} [1] {\ensuremath{ \left( #1 \right) }}
\newcommand {\Br} [1] {\ensuremath{ \left[ #1 \right] }}
\newcommand {\norm} [1] {\ensuremath{ \left\| #1 \right\| }}
\newcommand {\normsub} [2] {\ensuremath{ \norm{#1}_{#2} }}
\newcommand {\onenorm} [1] {\normsub{#1}{1}}
\newcommand {\abs} [1] {\ensuremath{ \left| #1 \right| }}
\newcommand {\fidelity} [2] {\mathrm{F}\br{#1,#2}}
\newcommand {\bra} [1] {\ensuremath{ \left\langle #1 \right| }}
\newcommand {\ket} [1] {\ensuremath{ \left| #1 \right\rangle }}
\newcommand {\ketbratwo} [2] {\ensuremath{ \left| #1 \middle\rangle \middle\langle #2 \right| }}
\newcommand {\ketbra} [1] {\ketbratwo{#1}{#1}}
\newcommand {\defeq} {\ensuremath{ \stackrel{\mathrm{def}}{=} }}
\newcommand {\mutinf} [2] {\fn{\mathrm{I}}{#1 \smallspace : \smallspace #2}}
\newcommand {\condmutinf} [3] {\mutinf{#1}{#2 \smallspace \middle\vert \smallspace #3}}
\newcommand {\prob} [1] {\Fn{\Pr}{#1}}
\newcommand {\relent} [2]  {\fn{\mathrm{D}}{#1 \middle\| #2}}
\DeclareMathOperator*{\bigE}{\mathbb{E}}
\newcommand {\expec} [2] {\Fn{\bigE_{\substack{#1}}}{#2}}
\newcommand {\Tr} {\ensuremath{ \mathrm{Tr} }}
\newcommand {\id} {\ensuremath{\mathds{1}}}
\newcommand {\email} [1] {\href{mailto:#1}{\texttt{#1}}}
\newcommand {\mytitle} {}
\newcommand {\suppress} [1] {}
\newcommand {\myytitle} {Exponential Separation of Quantum Communication and Classical Information}
\newcommand {\Dave}   {Dave Touchette}
\newcommand {\Anurag}  {Anurag Anshu}
\newcommand {\Penghui} {Penghui Yao}
\newcommand {\Nengkun} {Nengkun Yu}
\newcommand {\CQT} {Centre for Quantum Technologies}
\newcommand {\NUS} {National University of Singapore.}
\newcommand {\CNO} {Department of Combinatorics and Optimization}
\newcommand {\IQCO} {Institute for Quantum Computing}
\newcommand {\UW} {Unversity of Waterloo}
\newcommand {\PI} {Perimeter Institute}
\newcommand{\CQSI}{Centre for Quantum Software and Information,
   Faculty of Engineering and Information Technology, University of
   Technology Sydney}
\newcommand{\QUICS}{Joint Center for Quantum Information and Computer Science, University of Maryland
}
\newcommand {\authorblock} [3] {
	\begin{minipage}[t]{0.3\linewidth}
		\centering
		{#1}\\[0.8ex]
		{\footnotesize {#2}\\[-0.7ex]
		\email{#3}}
	\end{minipage}\vspace{1ex}
}
\begin{document}

\begin{titlepage}
\title{\textbf{\myytitle}\\[2ex]}

\author{
    \authorblock{\Anurag}{\CQT, \NUS}{a0109169@u.nus.edu}
	\authorblock{\Dave}{\IQCO,~and~\CNO,~\UW,~and~\PI}{touchette.dave@gmail.com}\\
	\authorblock{\Penghui}{\QUICS}{phyao1985@gmail.com}
	\authorblock{\Nengkun}{\CQSI}{nengkunyu@gmail.com}
}

\clearpage\maketitle
\thispagestyle{empty}

\begin{abstract}

We exhibit a Boolean function for which the {\em quantum communication complexity} is exponentially larger than the  {\em classical information complexity}.
An exponential separation in the other direction was already known from the work of Kerenidis et.~al.~[SICOMP 44, pp. 1550--1572],
hence our work implies that these two complexity measures are incomparable.
As classical information complexity is an upper bound on {\em quantum information
	complexity}, which in turn is equal to {\em amortized quantum communication complexity}, our work implies that a tight direct sum result
for distributional quantum communication complexity cannot hold.
The function we use to present
such a separation is the \textsf{ Symmetric $k$-ary Pointer Jumping} function introduced by Rao and Sinha~[ECCC TR15-057], whose classical communication complexity is exponentially larger than its classical information complexity.
In this paper, we show that the quantum communication complexity of this function
is polynomially equivalent to its classical communication complexity.
The high-level idea behind our proof is arguably the simplest so far for such an
exponential separation between information and communication, driven by a sequence
of round-elimination arguments, allowing us to simplify further the approach of Rao and Sinha.

As another application of the techniques that we develop, we give a simple
proof for an optimal trade-off between Alice's and Bob's communication 
while computing the
related \textsf{Greater-Than} function on $n$ bits: say Bob communicates at
most $b$ bits, then Alice must send $\frac{n}{2^{O (b)}}$  bits to Bob.
This holds even when allowing pre-shared entanglement.
We also present a \emph{classical} protocol achieving this bound.
\end{abstract}
\end{titlepage}


\section{Introduction}
Communication complexity is a core topic of
computational complexity which studies the number of bits that the
participants in a communication protocol need to
exchange in order to accomplish a distributed task. Designing
generic lower bound methods for communication complexity has been
a central endeavor since the birth of this subject, see~\cite{Kushilevitz96,LeeS:2007} as excellent surveys. One of the most powerful lower
bound methods for {\em randomized communication complexity} (RCC) is {\em information complexity} (IC) introduced
in~\cite{Chakrabarti:2001,Bar-Yossef2002a, Barak:2010:CIC:1806689.1806701}, which studies
the amount of information about the inputs that the players need
to reveal in order to accomplish a communication task.
Investigations of information complexity have led to
numerous elegant compression protocols, which in
turn have led to direct sum and direct product results~\cite{Jain:2003:DST:1759210.1759242,Barak:2010:CIC:1806689.1806701,Braverman2011,Jain:2012,Jain:2012c,Braverman2012,Jain:2011,BravermanRWY:2013,BravermanWY:2013,BravermanW:2015} (and many other works).

The notion of information complexity appears in two flavors. The first is termed  {\em external information complexity}, introduced by Chakrabarti, Shi, Wirth and Yao ~\cite{Chakrabarti:2001}, which measures the amount of information about the inputs that the players reveal to an external observer in the protocol. Formally,
it is defined as $\mutinf{XY}{MR}$, the mutual information between $XY$ and $MR$, where $XY$ is the joint input to the players (with respect to an implicit prior distribution $\mu$); $M$
is the set of messages exchanged in the protocol and $R$ is the public coins shared between the players. The second notion is that of {\em (internal)
information complexity}, formally introduced by Barak, Braverman, Chen and Rao in~\cite{Barak:2010:CIC:1806689.1806701} (building on a related notion introduced by Bar-Yossef, Jayram, Kumar and Sivakumar~\cite{Bar-Yossef2002a}) and defined as $\condmutinf{X}{MR}{Y}+\condmutinf{Y}{MR}{X}$.

Following is a central question in the field of communication complexity. Given a communication protocol with external information complexity $I^{\text{ext}}$, information complexity $I$ and communication complexity $C$, where $I^{\text{ext}}, I\ll C$, can this protocol be simulated by another communication protocol (or {\em compressed}) with a much smaller amount of communication? After a decade's efforts, it is now
known that any such protocol can be compressed to one with communication
complexity $2^{\O\br{I}}$~\cite{Braverman2012}; $\O\br{\sqrt{IC}\log C}$~\cite{Barak:2010:CIC:1806689.1806701}; $\O\br{I^{\text{ext}}\log^2 C}$~\cite{Barak:2010:CIC:1806689.1806701}. For $r$-round protocols,
it can be compressed to $I+\O\br{\sqrt{rI}+r}$~\cite{Braverman2011}.
If the distribution of the input is product, then recent results show that the protocol can be simulated by another protocol with communication complexity $\O\br{I^2\text{polylog} I}$ due to Kol~\cite{Kol:2016:ICP:2897518.2897537}, and to $\O\br{I\log^2 I}$ in a later improvement by Sherstov~\cite{Sherstov:2016}.

An immediate question towards this line of research is whether compressing to $\O\br{I}$, or even $\text{poly} (I)$,
is possible in general. This question is tightly connected to the direct sum question, and
unfortunately, the answer is negative. In a sequence of breakthrough
works, Ganor, Kol and Raz~\cite{GanorKR:2014,GanorKR:2015} exhibited a function
with information complexity $I$ that requires $2^{\Omega\br{I}}$ communication to solve with constant error, say, $1/3$.
A significantly simpler proof was later given by Rao and Sinha~\cite{RaoS:2015}.
These works imply that Braverman's exponential simulation theorem~\cite{Braverman2012}
is tight in some cases. Moreover, since information complexity is equal to
 {\em amortized communication complexity}~\cite{Braverman2011},
this also proves that a tight direct sum result for distributional communication
complexity is not possible in general, resolving a longstanding open problem in
communication complexity.

Much work has been devoted to seeking the quantum analog of information
complexity, inspired by the numerous successful applications of information complexity in classical communication complexity. A major obstacle towards extending the notion of information complexity to the quantum setting is that the messages
exchanged between the players in different rounds in general do not exist at the same time
due to the no-cloning theorem~\cite{WottersZ:1982, Dieks:1982}. In spite of this, Jain, Radhakrishnan and Sen~\cite{Jain:2008} defined an information theoretic notion of {\em privacy loss} and presented several elegant compression schemes for quantum protocols. The same set of authors
~\cite{Jain:2003:LBB:946243.946331} also proposed
a different measure called {\em information loss} which
extended the work~\cite{Bar-Yossef2002a}, lower bounding the communication complexity of the \textsf{Set-Disjointness} function, to the
quantum setting. Recently, Touchette~\cite{Touchette:2015} has
extended (internal) information complexity to the quantum
setting by defining quantum information complexity (QIC),
inspired by the {\em quantum state-redistribution protocols}~\cite{DevetakY:2008,YardD:2009}. QIC has been shown to satisfy many of the natural properties possessed by IC, and in particular, it is equal to amortized quantum communication complexity. Meanwhile, Touchette has also shown a direct sum
result for {\em bounded-round quantum communication complexity}. To add to these developments, Braverman, Garg, Ko, Mao and Touchette~\cite{BravermanGKMT:2015} have used QIC in a crucial way to give a nearly tight bound on the bound-round quantum communication complexity of \textsf{Set-Disjointness}.
More recently, Nayak and Touchette~\cite{NayakT:2016} used QIC to extend the work of Jain and Nayak~\cite{JainN:2014}, using Augmented Index to lower bound the space complexity of streaming algorithms for \textsc{dyck}(2).

We study the gap between quantum communication complexity (QCC)
and IC. It is known that $\text{QCC}\br{f, 1/3}\leq2^{\O\br{\text{QIC}\br{f, 1/3}}}\leq2^{\O\br{\text{IC}\br{f, 1/3}}}$~\cite{BravermanGKMT:2015} for
any Boolean function $f$, where $\text{QCC}\br{f, 1/3}, \text{IC}\br{f, 1/3}$ and $\text{QIC}\br{f, 1/3}$ represent the minimum QCC, the minimum IC and the minimum IC of a protocol that computes $f$ with error at most $1/3$, respectively.
However, in contrast to the classical analog of this result, their proof does not proceed via a direct compression argument and much remains to be added in our understanding of interactive quantum compression.

\subsection{Results and Contributions}

In this paper we show that
there exists a Boolean function with an exponential gap between its QCC and IC.
This gap is as large as possible~\cite{Braverman2012}.

\begin{theorem}\label{thm:main}
	There exists a (family of) Boolean function $f$ and a distribution $\mu$ on its input such that $\text{QCC}\br{f, \mu, 1/3}\geq2^{\Omega\br{\text{IC}\br{f,\mu, 1/3}}}\geq2^{\Omega\br{\text{QIC}\br{f,\mu, 1/3}}}$. 
\end{theorem}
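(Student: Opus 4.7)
The natural route is to take $f$ to be the Symmetric $k$-ary Pointer Jumping function $\text{SPJ}_{k,d}$ of Rao and Sinha together with the hard distribution $\mu$ they construct for it. They establish that $\text{IC}(\text{SPJ}_{k,d},\mu,1/3)=\tilde{O}(d\log k)$, while the randomized communication complexity is $k^{\Omega(d)}$. Since the excerpt already records $\text{QIC}\le\text{IC}$, the content of \Cref{thm:main} is entirely in proving a matching \emph{quantum} lower bound $\text{QCC}(\text{SPJ}_{k,d},\mu,1/3)\ge k^{\Omega(d)}$; choosing $d=\Theta(\log n/\log\log n)$ with $k=\text{polylog}(n)$ then makes the communication bound exponential in the information bound.

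My plan is to mimic the Rao--Sinha argument via a \emph{quantum round-elimination} recursion tailored to the recursive structure of $\text{SPJ}$. A depth-$d$ instance decomposes into $k$ child depth-$(d-1)$ instances indexed by a pointer that is itself determined by Alice's and Bob's inputs; I want to show that any $r$-round entanglement-assisted quantum protocol $\Prot$ of total communication $C$ with error $1/3$ on depth $d$ can be turned into a protocol $\Prot'$ on depth $d-1$ with error $1/3+o(1)$ and communication $C/k^{\Omega(1)}$. Iterating this $d$ times drives the error to a constant below $1/2$ while reducing communication by $k^{\Omega(d)}$, so the base case (a constant-depth instance whose quantum communication complexity is $\Omega(\log k)$ by a direct product / hybrid argument) gives the bound.

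The key reduction step proceeds in two parts. (i) An \emph{averaging} step: among the $k$ children of the root there must exist a pointer value $p^\star$ such that, under the induced distribution on sub-inputs, the communication of $\Prot$ restricted to messages ``about'' the $p^\star$-th sub-instance is only a $1/k$ fraction of $C$. (ii) A \emph{coherent restriction} step: Alice and Bob sample $p^\star$ from shared randomness, hardwire their inputs on the other $k-1$ branches, and replay $\Prot$ locally on the unused branches using an average encoding / quantum substate style simulation, exchanging quantum messages only for the branch that carries the actual sub-instance. This produces the desired $\Prot'$ on $\text{SPJ}_{k,d-1}$.

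The technically delicate point, and the one I expect to be the main obstacle, is carrying out the coherent restriction without the classical luxury of conditioning on transcripts. Because of no-cloning the intermediate ``state of the protocol'' exists only once and is entangled between the players and with all $k$ sub-inputs; I cannot simply freeze messages pertaining to the inactive branches. I would address this using the quantum average-encoding / substate arguments already familiar from lower bounds such as those for Augmented Index, together with the Touchette-style quantum information cost bookkeeping that tracks how much information each quantum message holds about each sub-input register. Provided the incurred trace-distance errors in each of the $d$ rounds of the recursion add up to a small constant, the induction goes through and yields the desired exponential separation.
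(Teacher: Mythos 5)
There is a genuine gap, and it starts before the ``technically delicate point'' you flag. Your recursion claims that depth-$d$ Symmetric $k$-ary Pointer Jumping requires $k^{\Omega(d)}$ communication, but the function has a trivial \emph{unbounded-round} protocol of cost $O(d\log k)$ (the players simply walk down the tree, exchanging $x(\hat z_{\le r})$ and $y(\hat z_{\le r})$ at each level), so the claimed depth-recursion conclusion is false. Round elimination of the Miltersen--Nisan--Safra--Wigderson type only lower-bounds $r$-\emph{round} protocols; your reduction from depth $d$ to depth $d-1$ cannot simultaneously preserve ``arbitrary number of rounds'' and shave a $k^{\Omega(1)}$ factor off the communication, or it would contradict the trivial upper bound. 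Relatedly, your stated parameters do not match the function: the information complexity is not $\tilde O(d\log k)$ but $O(\log(k\log n)\,2^{2\log n/k})$, i.e.\ $O(\log k)$ once $k\gtrsim\log n$, and the true (even classical) communication lower bound is only $\Theta(k)$ capped at $O(\log n)$ --- the exponential separation comes from setting $k=\log n$, so that $\mathrm{IC}=O(\log\log n)$ while $\mathrm{QCC}=\Omega\bigl((\log n)^{1/5}\bigr)=2^{\Omega(\mathrm{IC})}$, not from any $k^{\Omega(d)}$ blow-up.

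The hardness of this function is not ``you must chase the pointer $d$ times''; it lies in the hidden layer $J$ and the symmetric consistency structure. The paper's proof shows that any protocol with communication $c\ll\min\{k^{1/5},\log n\}$ produces, on Alice's and Bob's local registers, nearly the same final states whether the input is drawn from the fooling distribution $p$ (where the parts below layer $J$ are independent across the players) or from either hard distribution $\mu_0,\mu_1$ (which force opposite answers) --- a contradiction with correctness. The ingredients are: a quantum Shearer lemma exploiting that the consistent set $S$ has density $1/k$ per level; compression of the multi-round protocol to a one-way protocol (via Jain--Radhakrishnan--Sen state redistribution and a protocol-with-abort simulation costing a $2^{O(\ell_B)}$ factor, which is where the $\Omega(\log n)$ cap originates); a direct-sum bound for one-way protocols showing Alice learns only $O(\ell/n)$ about $Y_J$; and a distributional cut-and-paste lemma to pass from $p$ to $\mu_0$ and $\mu_1$, which share marginals with $p$ conditioned on $x_{\le j}y_{\le j}j$. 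None of these steps appears in your outline, and the fooling-distribution framework --- the heart of the argument --- is absent, so the proposal as written does not yield the theorem.
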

Combining with the fact that QIC is equal to the amortized quantum communication complexity,
this shows that a tight direct sum result for distributional QCC is not possible. In fact, our results show that for the task we consider, the amortized  \emph{classical} communication is exponentially smaller than the \emph{quantum} communication complexity.
Notice that for the \textsf{Vector-in-Subspace} Problem, Kerenidis, Laplante, Lerays, Roland and Xiao~\cite{Kerenidis2012} proved that its \emph{quantum} communication complexity is exponentially smaller than its amortized  \emph{classical} communication. Our results thus imply that these two notions, QCC and IC, are incomparable.

In~\cite{GanorKR:2014,GanorKR:2015}, Ganor et.al. introduced the \textsf{Bursting-Noise} function and proved that the RCC of this function is exponentially larger than its IC. To this end, they introduced a new lower bound method for RCC, namely the {\em relative discrepancy bound}, and showed that the relative discrepancy bound of the bursting noise function is exponentially larger than the IC. An immediate question, which would directly imply Theorem~\ref{thm:main}, is whether the relative discrepancy bound is also a lower bound on QCC or they are polynomially equivalent. The answer is negative.
In ~\cite{Regev:2011}, Klartag and Regev essentially showed that the relative discrepancy bound of \textsf{Vector-in-Subspace} problem is $\Omega\br{n^{1/3}}$, while the QCC is $\O\br{\log n}$. Later, Rao and Sinha~\cite{RaoS:2015} simplified Ganor et.al's result by defining a similar but relatively simpler  function called \textsf{ Symmetric $k$-ary Pointer Jumping} function, a symmetrized variant of the \textsf{Iterated Index} function~\cite{Klauck:2001:IQC:380752.380786}. They introduced and used the {\em fooling distribution method} to prove the lower bound on the RCC of this function. However, in the same paper, they also showed that fooling distribution method subsumes the relative discrepancy bound, so that we cannot directly rely on their fooling distribution method to prove our desired separation. Currently, other than QIC, the strongest method to prove QCC lower bounds is $\gamma_2$/{\em generalized discrepancy}~\cite{Klauck:2007:LBQ:1328722.1328729,Sherstov:2008:PMM:1374376.1374392}. However, at least in the prior-free setting, the generalized discrepancy is known to be upper bounded by QIC due to~\cite{BravermanGKMT:2015}. Moreover, in the distributional setting, the generalized discrepancy is known to lower bound IC~\cite{Kerenidis2012}, which we know is low for the task we consider.
In particular, our result imply that for some specific functions, like the one we consider here, the generalized discrepancy bound can be exponentially smaller than the QCC.
Hence, to prove Theorem~\ref{thm:main}, we need new techniques to prove the lower bound on QCC.

The function we use to exhibit the exponential separation is the \textsf{ Symmetric $k$-ary Pointer Jumping} function, the same function used by Rao and Sinha~\cite{RaoS:2015} to show the exponential gap between RCC and IC. To reach our goal of showing that QCC is
also large, we adopt the same framework as developed in~\cite{RaoS:2015}, and essentially show that for their task, the fooling distribution they defined is also a \emph{quantum fooling distribution}. However, the proof technique is significantly different from theirs. As explained above, a distribution fooling classical protocols with low communication does not necessary fools quantum protocols with low communication. Moreover, the proof in~\cite{RaoS:2015} heavily relies on two ideas that have no clear quantum counterparts: first,  that a protocol with low communication induces large monochromatic rectangles, and, second,  that given a protocol with input $XY$ drawn from a product distribution and a transcript $M$, $X-M-Y$ forms a Markov chain.

In order to avoid these obstacles, our proof is based on the
 {\em round elimination}
technique \cite{MiltersenNSW:1998, Klauck:2001:IQC:380752.380786,Jain:2003:LBB:946243.946331}.
Even though we handle various technical difficulties surrounding quantum messages, we believe that, conceptually, the high-level outline of our proof, as described in section~\ref{sec:highlevel}, is the simplest among aforementioned exponential separation results, simplifying further the ideas developed in~\cite{RaoS:2015}.

In particular, it is a simple consequence of our proof techniques that the \textsf{Greater-Than} function on $n$ bits satisfies a communication trade-off similar to that of the \textsf{Index} function
\begin{theorem}\label{th:gtinformal}
In any (quantum) protocol computing \textsf{Greater-Than} on $n$ bits with error $1/3$,  if Bob communicates $b$ bits to Alice, then Alice must communicate $\frac{n}{2^{O (b)}}$ bits to Bob.
\end{theorem}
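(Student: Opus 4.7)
The plan is to prove the trade-off by iterated round elimination, directly applying the machinery developed for the main theorem.

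First, it suffices to prove the analogous trade-off for \emph{Augmented Index}---in which Alice holds $x \in \{0,1\}^n$ and Bob holds $(i, x_1, \ldots, x_{i-1})$ and must output $x_i$. Given such an instance, Bob forms $y := x_1 \cdots x_{i-1} \, 0 \, 1^{n-i}$, and then $x > y$ if and only if $x_i = 1$. Hence any quantum protocol for \textsf{Greater-Than} on $n$ bits with Bob sending $b$ qubits and Alice sending $a$ qubits yields a quantum protocol for Augmented Index on $n$ bits with the same communication pattern, and it suffices to show $a = n / 2^{O(b)}$ for Augmented Index.

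For Augmented Index I would iteratively eliminate one qubit of Bob's communication per step. The key lemma is: from a quantum protocol for Augmented Index on $N$ bits with Bob sending at most $b$ qubits, Alice at most $a$ qubits, and error at most $\epsilon$, one constructs a quantum protocol for Augmented Index on $N/2$ bits with Bob sending at most $b - 1$ qubits, Alice at most $a$ qubits, and error at most $\epsilon + \delta$, for a small $\delta$. The construction restricts Bob's index to the upper half $i \in \{N/2 + 1, \ldots, N\}$, so that the block $x_{N/2+1}, \ldots, x_N$ plays the role of the $(N/2)$-bit Augmented Index instance (with the prefix $x_1, \ldots, x_{N/2}$ being public shared randomness known to both parties). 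A quantum average-encoding argument then lets Alice approximately prepare one qubit of Bob's first message by a message-independent state, shaving one qubit off Bob's communication at the cost of a trace-distance penalty $\delta$ per round bounded by $O(\sqrt{1/N})$.

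Iterating this lemma $b$ times yields a one-way Alice-to-Bob quantum protocol solving Augmented Index on $n / 2^b$ bits with total error $\epsilon + \sum_k \delta_k \lesssim \sqrt{2^b/n}$; in the interesting regime $b \leq \log n - O(1)$ (beyond which the claimed bound is trivial), this is bounded well away from $1/2$. Invoking the $\Omega(m)$ lower bound on one-way entanglement-assisted quantum Augmented Index on $m$ bits from Jain and Nayak~\cite{JainN:2014} then gives $a = \Omega(n / 2^b) = n / 2^{O(b)}$, proving the theorem.

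The principal technical obstacle is the single-qubit round-elimination step: cleanly replacing one qubit of Bob's messages by an Alice-simulable state, while preserving the Augmented Index structure of the reduced subproblem and controlling the trace-distance error across $b$ iterations in the presence of pre-shared entanglement. This is exactly the kind of manipulation carried out in the proof of the main theorem, so the argument reduces to transcribing those ideas to Augmented Index with the hard distribution $x$ uniform in $\{0,1\}^n$ and $i$ uniform in $[n]$.
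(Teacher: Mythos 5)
Your outer scaffolding is sound---reducing Augmented Index to \textsf{Greater-Than} and invoking a one-way $\Omega(m)$ lower bound at the end are both fine (modulo the boundary case $x=y$ in your reduction, which is fixed by setting $y := x_{<i}\,1\,0^{n-i}$). But the key lemma in the middle, the per-qubit round elimination, has a genuine gap, and it is not the manipulation carried out in the paper. First, after you reduce to Augmented Index, Bob holds the \emph{index} $i$ (and a prefix), not a long string decomposable into independent blocks. The average-encoding argument lets the \emph{index-holder} simulate a message of the \emph{string-holder}, because an $\ell$-qubit message from the string-holder carries at most $O(\ell/N)$ information about a random coordinate. It gives no control whatsoever over a message sent \emph{by} the index-holder: Bob's single qubit can depend arbitrarily on $i$ (e.g.\ reveal whether $i$ lies in the upper or lower quarter), so there is no message-independent state Alice can prepare with error $o(1)$, let alone $O(\sqrt{1/N})$. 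Second, even if you ran the elimination on \textsf{Greater-Than} itself (where Bob does hold an $N$-bit string and the function is self-reducible), the error accounting fails: eliminating a $1$-qubit message while shrinking the instance by a factor of $m$ incurs error on the order of $\sqrt{1/m}$, because the message can carry $\Theta(1/m)$ bits of information about the chosen block. With $m=2$ (halving) this is a \emph{constant} per step, and summing over $b$ steps destroys the bound; your figure $O(\sqrt{1/N})$ would require shrinking from $N$ to a single coordinate in one step. Choosing $m$ large enough to make the per-step error $\delta$ forces shrinkage $(1/\delta^2)^b$ with total error $b\delta$, which at best yields $n/2^{O(b\log b)}$, not $n/2^{O(b)}$.

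The paper takes a different and non-iterative route precisely to avoid this. It removes \emph{all} of Bob's communication in one shot via the protocol-with-abort simulation underlying Lemma~\ref{lem:multirddirectsum}: Bob guesses his own messages by measuring pre-shared EPR pairs and aborts on a mismatch, succeeding with probability $2^{-O(b)}$; this converts the interactive protocol into a genuinely one-way Alice-to-Bob protocol with communication $a\cdot 2^{O(b)}$ and error $2\epsilon$, with no per-round additive error to accumulate. The $2^{O(b)}$ loss is thus multiplicative in the communication rather than additive in the error. It then proves the one-way $\Omega(n)$ lower bound for \textsf{Greater-Than} directly, using the hard distribution with $X_{<J}=Y_{<J}$ and the information-theoretic argument of Ramamoorthy and Sinha (bounding $\condmutinf{X_J}{CBY_{\geq J}}{X_{<J}Y_{<J}J}$ by $4c/n$ and comparing against the information the output must carry about $\textsf{Greater-Than}(X,Y)$), rather than passing through Augmented Index. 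If you replace your iterated elimination by this single abort-based simulation, the rest of your outline goes through.
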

We provide a simple matching upper bound. To the best of our knowledge, this trade-off was not known before, even for classical communication~\cite{BravermanWeinstein2011,Viola:2013,RSinha:2015}.This trade-off is the same as the one of \textsf{Index} function~\cite{MiltersenNSW:1998,JainRS09}, where Alice and Bob are given $x\in\set{0,1}^n$ and $i\in[n]$, respectively,  and \textsf{Index}$\br{x,i}\defeq x_i$. In contrast to \textsf{Index} for which the upper bound can be achieved with only 2-messages (if Bob sends the first message), the protocol we give here to achieve the trade-off requires $\Omega\br{b}$ rounds of interaction if Bob sends fewer bits to Alice than Alice sends to Bob.
Interaction is necessary here, since for any constant number of rounds $r$, the $r$-round communication complexity of \textsf{Greater-Than} on $n$ bits is $\Omega (n^{1/r})$~\cite{MiltersenNSW:1998}.

We point out that the first communication task to be presented as a candidate separating information complexity from communication complexity~\cite{Braverman:2013} was motivated by the \textsf{Greater-Than} function, and all tasks achieving such a separation have a hard distribution bearing some resemblance to the hard distribution for \textsf{Greater-Than}. We build on~\cite{RSinha:2015}, who gave a simple proof of the optimal symmetric $\Omega (\log n)$ lower bound, and apply our strengthening of a lemma, variants of which have appeared in all previous works on exponential separation between IC and RCC.

\section{The Function: Symmetric $k$-ary Pointer Jumping}

To exhibit an exponential separation between QCC and IC, we consider the \textsf{Symmetric $k$-ary Pointer Jumping} function introduced in \cite{RaoS:2015}, which in turn is based upon the ideas introduced in \cite{Braverman:2013, GanorKR:2014,GanorKR:2015}; see Figure~\ref{fig:pointerjumping}. 

We work with the set $[k]=\{0,1,\ldots k-1\}$, endowed with addition (modulo $k$), and strings of elements from this set. For any integer $j$, the set of all strings of length less than $j$ will be represented by $[k]^{<j}$. Another parameter characterizing this function is $n$.  The functions $x,y : [k]^{<n}\rightarrow [k]$ map strings of length less than $n$ to elements of $[k]$. The functions $f,g : [k]^n\rightarrow \{0,1\}$ map strings of length $n$ to binary values $\{0,1\}$. Given an integer $j$ and a string $z$ with $|z| \geq j$ , let $z_{\leq j}$ represent the string formed by taking the first $j$ characters of $z$. Similarly, we define $z_j$ as the $j$-th character of $z$.
We use similar notation for the functions $x, y$, with $x_{\leq j}$ the restriction of $x$ to strings $z$ satisfying $|z| \leq j$, etc.

\begin{figure}
\begin{overpic}[width=0.9\textwidth]{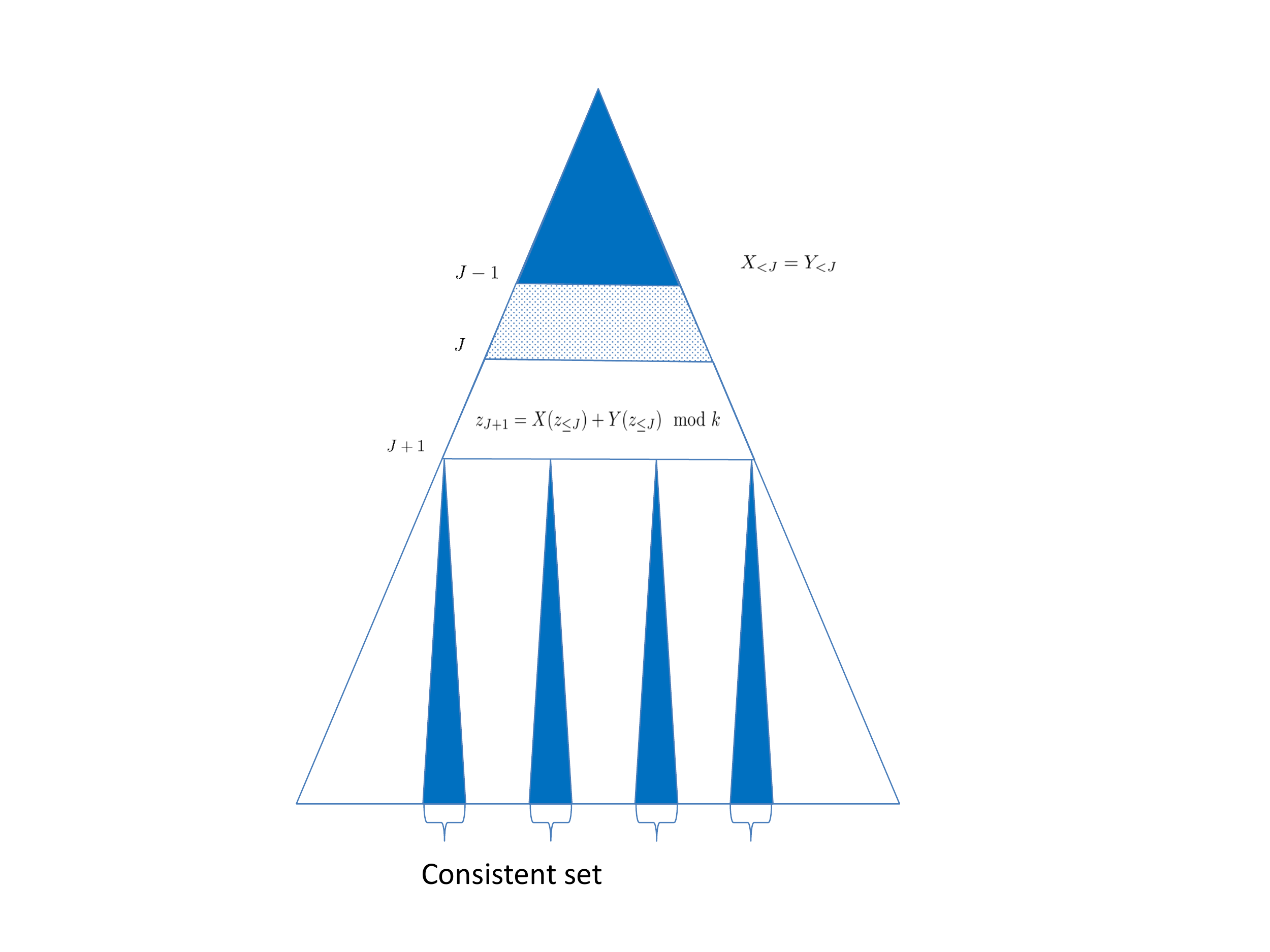}
\end{overpic}
  \caption{Depiction of the $k$-ary pointer jumping function. $X$ and $Y$ are defined for all internal nodes in a complete $k$-ary tree of depth $n$, and $F$ and $G$ are defined for all leaves. Given an hidden layer $J$, it holds that $X_{<J} = Y_{<J}$, and the set of consistent strings is defined through $X_J + Y_J$~mod~$k$. Under $\mu_b$, $X_{>J} = Y_{>J}$ for all consistent internal nodes, and $F \oplus G = b$ for all consistent leaves.}
  \label{fig:pointerjumping}
\end{figure}

For an integer $j<n$ and functions $x,y$, we say that a string $z$ is \textit{consistent} with $x,y,j$ if $|z|>j$ and it holds that
$x(z_{\leq j}) + y(z_{\leq j}) = z_{j+1} \quad \text{mod } k$.
We follow~\cite{RaoS:2015} and define a \emph{ quantum fooling distribution} $p$ from which  we derive a \emph{hard distribution} $\mu$ by further conditioning $p$ on an event $\E$.
We later show that low communication protocols cannot distinguish between $0$-inputs to the hard distribution and inputs to the fooling distribution, and similarly for $1$-inputs.

\begin{definition}
\label{def:mu}
\textbf{Fooling Distribution $p(x,y,f,g,j)$:} Let $J$ be a random variable taking value uniformly at random in $\{0,1\ldots n-1\}$. We define $p(x,y,f,g,j) \defeq \Pr_J(j)\cdot p(x,y,f,g|j)$, where the conditional distribution $p(x,y,f,g|j)$ is defined as follows: $x,y,f,g$ are chosen uniformly at random, subject to the constraint that for all $z \in [k]^{<j}$, $x(z)=y(z)$.
\end{definition}

\begin{definition}
\label{def:p}
\textbf{Hard Distribution $\mu(x,y,f,g,j)$:} Let $\E_0$ be the event that for every $x,y,f,g,j$ and every $z$ consistent with this choice of $x,y,j$, $x(z)=y(z)$ (when $|z|<n$) and $f(z)=g(z)$ (when $|z|=n$). Let $\E_1$ be the event that for every $x,y,f,g,j$ and every $z$ consistent with this choice of $x,y,j$, $x(z)=y(z)$ (when $|z|<n$) and $f(z)\neq g(z)$ (when $|z|=n$). Let $\E \defeq \E_0 \vee \E_1$, then $\mu(x,y,f,g,j) \defeq p(x,y,f,g,j| \E)$. We further denote $\mu_0 = \mu | \E_0 = p | \E_0$, and $\mu_1 = \mu | \E_1 = p | \E_1$, so that $\mu = \frac{1}{2} \mu_0 + \frac{1}{2} \mu_1$.
\end{definition}

This allows us to define the inputs to Alice and Bob and the required task.
\begin{definition}
\label{def:commtask}
\textbf{The Communication Task.}
\begin{itemize}
\item A referee draws $x,y,f,g,j$ from the distribution $\mu(x,y,f,g,j)$. Alice is given input $(x,f)$, and Bob input $(y,g)$. The index $j$ is kept hidden from both parties.
\item Let $\hat{z}\in [k]^n$ be the unique string that satisfies, for all $r>0$ (and $r<n$),  $x(\hat{z}_{\leq r}) + y(\hat{z}_{\leq r}) = \hat{z}_{r+1}$, and $x(\epsilon) + y(\epsilon) = \hat{z}_1$ for $\epsilon$  the empty string. Alice and Bob must output $f(\hat{z})+g(\hat{z})~\text{mod } 2$.
\end{itemize}
\end{definition}

An important property of the distribution $\mu(x,y,f,g,j)$ is that the output $f(\hat{z})+g(\hat{z})~\text{mod } 2$ is the same on all consistent strings, simply because $f(z)=g(z)$ (or $f(z)\neq g(z)$) on all consistent strings $z$, and the unique string $\hat{z}$ on which $f(\hat{z})+g(\hat{z}) \quad \text{mod } 2$ must be evaluated is also a consistent string. Thus, we define $S$ to be the set of all consistent strings for a given tuple $x,y,j$. This allows us to extend the definition of distributions $p$ and $\mu$ to include $S$, as $p(x,y,f,g,s,j)$ and $\mu(x,y,f,g,s,j)$.

The proof of our main theorem, Theorem~\ref{thm:main}, follows from the following two theorems.

\begin{theorem}
\label{theo:qicupperbound}
There exists a quantum protocol that accomplishes the communication task from Definition~\ref{def:commtask} with error $\varepsilon  \leq \frac{1}{\log n}$ and  with QIC upper bounded by IC, which in turn is upper bounded by $\O(\log(k\log n)2^{\frac{2\log n}{k}})$.
\end{theorem}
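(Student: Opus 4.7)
The plan is to construct a classical protocol $\Pi$ solving the task of Definition~\ref{def:commtask} with error $\leq 1/\log n$ and classical information complexity bounded by the stated quantity, then invoke $\mathrm{QIC}\leq\mathrm{IC}$ from~\cite{Touchette:2015} for the quantum bound. The natural candidate walks down the $k$-ary tree along the unique consistent path $\hat z$: at each level $r$, Alice sends $x(\hat z_{\leq r})$, Bob computes $\hat z_{r+1} = (x+y)(\hat z_{\leq r}) \bmod k$ and sends it back, and both parties advance. At the leaf, Alice sends $f(\hat z)$ and Bob outputs $f(\hat z)\oplus g(\hat z)$. Correctness on $\mu$ is immediate since $\hat z$ is consistent and $f\oplus g$ is constant on consistent leaves.

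The crucial information-theoretic observation is a defining property of $\mu$: on the consistent path $\hat z$ one has $x(\hat z_{\leq r}) = y(\hat z_{\leq r})$ at every level $r\neq J$, with disagreement only at $r=J$. Applying the chain rule $I(X;M\mid Y)=\sum_r I(X;M_r\mid Y, M_{<r})$ term by term, every level $r\neq J$ contributes zero, because $Y$ together with the already-determined prefix $\hat z_{\leq r}$ pins down Alice's message. The only nonzero contributions are the identification of $J$ (at most $\log n$ bits), the disagreement value $x(\hat z_{\leq J})$ (at most $\log k$ bits), and the output bit. A symmetric argument handles $I(Y;M\mid X)$, giving a preliminary bound $O(\log(nk))$.

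To sharpen this into the quantitative form $O(\log(k\log n)\cdot 2^{2\log n/k})$, I would exploit the $1/\log n$ error budget together with the product/branching structure of $p$ across siblings and disjoint subtrees above layer $J$. The idea is to group the $n$ levels into blocks and use short public-randomness fingerprints of size $O(\log(k\log n))$ to test, within each block, whether $x$ and $y$ agree on a random sample of tree values; blocks above $J$ pass with probability one at essentially zero information cost, since the fingerprint is then a deterministic function of $y$ alone (as $x\equiv y$ on the sample). Only the single block containing $J$, together with rare false positives, invokes the per-level subprotocol. Choosing the block size so that the expected number of flagged blocks is $O(2^{2\log n/k})$ and each flagged block contributes $O(\log(k\log n))$ bits of information, composed via the chain rule across blocks, yields the claimed bound, while a union bound over the $O(n/k)$ blocks controls the error by $1/\log n$.

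The main obstacle will be the careful information-theoretic accounting of the fingerprinting stage: even blocks that pass the test must be shown to contribute negligibly given $Y$, which requires delicate use of the conditional independence of sibling subtrees under $p$ and of the fact that, above $J$, the tree values are a pure function of $Y$. Making the combinatorial choice of block sizes and sample sizes precise so that the error bound $1/\log n$ and the information bound $2^{2\log n/k}$ hold simultaneously, and showing that the fingerprint-based sub-tests compose multiplicatively rather than additively in the information cost, constitutes the key quantitative calculation.
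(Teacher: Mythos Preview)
Your approach differs significantly from the paper's, and it has a genuine gap.

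The paper does not reconstruct any classical protocol at all. It simply invokes two black-box results: the classical IC upper bound $O(\log(k\log n)\,2^{2\log n/k})$ is taken verbatim from Rao and Sinha~\cite{RaoS:2015}, and the inequality $\mathrm{QIC}\leq\mathrm{IC}$ is obtained either from the exact simulation theorem of Lauri\`ere and Touchette~\cite{LauriereT:2016} (there exists a quantum protocol $\Pi'$ with $\mathrm{QIC}(\Pi',\mu)=\mathrm{IC}(\Pi,\mu)$), or alternatively via the amortized-communication characterization $\mathrm{QIC}=\mathrm{AQCC}\leq\mathrm{ACC}=\mathrm{IC}$. Your citation of \cite{Touchette:2015} alone for $\mathrm{QIC}\leq\mathrm{IC}$ is slightly off; the precise statement you need is in \cite{LauriereT:2016}.

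The substantive gap is your attempt to rederive the Rao--Sinha IC bound. Your ``natural candidate'' protocol is analyzed correctly: on the path $\hat z$, Alice's message at every level $r\neq J$ is indeed a function of $(Y,M_{<r})$, so $\mathrm{IC}=O(\log(nk))$. But this does \emph{not} establish the theorem. In the relevant regime $k=\Theta(\log n)$ the target bound is $O(\log(k\log n)\cdot 2^{2\log n/k})=O(\log\log n)$, whereas your preliminary bound is $O(\log n)$, exponentially too large for the separation in Theorem~\ref{thm:main}. So the sharpening step is essential, and here your proposal is only a sketch. The blocking-plus-fingerprinting scheme you describe does not obviously fit the task: the path $\hat z$ is not known in advance, so ``blocks of levels'' cannot be tested without first following the pointers level by level; and there is no visible mechanism producing the specific factor $2^{2\log n/k}$. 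Rao and Sinha's actual protocol and its information accounting are more delicate than what you outline, and the ``multiplicative composition'' you allude to would itself need justification.

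If you want a self-contained argument rather than the paper's citation-based one, you must actually reconstruct the Rao--Sinha protocol and its analysis; the heuristic in your last two paragraphs is not yet a proof.
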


\begin{theorem}
\label{theo:qcclowerbound}
Any protocol which accomplishes the communication task from Definition~\ref{def:commtask} with constant error $\varepsilon \in (0, \frac{1}{2})$ requires a quantum communication cost lower bounded by $\min\set{\Omega\br{k^{1/5}},\Omega (\log n)}$.
\end{theorem}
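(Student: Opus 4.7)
The plan is to prove the lower bound via a quantum round-elimination argument that repeatedly reduces the depth of the pointer-jumping tree, leveraging the structure of the fooling distribution $p$ and the hard distributions $\mu_0, \mu_1$.

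First, I would convert the computational task into a distinguishing task. Since the correct output is $b$ under $\mu_b$, any protocol computing the task with error $\varepsilon < 1/2$ outputs $0$ with probability at least $1 - \varepsilon$ on $\mu_0$-inputs and at most $\varepsilon$ on $\mu_1$-inputs. Letting $q$ denote the probability that the protocol outputs $0$ on $p$-inputs, either $q$ is far from $1-\varepsilon$ or far from $\varepsilon$, so the protocol distinguishes $p$ from $\mu_b$ with advantage $\Omega(1 - 2\varepsilon)$ for some $b \in \{0,1\}$. This distinguishing task is what the round elimination operates on.

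The core of the proof is a round-elimination lemma of the following form: if a quantum protocol $\Pi$ with communication cost $c$ distinguishes $\mu_b$ from $p$ on a depth-$n$ tree with advantage $\delta$, then there exists a quantum protocol $\Pi'$ with the same communication cost distinguishing the analogous distributions on a strictly shallower tree with advantage at least $\delta - \O(c / k^{1/5})$. To construct $\Pi'$, Alice and Bob embed a shallower instance into a depth-$n$ instance by locally sampling the root values $X(\epsilon)$ and $Y(\epsilon)$ in a distribution-preserving way, placing the shallower instance at the consistent child of the root and filling the other $k-1$ children with independent draws from the marginal of $p$, then running $\Pi$ on the embedded input. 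Iterating this reduction $\Theta(\log n)$ times reduces to a base case in which $p$ and $\mu_b$ coincide (or the task becomes trivial), so the accumulated advantage loss $\Theta(\log n) \cdot \O(c / k^{1/5})$ must exceed $\Omega(1 - 2\varepsilon)$, forcing $c \geq \min\{\Omega(k^{1/5}), \Omega(\log n)\}$.

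The main obstacle is executing the round-elimination step quantumly, since the classical techniques used in previous separations---arguments based on monochromatic rectangles from a short transcript, or on the Markov chain $X - M - Y$ under product inputs---are unavailable. To circumvent these, I would work in the coherent, purified picture of the protocol; apply a chain rule for quantum mutual information to distribute the total information cost $c$ across the $k$ children of the eliminated root; and use a quantum state-substitution argument in the spirit of the Jain--Radhakrishnan--Sen quantum cut-and-paste lemma to bound, in trace distance, the cost of replacing Alice's and Bob's true joint root state by the independent marginals used in the embedding. Extracting the $k^{1/5}$ quantitative dependence, rather than a weaker power of $k$, is the most delicate step, and likely requires amplifying the per-child information gain by averaging over many root values before converting it into trace distance.
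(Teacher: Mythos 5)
Your proposal takes a genuinely different route from the paper, and its central step does not work. The round elimination you propose operates on the \emph{depth of the tree}: Alice and Bob are supposed to embed a depth-$(n-1)$ instance ``at the consistent child of the root.'' But the consistent child is $x(\epsilon)+y(\epsilon) \bmod k$, which neither player knows individually --- this hiddenness of the consistent path (and of the hidden layer $J$) is the entire source of hardness, so no local, communication-free embedding can place the sub-instance there. Even setting this aside, the accounting fails: peeling one level at a time from a depth-$n$ tree needs $n$ iterations to reach a base case, not $\Theta(\log n)$, and there is no justification offered for a per-level loss of $O(c/k^{1/5})$; in the paper the exponent $1/5$ is not a per-step quantity at all but emerges only at the very end, from optimizing a free compression parameter $\delta$ against terms of the form $c^{3}/(k\delta^{2})$ and $\delta^{2}c^{2}$ summed over the $T\le c$ protocol rounds.

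What the paper actually does is round elimination over \emph{protocol messages}, not tree levels. Working under the fooling distribution $p$ conditioned on $x_{\le j}y_j j$ (under which the inputs are product), it shows that after each round $r$ the receiving player's registers are nearly independent of the other player's input restricted to the consistent set $S$. This requires two separate one-way reductions: one via the Jain--Radhakrishnan--Sen compression (Lemma~\ref{lem:coherentJRS}) followed by a quantum Shearer lemma over $S$ (Lemma~\ref{lem:shearer}), which supplies the $1/k$ factor; and one via protocols with abort (Lemma~\ref{lem:multirddirectsum}), which supplies a bound of the form $\ell_{A,r}2^{2\ell_{B,r}}/n$ and is the sole source of the $\Omega(\log n)$ term in the theorem. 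These per-round independence bounds are then fed into a distributional cut-and-paste lemma (Lemma~\ref{lem:rdelim}) showing that Bob's final state is nearly the same under $\mu_0$ and $\mu_1$, contradicting correctness. Your instincts about converting to a distinguishing task and invoking a JRS-style cut-and-paste are in the right spirit, but the load-bearing ingredients --- the one-way reductions, the direct-sum bound with the $2^{2\ell}/n$ dependence, and the Shearer argument over $S$ rather than over the $k$ children of the root --- are missing, and the depth-recursion that replaces them is not sound.
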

If we choose $k=\log n$, then the IC is $\O\br{\log k}$ while the QCC is $\Omega\br{k^{1/5} }$.

Our technical contributions go into proving the lower bound on QCC stated in Theorem~\ref{theo:qcclowerbound}. The upper bound of QIC in Theorem~\ref{theo:qicupperbound} follows by combining the two theorems below, proven in~\cite{RaoS:2015} and~\cite{LauriereT:2016}, respectively.

\begin{theorem}~\cite{RaoS:2015}
There exists a classical protocol that accomplishes the communication task from Definition~\ref{def:commtask} with constant error $\varepsilon > 0$ and  with IC upper bounded by $O(\log(k\log n)2^{\frac{2\log n}{k}})$.
\end{theorem}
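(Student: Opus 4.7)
The plan is to exhibit an explicit classical protocol for the task in Definition~\ref{def:commtask} and bound its internal information cost. The structural fact that drives the construction is that under $\mu$, at every internal node on the pointer path $\hat{z}$ other than $\hat{z}_{\leq J}$, one has $x(z) = y(z)$, so $\hat{z}_{r+1} = x(\hat{z}_{\leq r}) + y(\hat{z}_{\leq r}) = 2x(\hat{z}_{\leq r}) = 2y(\hat{z}_{\leq r}) \bmod k$ whenever $r \neq J$. Either party can therefore descend the tree locally at every level except the single hidden level $J$, so essentially all the communication reduces to locating $J$ plus one short exchange resolving the genuinely two-sided pointer.

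Concretely, Alice and Bob build tentative paths $z^A, z^B \in [k]^n$ by iterating $2x$ and $2y$, respectively; these coincide with $\hat{z}$ on the first $J$ levels and typically diverge shortly thereafter. Using shared public randomness they perform a divide-and-conquer search across the $n$ levels to locate the earliest disagreement, exchanging at each probe a short public-coin hash of the current prefix. Hash length is set to $\Theta(\log(k \log n))$ so that a union bound over probes keeps the error below $1/\log n$. Once $J$ is localized to within a constant, $O(\log k)$ additional bits suffice to agree on $\hat{z}_{J+1}$, after which each party descends the remaining levels alone (valid again because $x = y$ holds on the consistent portion of the path beyond $J$), and a final one-bit exchange yields $f(\hat{z}) \oplus g(\hat{z})$.

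The information accounting is the principal technical step and the main obstacle. Because the hashes are drawn from public randomness, each message is close to uniform conditional on either party's input, so per-probe leakage is at most the hash length $\Theta(\log(k \log n))$. A naive sum over all probes still overshoots the target, so one must further leverage $\mu$: probes lying strictly above or strictly below $J$ hash prefixes whose values are essentially determined from the other party's view together with the transcript, contributing almost zero conditional information, and only probes whose candidate range brackets $J$ contribute non-trivially. Balancing the search branching factor and the hash length so that the number of informative probes scales as $2^{2\log n/k}$, together with controlling the low-probability events where a hash reveals more than its nominal length due to unusually low conditional entropy, is where most of the work lies. I expect the last step --- tightly relating the probability of such bad events to the distribution $\mu$ --- to be the most delicate.
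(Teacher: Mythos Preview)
The paper does not prove this statement; it is quoted from \cite{RaoS:2015} as a black box, so there is no in-paper argument to compare against directly.

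Your high-level structural observation---that under $\mu$ the pointers satisfy $x=y$ along the true path at every level except $J$, so each player can descend locally except at one hidden level---is correct and is indeed the starting point for the Rao--Sinha construction. But the information accounting you sketch has a concrete gap, and it is not merely the ``delicate'' step you anticipate; it is where the approach breaks.

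The assertion that probes at levels strictly below $J$ ``hash prefixes whose values are essentially determined from the other party's view together with the transcript'' is not correct. After level $J$, Alice's tentative path $z^A$ takes the step $2x(\hat z_{\le J})$ rather than the consistent step $x(\hat z_{\le J})+y(\hat z_{\le J})$, so for every $r>J$ the node $z^A_{\le r}$ is \emph{not} consistent with $(x,y,j)$. Under $\mu$, the $x$-values at inconsistent nodes are uniform and independent of $(y,g)$; consequently $z^A_{\le r}$ is determined by $x$-data that Bob cannot reconstruct from his input and the earlier hashes, and each below-$J$ probe contributes up to its full hash length to $I(XF;M\mid YG)$. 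In any comparison-based search of depth $d$ there are on average $\Theta(d)$ such probes, and $d\cdot\Theta(\log(k\log n))$ cannot be brought down to $\Theta(\log(k\log n)\cdot 2^{2\log n/k})$ by tuning the branching factor: a $b$-ary search has $d=\Theta((\log n)/\log b)$, which for any $b\le\text{poly}(k,\log n)$ is still $\Omega(\log n/\log\log n)$, whereas $2^{2\log n/k}=O(1)$ in the regime $k=\Theta(\log n)$ of interest.

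More broadly, your plan writes an approximation to $J$ into the transcript, and the Rao--Sinha protocol is specifically organized to avoid this. Their level-by-level exchange is arranged so that each message is, conditioned on the other party's input, almost fully predictable, with the residual unpredictability summing to the stated bound without the transcript ever isolating $J$. A search-then-resolve strategy does not share this property; the $2^{2\log n/k}$ factor in the bound arises from their analysis of how much the per-level messages can leak in aggregate, not from a count of ``informative probes'' in a search tree. To match the bound you would need to redesign the protocol so that it computes the answer without first localizing $J$.
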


\begin{theorem}~\cite{LauriereT:2016}
For any classical protocol $\Pi$, there exists a quantum protocol $\Pi^\prime$ exactly simulating the input-output behavior of $\Pi$ while maintaining the same communication pattern as (the padded version of) $\Pi$, and also satisfying $QIC (\Pi^\prime, \mu) = IC (\Pi, \mu)$ for all $\mu$.
\end{theorem}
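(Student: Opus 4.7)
The plan is to construct $\Pi'$ as a ``coherent simulation'' of the classical protocol $\Pi$, in which every classical operation is replaced by its reversible unitary counterpart acting on computational-basis registers. Each player holds her input in a local quantum register, together with an ancillary register initialized to $\ket{0}$ for her private randomness padded to a fixed length (this is where the ``padded version'' qualification enters). The shared randomness of $\Pi$ is produced from a pre-shared maximally entangled state $\frac{1}{\sqrt{|R|}}\sum_{r}\ket{r}_A\ket{r}_B$ in the computational basis. In each round, the sender applies a reversible classical circuit that computes her next message from her input, private randomness and all previously generated registers, writes the result in the computational basis into a fresh message register, and transmits that register. Because every message register is prepared in the computational basis on a classical value with exactly the classical distribution of $\Pi$, the output distribution of $\Pi'$ equals that of $\Pi$, and the message lengths and directions match those of the padded version of $\Pi$.

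For the identity $\mathrm{QIC}(\Pi',\mu) = \mathrm{IC}(\Pi,\mu)$, the key observation is that after each round the joint state on inputs, purifying references of the inputs, private randomness, messages, and history ancillas is a classical state in the computational basis, with joint distribution inherited from $\Pi$ on $\mu$. Touchette's definition of $\mathrm{QIC}(\Pi',\mu)$ is a sum over rounds of conditional quantum mutual informations between the current message register and the sender's purifying reference, conditioned on the receiver's registers. On a classical state in a fixed basis, each such conditional quantum mutual information collapses to the corresponding classical conditional mutual information of the transcript. Summing round by round and applying the classical chain rule yields $\condmutinf{X}{MR}{Y}+\condmutinf{Y}{MR}{X}$, which is precisely $\mathrm{IC}(\Pi,\mu)$.

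The main obstacle I anticipate is handling the history registers that reversibility forces the players to retain, together with the purifying references of the inputs that appear in the definition of QIC. A naive concern is that these additional systems might inflate the quantum conditional mutual informations beyond their classical analogues. The essential technical step is therefore to verify that, because the snapshot state after each round is diagonal in the computational basis on all registers of both parties, the purifying references decouple from the current message register conditioned on the receiver's classical side information, so that they can be absorbed into the classical description without changing any mutual information in the sum. Once this round-by-round ``classicality'' is established, the telescoping chain-rule argument outlined above delivers the desired equality for every input distribution $\mu$.
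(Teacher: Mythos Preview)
The paper does not give its own proof of this statement; it is quoted from~\cite{LauriereT:2016}, with the remark that ``the bulk of the effort for showing the theorem about the quantum simulation of classical protocols goes into arguing how to quantumly simulate private randomness without affecting the information cost.'' Your proposal has the right skeleton---a reversible computational-basis simulation with public randomness coming from a maximally entangled state---but leaves precisely this hard part unaddressed.

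Two concrete issues. First, your handling of private randomness is not specified: you only say the register is ``initialized to~$\ket{0}$,'' which by itself gives no randomness; if you then place it in uniform superposition, your key claim that ``the joint state on inputs, purifying references of the inputs, private randomness, messages, and history ancillas is a classical state in the computational basis'' is false. That is the global state of the QIC model, which is pure by construction, and a pure state diagonal in the computational basis is a single basis vector. At best certain \emph{marginals} are diagonal, and even identifying which ones requires an argument once coherent private coins are present. Second, and this is the crux, in the term $\condmutinf{C_i}{R_X R_Y}{A_i}$ (even $i$) the conditioning register $A_i$ now contains Alice's private-coin register in superposition, whereas the classical information cost conditions only on $X$ and the transcript, not on Alice's private coins. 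Showing that these two conditionings yield the same value round by round is exactly the nontrivial calculation that~\cite{LauriereT:2016} carries out; your final paragraph names this obstacle but asserts the resolution (``the purifying references decouple\ldots'') rather than proving it. Without that step the chain-rule telescoping you describe is unjustified.
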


In~\cite{LauriereT:2016}, the bulk of the effort for showing the theorem about the
quantum simulation of classical protocols
goes into arguing how to quantumly simulate private randomness without
affecting the information cost. Note that we could alternatively use the
fact that IC is equal to amortized communication complexity to argue
that the IC is also an upper bound on the
QIC for any communication task in the distributional setting: $\text{QIC} (f, \mu, \epsilon) = \text{AQCC} (f, \mu, \epsilon) \leq \text{ACC} (f, \mu, \epsilon) = \text{IC} (f, \mu, \epsilon)$.


\section{High-Level Proof Sketch for the Communication Lower Bound}\label{sec:highlevel}

In this section, we give a high-level proof sketch of Theorem \ref{theo:qcclowerbound}.
We also formally state the main technical lemmata that go into the proof.
Formal proofs are given in Section~\ref{sec:proofs}.
Our strategy for proving the lower bound is divided into two main steps.

\begin{itemize}
\item We first consider the fooling distribution $p (x,y,f,g,j)$ and show that in any quantum protocol $\Pi$ with small communication, the state of the registers with Bob is almost independent of $X_S F_S$, conditioned on $x_{\leq j} y_{\leq j} j$, and similarly the state of the registers with Alice is almost independent of $Y_S G_S$, conditioned on $x_{\leq j} y_{\leq j} j$.
For this, we argue by performing two different reductions to one-round protocols.
\item Using the observation that, conditioned on $x_{\leq j}y_{\leq j}j$, $p (x, y, f, g, j)$ and $\mu(x,y,f,g,j)$ have the same marginals on $(x, f)$,  and also the same marginals on $(y, g)$, we show that the `approximate independence' concluded above for $p (x, y, f, g, j)$ implies that the final state on Alice's or Bob's registers is approximately the same for inputs according to either of $\mu_0\br{x,y,f,b}\defeq p (x,y,f,g|\E_0)$, $\mu_1\br{x,y,f,g}\defeq p (x,y,f,g|\E_1)$ or $p (x,y,f,g)$.
For this, we argue by performing a round-by-round elimination.
\end{itemize}

\bigskip
\begin{figure}[ht]
\centering
\begin{tikzpicture}[xscale=1.2,yscale=1.4]
\draw (-3.4,-0.6) rectangle (5,4.1);
\node at (-1.0,3.8) {Quantum Fooling Distribution $p$};
\draw (-2.9,2.8) rectangle (0.1,3.5);
\node at (-1.4,3.3) {Low communication};
\node at (-1.4,3.0) {multiround protocol};
\draw [->] (-1.4, 2.8) -- (-1.4,2.4);
\draw (-2.9,1.4) rectangle (0.1,2.4);
\node at (-1.4,2.2) {Convert to one-way};
\node at (-1.4,1.9) {protocol with abort};
\node at (-1.4,1.6) {+ Lemma \ref{lem:onerddirectsum} };
\draw [->] (-1.4, 1.4) -- (-1.4,1.0);
\draw (-3.1,-0.5) rectangle (0.3,1.0);
\node at (-1.4,0.8) {Lemma \ref{lem:multirddirectsum} $\implies$};
\node at (-1.4,0.5) {$ \condmutinf{X_J }{BY_{>J}G}{J X_{<J}},$};
\node at (-1.4,0.1) {$ \condmutinf{Y_J }{AX_{>J}F}{J Y_{<J}}$};
\node at (-1.4,-0.2) {are small};

\draw [->] (0.3,0.25) -- (1.7,0.25) -- (1.7,2.8);

\draw [->] (0.1,3.1) -- (1.4,3.1);
\draw (1.4,2.8) rectangle (4.8,3.5);
\node at (3.1,3.3) {Convert to one-way};
\node at (3.1,3.0) {protocol: Lemma \ref{lem:coherentJRS}};
\draw [->] (3.5,2.8) -- (3.5,2.4);
\draw (2.2,1.6) rectangle (4.8,2.4);
\node at (3.5,2.15) {Quantum Shearer};
\node at (3.5,1.85) {Lemma \ref{lem:shearer}};
\draw [->] (3.5,1.6) -- (3.5,1.2);
\draw (1.9,-0.5) rectangle (4.8, 1.2);
\node at (3.4,1) {Low correlation of};
\node at (3.4,0.7) {$X_SF_S$ with $BY_{>J}G$};
\node at (3.4,0.4) {$Y_SG_S$ with $AX_{>J}F$};
\node at (3.4,0.1) {conditioned on};
\node at (3.4,-0.3) {$X_{\leq J}Y_JJ$};

\draw [->, ultra thick] (4.8,0.2) -- (7,0.2);
\node at (6,0.35) {Lemma \ref{lem:rdelim}};
\draw (7,-0.4) rectangle (9.2,1);
\node at (8.1,0.7) {Distributions};
\node at (8.1,0.3) {$\mu_0,\mu_1$ give};
\node at (8.1,-0.1) {similar outputs};

\end{tikzpicture}
\caption{Structure of our proof. We have ignored purification of input registers for simplicity of presentation.}
 \label{fig:proofstrategy}
\end{figure}
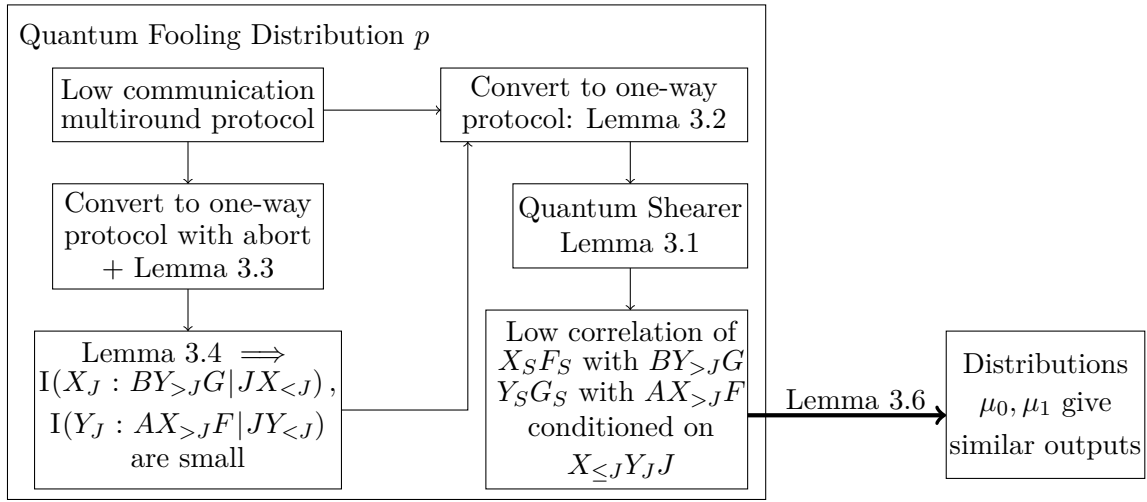
\bigskip

A sketch of our proof strategy appears in Figure \ref{fig:proofstrategy}. In more details, let us first consider the simpler case of a single-message protocol from Alice to Bob, under distribution $p$, with some fixed value of $y_{\leq j}j$. As discussed above, we show that the output under $p$ and the output under $\mu_0$ are close, given that the message is short. A similar argument holds for $\mu_1$, leading to a contradiction. Denote by $M_1$ the register holding the first message (and possibly some pre-shared entanglement). Notice that for a single message, since the marginal on $(x, f)$ is the same in $p$ and $\mu_0$, the state on registers $XFM_1$ is also the same under these two distributions. But the correlations with Bob's input $(y, g)$ are different: since $XF$ is independent of $YG$ under $p$ (conditioned on the fixed value of $y_{\leq j}j$),  $M_1$ is also independent of $YG$; whereas under $\mu_0$ (and similarly $\mu_1$), $X_SF_S = Y_SG_S$ which means that $M_1$ is highly correlated with $Y_SG_S$ (more precisely $Y_SG_S M_1 = X_SF_SM_1$). Notice that on restricting to the complement of $S$, $Y_{>J} G$ is independent of $ X_S F_S M_1$ and distributed in the same way under both $p$ and $\mu_0$.
Now, the distance between the final output under $p$ and under $\mu_0$ can be upper bounded, using monotonicity, by the distance between $Y_S G_S \otimes M_1$ (under $p$) and $Y_S G_S M_1$ (under $\mu_0$). By the above argument, this is same as the distance between $X_S F_S \otimes M_1$ (under $p$) and $X_S F_S M_1$ under $\mu_0$ (which is distributed as $X_SF_S M_1$ under $p$).
 This is in turn upper bounded by the mutual information between $X_SF_S$ and $M_1$ under the distribution $p$. To complete the argument, we use the following lemma, which can be thought of as a quantum version of Shearer's Lemma~\cite{CHUNGGFS:1986, Radhakrishnan:2003} for mutual information. 

\begin{lemma}\label{lem:shearer}
	Consider registers $U_1,U_2,\ldots U_m, V$ and define $U\defeq U_1U_2,\ldots U_m$. Consider a quantum state $\Psi_{UV}$ such that $\Psi_{U_1,U_2,\ldots U_m} = \Psi_{U_1}\otimes \Psi_{U_2}\otimes\ldots \otimes \Psi_{U_m}$. Let $S=\set{i_1,\ldots,i_{|S|}}\subseteq [m]$ be a random set independent of $\Psi_{UV}$ satisfying $\prob{i\in S}\leq\frac{1}{k}$ for all $i$ and $U_S\defeq U_{i_1}U_{i_2}\ldots U_{i_{|S|}}$. Then it holds that
	\[\condmutinf{U_S}{V}{ S}_{\Psi}\leq\frac{\mutinf{U}{V}_{\Psi}}{k},\]
\end{lemma}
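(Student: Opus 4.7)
The plan is to adapt the classical Shearer-type proof for mutual information to the quantum setting, using the product structure of $\Psi_U$ together with the chain rule for quantum (conditional) mutual information. For any fixed outcome $S=\set{i_1<i_2<\cdots<i_{|S|}}$ of the random set, the chain rule gives
\[
\mutinf{U_S}{V}_\Psi \;=\; \sum_{i\in S} \condmutinf{U_i}{V}{U_{<i,S}}_\Psi,
\]
where $U_{<i,S}$ denotes the registers $U_j$ with $j<i$ and $j\in S$. The heart of the proof is then the single-register inequality
\[
\condmutinf{U_i}{V}{U_{<i,S}}_\Psi \;\le\; \condmutinf{U_i}{V}{U_{<i}}_\Psi,
\]
which says that the correlation between $U_i$ and $V$ can only grow when one additionally conditions on the independent registers $U_{<i,\bar S}$.

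To establish this inequality I would use that $\Psi_U=\bigotimes_j \Psi_{U_j}$, so the marginal on $U_i\,U_{<i,S}\,U_{<i,\bar S}$ is a tensor product; a direct entropy computation then yields $\condmutinf{U_i}{U_{<i,\bar S}}{U_{<i,S}}_\Psi=0$. Applying the chain rule to $\condmutinf{U_i}{V\,U_{<i,\bar S}}{U_{<i,S}}_\Psi$ in the two possible orders and cancelling this vanishing term produces the identity
\[
\condmutinf{U_i}{V}{U_{<i}}_\Psi \;=\; \condmutinf{U_i}{V}{U_{<i,S}}_\Psi + \condmutinf{U_i}{U_{<i,\bar S}}{V\,U_{<i,S}}_\Psi,
\]
whose final term is non-negative by strong subadditivity, delivering the desired inequality.

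Summing over $i\in S$ and taking expectation over the random set then gives
\[
\expec{S}{\mutinf{U_S}{V}_\Psi} \;\le\; \sum_{i=1}^m \prob{i\in S}\cdot \condmutinf{U_i}{V}{U_{<i}}_\Psi \;\le\; \frac{1}{k}\sum_{i=1}^m \condmutinf{U_i}{V}{U_{<i}}_\Psi \;=\; \frac{\mutinf{U}{V}_\Psi}{k},
\]
where the last equality is again the chain rule. Since $S$ is independent of $\Psi_{UV}$, the left-hand side equals $\condmutinf{U_S}{V}{S}_\Psi$, which is the claim. The only genuinely quantum ingredient is verifying $\condmutinf{U_i}{U_{<i,\bar S}}{U_{<i,S}}_\Psi=0$, which is immediate from the product form of $\Psi_U$; every other step is a chain-rule manipulation or an invocation of strong subadditivity, so I do not anticipate any serious technical obstacle beyond bookkeeping the registers carefully.
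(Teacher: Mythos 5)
Your proposal is correct and follows essentially the same route as the paper: the paper's proof also expands $\condmutinf{U_S}{V}{S}$ via the chain rule, compares $\condmutinf{U_i}{V}{U_{[<i]\cap s}}$ with $\condmutinf{U_i}{V}{U_{<i}}$ by applying the chain rule to $\condmutinf{U_i}{VU_{[<i]\setminus s}}{U_{[<i]\cap s}}$ in both orders, discards a non-negative term by strong subadditivity, uses the product structure of $\Psi_U$ to kill $\condmutinf{U_i}{U_{[<i]\setminus s}}{U_{[<i]\cap s}}$, and finishes by averaging over $S$ and re-assembling $\mutinf{U}{V}$ with the chain rule. No gaps.
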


Now, to extend the above argument to multi-round protocols, we want to ensure that even if Alice knows some information about $S$, the argument still goes through, as long as her information about $S$ is small. We do so by specially crafting an input to the protocol and then reducing it to an essentially equivalent one-round protocol. For this, we use an asymmetric round-compression argument from~\cite{Jain:2005:PEM:1068502.1068658} to generate the state in each round of the protocol, up to a small error, by a one-way protocol with communication cost close to that in the original protocol. We also require a similar argument on Bob's side.
Formally, we prove the following result, with some extra care needed since we wish, for technical reasons, to maintain correlations with the reference registers.

\begin{lemma}\label{lem:coherentJRS}

Consider a quantum state $\ket{\Psi}=\sum_{xy}\sqrt{\mu\br{x, y}}\ket{xxyy}_{R_XXR_YY}\otimes\ket{\psi^{xy}}_{AB}$ satisfying $\mutinf{Y}{R_XXA}_{\Psi}\leq\epsilon$, where $\mu=\mu_X\otimes\mu_Y$ is a product distribution, and the register $X$ and register $Y$ are held by Alice and Bob, respectively. Given $\delta> 0$, there exists a one-way quantum protocol where Alice sends $\O\br{\br{\mutinf{X}{YR_YB}_{\Psi}+1}/\delta^2}$ qubits to Bob. Let $\tilde{\Psi}$ be the global state in the end of this protocol. It holds that \footnote{$h(\cdot,\cdot)$ denotes the {\em Hellinger distance} which will be defined in section~\ref{sec:prelim}.}
	\[h^2\br{\tilde{\Psi}_{XABYR_Y},\Psi_{XABYR_Y}}\leq 4 \delta^2 + 6\epsilon,\]

\end{lemma}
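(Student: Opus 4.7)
The plan is a two-step reduction: first \emph{decouple} Alice's registers from $Y$ using the hypothesis and the product structure of $\mu$, then apply a JRS-style \emph{coherent one-way message compression} (quantum substate theorem from~\cite{Jain:2005:PEM:1068502.1068658}) whose communication cost is $\O((I+1)/\delta^2)$ qubits.

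\textbf{Step 1 (Decoupling Alice from $Y$).} The quantum Pinsker inequality applied to $\mutinf{Y}{R_XXA}_\Psi\le\epsilon$ gives $F(\Psi_{YR_XXA},\Psi_Y\otimes\Psi_{R_XXA})\ge 2^{-\epsilon/2}$, so $h^2(\Psi_{YR_XXA},\Psi_Y\otimes\Psi_{R_XXA})=O(\epsilon)$. Since $\mu$ is product, a natural purification of $\Psi_Y\otimes\Psi_{R_XXA}$ on $R_YB$ is
\[\ket{\Phi}=\sum_y\sqrt{\mu_Y(y)}\,\ket{yy}_{YR_Y}\otimes\ket{\sigma}_{R_XXAB},\]
where $\ket{\sigma}$ is a purification of $\Psi_{R_XXA}$ that does not depend on $y$. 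Both $\ket{\Psi}$ and $\ket{\Phi}$ leave $R_Y$ diagonal in the standard basis, so applying Uhlmann fiber-by-fiber in $y$ produces a controlled-on-$Y$ isometry $V=\sum_y\ketbra{y}_Y\otimes V_y^B$ on Bob's side (touching only $Y$ and $B$, leaving $R_Y$ untouched) such that $(I\otimes V)\ket{\Phi}$ agrees with $\ket{\Psi}$ up to Hellinger-squared distance $O(\epsilon)$. Crucially, Alice's half of $\Phi$ depends only on $X$, so she can prepare it locally from $X$ and pre-shared entanglement, with no knowledge of $Y$.

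\textbf{Step 2 (One-way compression).} It remains to realize $\Phi_{XABYR_Y}$ by a one-way protocol up to Hellinger-squared distance $2\delta^2$. This is exactly the one-way coherent message compression of~\cite{Jain:2005:PEM:1068502.1068658}, proved via the quantum substate theorem (equivalently a convex-split / quantum rejection-sampling argument): for a bipartite pure state with marginal mutual information $I$ between the sender's and receiver's sides, there is a one-way protocol with pre-shared entanglement that simulates the state to within trace distance $\delta$ (hence squared-Hellinger $2\delta^2$) using $\O((I+1)/\delta^2)$ qubits. Applying this to $\Phi$, with Alice holding $XA$ and Bob holding $YR_Y$, uses $\O((\mutinf{X}{YR_YB}_\Phi+1)/\delta^2)$ qubits; Alicki--Fannes continuity of mutual information, applied via the closeness from Step~1, gives $\mutinf{X}{YR_YB}_\Phi=\mutinf{X}{YR_YB}_\Psi\pm O(\sqrt{\epsilon})$, which absorbs into the big-$\O$ to match the stated bound.

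\textbf{Step 3 and main obstacle.} The final protocol is: Alice prepares her half of $\Phi$ locally, together with Bob they run the JRS compression to create $\Phi_{XABYR_Y}$, and Bob finally applies $V$ to map $\Phi$ into $\Psi$. Combining the errors via monotonicity of $h^2$ under tracing out $R_X$ and any auxiliary purifying registers, together with the squared-Hellinger triangle inequality $h^2(a,c)\le 2h^2(a,b)+2h^2(b,c)$, yields $h^2(\tilde\Psi_{XABYR_Y},\Psi_{XABYR_Y})\le 4\delta^2+6\epsilon$; the constants $4$ and $6$ track the Pinsker $\ln 2$ factor and one doubled-triangle-inequality step. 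The delicate technical point is Step~1: showing that the optimal Uhlmann isometry can be taken to preserve the canonical diagonal form of $R_Y$ (so that Bob can realize it with a $Y$-controlled unitary on $B$ alone), and that $\mutinf{X}{YR_YB}$ is stable under the $O(\sqrt{\epsilon})$ perturbation between $\Psi$ and $\Phi$. Both are intuitively clear but require careful bookkeeping of Pinsker, Uhlmann, and Alicki--Fannes inequalities to obtain the constants stated in the lemma.
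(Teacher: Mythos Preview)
Your approach is essentially the paper's: Uhlmann-decouple Bob's side so that $YR_Y$ becomes product with the rest, run the one-way JRS compression of Fact~\ref{fac:onewaycompression} on the decoupled state, have Bob locally create $\ket{yy}_{YR_Y}$, and then undo the decoupling isometry. The triangle-inequality bookkeeping you describe is also what the paper does, arriving at the same $4\delta^2+6\epsilon$.

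There is, however, one step that would fail as you wrote it. You bound the communication by $\O\bigl((\mutinf{X}{YR_YB}_{\Phi}+1)/\delta^2\bigr)$ and then invoke Alicki--Fannes continuity to replace $\mutinf{X}{YR_YB}_{\Phi}$ by $\mutinf{X}{YR_YB}_{\Psi}\pm O(\sqrt{\epsilon})$. Alicki--Fannes carries a $\log(\dim)$ factor, and here the relevant registers are exponentially large, so the correction term does \emph{not} ``absorb into the big-$\O$''. The paper sidesteps this cleanly: instead of compressing the idealized product state $\Phi$, it compresses the \emph{actual} state $(U_{YB}\Psi U_{YB}^\dagger)_{XAB'}$, and then bounds the relevant mutual information exactly by
\[
\mutinf{X}{B'}_{U_{YB}\Psi U_{YB}^\dagger}\;\leq\;\mutinf{X}{YR_YB'}_{U_{YB}\Psi U_{YB}^\dagger}\;=\;\mutinf{X}{YR_YB}_{\Psi},
\]
using only strong subadditivity and isometry invariance (Facts~\ref{fact:strongsubadditivity} and~\ref{fact:subsystem monotone}). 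Equivalently, in your framing, since $\ket{\sigma}_{R_XXAB}$ purifies $\Psi_{R_XXA}$ and $\ket{\Psi}$ is globally pure, one has $\mutinf{X}{B}_{\sigma}=\mutinf{X}{YR_YB}_{\Psi}$ \emph{exactly}; no continuity is needed. Once you replace the Alicki--Fannes step by either of these exact arguments, your proof and the paper's coincide. (A small side remark: Fact~\ref{fac:onewaycompression} is already stated with $h^2\leq\delta^2$, so the detour through trace distance in your Step~2 is unnecessary.)
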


To prove that the information about $S$ is small, first notice that for fixed $x_{\leq j} j$, $S$ is determined by $y_j$, and vice-versa. Hence, we wish to bound the amount of  information about $Y_j$ that Alice has in any round, conditioned on some fixed values of $x_{\leq j} j$. In all previous works~\cite{GanorKR:2015b,GanorKR:2015,RaoS:2015} on exponential separation between information and communication, the proof relied on a statement of the form ``the information Alice has about the $j$-th part of Bob's input is upper bounded by $\frac{2^{O (\ell)}}{n}$''. This holds even when conditioning on some $j$ playing a role similar to the hidden index $j$ here, and also on some part of Alice's input corresponding to $j$.  $\ell$ is the total number of bits of  communication  in the protocol, and $n$ is the number of parts of Alice's input (usually related to the depth of some underlying communication tree), of size exponentially larger than the desired communication bound. This is usually proved via involved information-theoretic arguments that make use of the rectangular nature of classical protocols,
hence such proof cannot be generalized to the quantum setting at all. We give a very simple two-step argument to achieve similar bounds. First, we once again use a reduction to a one-way protocol. Second, for such one-way protocols, we can use a simple direct sum argument and avoid the exponential blow-up. Formally, we have the following lemma for one-way protocols, variants of which have appeared in~\cite{Klauck:2001:IQC:380752.380786, Sen:2001}.

\begin{lemma}\label{lem:onerddirectsum}

Let $\Pi$ be a quantum one-way protocol with correlated inputs $XY$, in which Alice sends $\ell$ qubits to Bob. Let $X = X_1 \cdots X_n$, and for a uniformly random index $J  \in_R [n]$, decompose $Y = Y_1^J Y_2^J$ such that $Y_2^J$ is a function of $X_{<J}$ and $\br{X_{\geq J} Y_1^J | J X_{<J}=jx_{<j}} = \br{X_{\geq J} \otimes Y_1^J | J X_{<J}=jx_{<j}}$ for any $jx_{<j}$, that is, conditioned on $J$ and $X_{<J}$, $X_{\geq J}$  and $Y_1^J$ are independent. Let $\rho_{XR_XYR_YABC}$ be the global state in the end of the protocol, where $A$ is the register with Alice; $C$ is the register of the message Alice sends to Bob; $B$ is the register with Bob before receiving the message and $R_XR_Y$ are the canonical purification of the input $XY$. Then it holds that
\begin{align}
\condmutinf{X_J}{C B Y_1^J R_{Y_1^J}}{ J X_{<J}}_{\rho} \leq \frac{2\ell}{n}.
\end{align}
\end{lemma}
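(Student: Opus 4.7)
The plan is to reduce the claim to the message-size bound $\condmutinf{X}{C}{BYR_Y} \leq 2\ell$ via a chain-rule decomposition over $j$. Since $J$ is uniform in $[n]$,
\[
\condmutinf{X_J}{CBY_1^JR_{Y_1^J}}{JX_{<J}}_\rho = \frac{1}{n}\sum_{j=1}^n \condmutinf{X_j}{CBY_1^jR_{Y_1^j}}{X_{<j}}_\rho,
\]
and the chain rule splits each summand into
\[
\condmutinf{X_j}{BY_1^jR_{Y_1^j}}{X_{<j}}_\rho + \condmutinf{X_j}{C}{X_{<j}BY_1^jR_{Y_1^j}}_\rho.
\]

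First, I would show that the first summand vanishes for every $j$. Without loss of generality, Alice's local isometry is controlled on the classical input register $X$, of the form $V=\sum_x |x\rangle\langle x|_X\otimes V_x^{A\to A'C}$. Tracing out her outgoing registers $A'C$ then recovers Bob's initial reduced state $\phi_B$, so $\rho_{XYR_XR_YB} = \sigma_{XYR_XR_Y}\otimes\phi_B$, where $\sigma$ denotes the initial purified input state. Combined with the assumed conditional independence $(X_{\geq j}Y_1^j | X_{<j}) = (X_{\geq j}\otimes Y_1^j | X_{<j})$, and with the canonical purification convention that $R_{Y_1^j}$ is a classical copy of $Y_1^j$, one checks that conditioned on $X_{<j}$ the register $X_j$ is in a product state with $BY_1^jR_{Y_1^j}$, giving conditional mutual information zero.

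For the second summand, I would enlarge the conditioning from $Y_1^jR_{Y_1^j}$ to the full $YR_Y$ without altering the quantity. This is justified by the hypothesis that $Y_2^j$ is a deterministic function of $X_{<j}$, so that in the canonical purification $R_{Y_2^j}$ is in a fixed pure state depending only on $X_{<j}$; adjoining such registers to the conditioning preserves mutual information. Hence
\[
\condmutinf{X_j}{C}{X_{<j}BY_1^jR_{Y_1^j}}_\rho = \condmutinf{X_j}{C}{X_{<j}BYR_Y}_\rho.
\]
Summing over $j$ and applying the chain rule in reverse telescopes the sum into $\condmutinf{X}{C}{BYR_Y}_\rho$. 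Since the message register $C$ carries at most $\ell$ qubits, the dimension bound gives $\condmutinf{X}{C}{BYR_Y} \leq \mutinf{XBYR_Y}{C} \leq 2\log|C| = 2\ell$, yielding the claimed bound of $2\ell/n$ after dividing by $n$.

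The main subtlety will be the careful bookkeeping of the reference registers, in particular the decomposition $R_Y = R_{Y_1^j}R_{Y_2^j}$ and the verification that the deterministic dependence of $Y_2^j$ on $X_{<j}$ transfers to its purification $R_{Y_2^j}$. With the canonical classical-copy purification this is immediate; for a more general purification one would have to take $R_{Y_1^j}$ to be a coherent purification of $Y_1^j$ alone and re-establish the same invariances. Beyond this accounting, all ingredients—the chain rule, the conditional independence inherited from the initial product structure of the shared entanglement, and the dimension bound $\mutinf{\cdot}{C}\leq 2\log|C|$—are standard.
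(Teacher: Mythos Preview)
Your proposal is correct and follows essentially the same approach as the paper. The only cosmetic difference is where the $Y_1^J R_{Y_1^J}$ registers end up: the paper uses the tensor-product structure (conditioned on $J, X_{<J}$) to drop $Y_1^J R_{Y_1^J}$ from the second argument and then applies the chain rule over $j$ to reach $\frac{1}{n}\condmutinf{X}{C}{T_B}$, whereas you split via the chain rule first, kill the $\condmutinf{X_j}{BY_1^jR_{Y_1^j}}{X_{<j}}$ term with the same independence, and then enlarge the conditioning to the full $YR_Y$ to telescope to $\frac{1}{n}\condmutinf{X}{C}{BYR_Y}$; both quantities are bounded by $2\ell/n$ via the same dimension bound.
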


Second, to extend the lemma to multiple-round protocols, we still have ``enough room'' to perform a one-way simulation of any interactive protocol, with at most an exponential blow-up in the communication and still achieve similar bounds as in the classical setting. Formally, we prove the following result
 by appealing to both compression arguments and to the notion of protocols with abort~\cite{Kerenidis2012,LaplanteLR:2012},
with some extra care needed since we again wish, for technical reasons, to maintain correlations with the reference registers.

\begin{lemma}\label{lem:multirddirectsum}

Let $\Pi$ be a quantum protocol with correlated input $XY$. Let $X = X_1 \cdots X_n$, and for a uniformly random index $J  \in_R [n]$, decompose $Y = Y_1^J Y_2^J$ such that $Y_2^J$ is a function of $X_{<J}$ and $\br{X_{\geq J} Y_1^J | J X_{<J}} = \br{X_{\geq J} \otimes Y_1^J | J X_{<J}}$, that is, conditional on $JX_{<J}$, $X_{\geq J}$  and $Y_1^J$ are independent. Then, for any $r$, it holds that
\begin{align}
 \condmutinf{X_J }{C_rB_r Y_1^J R_{Y_1^J} }{ J X_{<J}} \leq   \frac{\ell_{A,r}2^{{2\ell_{B,r}+2}}}{n},
\end{align}
where $\ell_{A,r}$ and $\ell_{B,r}$ are the number of qubits Alice and Bob send in the first $r$ rounds, respectively.

\end{lemma}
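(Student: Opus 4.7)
The plan is to reduce the multi-round case to the one-round setting of Lemma~\ref{lem:onerddirectsum} by simulating the first $r$ rounds of $\Pi$ with a one-way Alice-to-Bob protocol that occasionally aborts, in the spirit of~\cite{Kerenidis2012,LaplanteLR:2012}. The exponential factor $2^{2\ell_{B,r}+2}$ will enter as the reciprocal of the success probability of this simulation: Alice ``guesses'' all of Bob's replies locally, and each qubit Bob would have transmitted contributes a factor of $4$ to the abort probability, matching the $2q$ classical bits needed to specify a $q$-qubit message via teleportation.

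Concretely, I would construct a one-way quantum protocol $\tilde\Pi$ from Alice to Bob with the following properties: (i)~Alice sends at most $\ell_{A,r}$ qubits to Bob; (ii)~Bob produces a classical flag $T \in \{\mathrm{ok}, \mathrm{abort}\}$ with $\prob{T = \mathrm{ok}} \geq 2^{-2\ell_{B,r}}$; (iii)~conditioned on $T = \mathrm{ok}$, the joint state on $X R_X Y R_Y A_r B_r C_r$ exactly reproduces the state at the end of round $r$ of $\Pi$. The construction has Alice play both sides of $\Pi$ internally, using pre-shared entanglement in place of Bob's true message registers, then forward her honest round-$r$ outgoing qubits plus a verification register; Bob measures to test whether Alice's simulation of his side matched his actual input. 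Because Alice's simulation of Bob uses only shared entanglement and not $Y$, the event $T = \mathrm{ok}$ can be arranged to be independent of the input distribution $(J, X, Y)$, and the required conditional independence of $X_{\geq J}$ and $Y_1^J$ given $J X_{<J}$ is preserved, so that $\tilde\Pi$ satisfies the hypothesis of Lemma~\ref{lem:onerddirectsum}.

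Applying Lemma~\ref{lem:onerddirectsum} to $\tilde\Pi$ (with $T$ viewed as part of Bob's final register) gives
\[\condmutinf{X_J}{\tilde{C}\tilde{B} Y_1^J R_{Y_1^J}}{J X_{<J}}_{\tilde\Pi} \leq \frac{2\ell_{A,r}}{n}.\]
Pulling the classical flag $T$ out via the chain rule and using that conditioning on $T = \mathrm{ok}$ reproduces $\Pi$ on the registers of interest yields
\[\prob{T = \mathrm{ok}} \cdot \condmutinf{X_J}{C_r B_r Y_1^J R_{Y_1^J}}{J X_{<J}}_{\Pi} \leq \condmutinf{X_J}{\tilde{C}\tilde{B} Y_1^J R_{Y_1^J}}{J X_{<J}}_{\tilde\Pi} \leq \frac{2\ell_{A,r}}{n},\]
and dividing by $\prob{T = \mathrm{ok}} \geq 2^{-2\ell_{B,r}}$ delivers the claimed bound.

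The main obstacle will be the construction of $\tilde\Pi$: the simulation must be coherent enough to preserve the reference purifications $R_X, R_Y$ and the conditional-independence structure of the input, since sloppy bookkeeping could introduce correlations between the event $T = \mathrm{ok}$ and the inputs, thereby breaking both the hypothesis of Lemma~\ref{lem:onerddirectsum} and the identification of $\Pi$ with $\tilde\Pi$ conditioned on success. Ensuring that Bob's verification acts only on registers independent of his input, and that Alice's internal simulation of Bob uses only her own ancillas and shared EPR pairs, is the delicate piece; the information-theoretic inequalities above are then routine.
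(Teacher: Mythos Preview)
Your proposal is correct and follows essentially the same route as the paper: convert $\Pi$ to the Cleve--Buhrman model via teleportation, build a one-way Alice-to-Bob protocol in which Alice replaces Bob's classical replies by measurements on shared EPR halves while Bob sets an abort flag whenever his honest reply disagrees, observe that the non-abort probability is exactly $2^{-2\ell_{B,r}}$ independent of the inputs, and then apply Lemma~\ref{lem:onerddirectsum} together with the chain rule to extract the flag. The only cosmetic discrepancy is that in the paper Alice's one-way message consists of $2\ell_{A,r}$ classical bits (the doubled teleportation cost), giving $\frac{4\ell_{A,r}}{n}$ from Lemma~\ref{lem:onerddirectsum} rather than $\frac{2\ell_{A,r}}{n}$; this accounts for the ``$+2$'' in the exponent that your arithmetic would otherwise leave as ``$+1$''.
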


Finally, in order to go from the distribution $p$ to the distribution $\mu_0$, we have the following \emph{distributional cut-and-paste} lemma. Intuitively, it states the following. Assume that in each round and on a product input distribution, the local states are almost independent of the other party's input. Then, up to local isometries, the overall state stays independent of the joint input. Importantly, this holds even after conditioning the input distribution on an arbitrary joint event. Hence, if the input is replaced by another one with the same marginal distributions on both sides, then the marginals of the global state in the final round on both sides are almost unchanged. Note that $p$ and  $\mu_0$ have the same marginal distributions on the both sides and $p$ is a product distribution conditioned on $x_{\leq j}y_{\leq j}j$. Thus the following lemma enables us to show that neither Alice nor Bob is able to distinguish $p$ from $\mu_0$ and equivalently $p$ from $\mu_1$. The lemma could be interesting on its own and we believe it should have other applications in quantum communication complexity. The proof is inspired from quantum versions of the cut-and-paste lemma~\cite{Jain:2003:LBB:946243.946331, JainN:2014, NayakT:2016}, with extra care needed to go from one distribution to the other. Let us set some notation before stating the lemma.

\begin{definition}
	\label{def:rdelim}
	
	Consider a protocol $\Pi$, and states $\ket{\rho}_{X R_X Y R_Y} =\ket{\rho}_{X R_X} \otimes \ket{\rho}_{Y R_Y}$ and $\ket{\sigma}_{X Y R_X R_Y}$
	such that $\sigma_X = \rho_X$, $\sigma_Y = \rho_Y$,
	and $\rho_{XY} = \rho_X \otimes \rho_Y$ and $\sigma_{XY}$ are classical input distributions for $\Pi$ with canonical purifications $\ket{\rho}_{X R_X Y R_Y}$ and $\ket{\sigma}_{X Y R_X R_Y}$, respectively.
	We denote by $\ket{\rho^i}_{X   R_X Y R_Y A_i B_i C_i}$ and $\ket{\sigma^i}_{X   R_X Y R_Y A_i B_i C_i}$
	the state in round $i$ when $\Pi$ is run on input distributions $\rho_{XY}$
	and $\sigma_{XY}$, respectively.
	For any register $L$, we use $\tilde{L}$ to represent a new register with the same dimension as $L$.
	For $i>0$ odd, let
	\begin{align}
	\epsilon_i & \defeq h (\rho^i_{R_X Y R_Y B_i C_i}~,~\rho^i_{R_X} \otimes \rho^i_{Y R_Y B_i C_i}),
	\end{align}
	and for $i>0$ even,
	\begin{align}
	\epsilon_i & \defeq h (\rho^i_{R_Y X R_X A_i C_i}, \rho^i_{R_Y} \otimes \rho^i_{X R_X A_i C_i}).
	\end{align}
	For $i=0$, let $C_0 = 1$ be a trivial register, let $\epsilon_0 = 0$
	and let
	\begin{align}
	V^0 & \defeq I_{Y} \otimes I_{B_0 \rightarrow \tilde{B}_0} \otimes V^Y_{1 \rightarrow \tilde{Y}_0 \tilde{R}_{Y_0}},
	\end{align}
	in which $V^Y_{1 \rightarrow \tilde{Y}_0 \tilde{R}_{Y_0}}$ creates $\ket{\rho^Y}_{\tilde{Y}_0 \tilde{R}_{Y_0}}$ from nothing.

	Also let, for odd $i > 0$,	$V^i  = V_{X A_i \rightarrow X \tilde{A}_i \tilde{X}_i \tilde{R}_{X_i}}^i, $
	satisfying
	\begin{align}
	\epsilon_i & = h (~V^i (\rho^i_{X R_X Y R_Y A_i B_i C_i})~,~
	\rho_{X R_X} \otimes \rho^i_{ \tilde{X}_i \tilde{R}_{X_i} Y R_Y \tilde{A}_i B_i C_i})~, \\
	\end{align}

	(note that $B_i = B_{i-1}$ for odd $i > 0$, and $A_i = A_{i-1}$ for even $i >0$) and for $i>0$ even, $V^i  = V^i_{Y B_i \rightarrow Y \tilde{B}_i \tilde{Y}_i \tilde{R}_{Y_i}}$ satisfying
	
	\begin{align}
	\epsilon_i	& = h (~V^i (\rho^i_{X R_ X Y R_Y A_i B_i C_i})~,~
	\rho_{Y R_Y} \otimes \rho^i_{X R_X  \tilde{Y}_i \tilde{R}_{Y_i} A_i \tilde{B}_i C_i})~. \\
	\end{align}

	The existence of $V^i$'s is guaranteed by Fact~\ref{fac:Uhlmann}.

\end{definition}

\begin{lemma}
	\label{lem:rdelim}
	
With the notation from Definition~\ref{def:rdelim}, let, for odd $i > 0$,

\begin{align}
	\gamma_i & = h (~V^i V^{i-1} (\rho^i_{X R_X Y R_Y A_i B_i C_i}) ~,~
	\rho_{X R_X Y R_Y} \otimes  \rho^i_{ \tilde{X}_i \tilde{R}_{X_i} \tilde{Y}_{i-1} \tilde{R}_{Y_{i-1}} \tilde{A}_i \tilde{B}_i C_i})~,
	\end{align}
	and
	\begin{align}
	\delta_i & = h (~V^i V^{i-1} (\sigma^i_{X R_X Y R_Y A_i B_i C_i})~,~
	\sigma_{X Y R_X R_Y}  \otimes  \rho^i_{ \tilde{X}_i \tilde{R}_{X_i} \tilde{Y}_{i-1} \tilde{R}_{Y_{i-1}} \tilde{A}_i \tilde{B}_i C_i}),
	\end{align}

and for $i> 0$ even, let

\begin{align}
	\gamma_i & = h (~V^i V^{i-1} (\rho^i_{X R_X Y R_Y A_i B_i C_i})~,~
	\rho_{X R_X Y R_Y} \otimes  \rho^i_{ \tilde{X}_{i-1} \tilde{R}_{X_{i-1}} \tilde{Y}_{i} \tilde{R}_{Y_{i}} \tilde{A}_i \tilde{B}_i C_i})~,
	\end{align}
	and
	\begin{align}
	\delta_i & = h (~V^i V^{i-1} (\sigma^i_{X R_X Y R_Y A_i B_i C_i})~,~
	\sigma_{X Y R_X R_Y}  \otimes  \rho^i_{ \tilde{X}_{i-1} \tilde{R}_{X_{i-1}} \tilde{Y}_{i} \tilde{R}_{Y_{i}} \tilde{A}_i \tilde{B}_i C_i}).
	\end{align}
	Then it holds that for $i \geq 1$,
	$$
	\gamma_i \leq \epsilon_i + \epsilon_{i-1} + 2 \sum_{j=1}^{i-2} \epsilon_j  , \quad
	\delta_i \leq \epsilon_i + \epsilon_{i-1} + 2 \sum_{j=1}^{i-2} \epsilon_j.
	$$
	
\end{lemma}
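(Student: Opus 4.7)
The plan is to prove both bounds by induction on $i$, relying on three tools: (i) invariance of Hellinger distance under isometries and monotonicity under partial trace, (ii) the triangle inequality, and (iii) the fact that for each $i$ the isometry $V^{i-1}$ acts on one party's registers while the subsequent protocol operation acts on the opposite party's registers, so the two commute. The base case $i=1$ is direct: $V^0$ appends a fresh purification $\ket{\rho^Y}$ on new registers $\tilde{Y}_0 \tilde{R}_{Y_0}$, which reproduces the canonical purification $\rho_{YR_Y} = \sigma_{YR_Y}$ (using $\sigma_Y = \rho_Y$), and $V^1$ decouples $XR_X$ with error $\epsilon_1$ by its defining Uhlmann property. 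For $\delta_1$, the fact that $\sigma_X = \rho_X$ ensures that Alice's reduced state under $\sigma^1$ matches that under $\rho^1$, so the same $V^1$ yields the decoupling with the same error.

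For $\gamma_i$ with odd $i > 1$ (the even case is symmetric), I would first invoke the defining Uhlmann property of $V^{i-1}$ to obtain a state in which the original $Y R_Y$ holds the decoupled $\rho_{Y R_Y}$ and the original round-$(i-1)$ data now lives on fresh Bob-side tilde registers, with Hellinger error $\epsilon_{i-1}$. Because $V^{i-1}$ and the round-$i$ protocol operation act on disjoint sides, conjugating by the round-$i$ operation transports this approximate decoupling to $\rho^i$ at no extra cost. Applying $V^i$ to the resulting state commutes with the Bob-side relabeling introduced by $V^{i-1}$, so the defining Uhlmann property of $V^i$ continues to apply and decouples $X R_X$ into a fresh copy of $\rho_{X R_X}$ with additional error $\epsilon_i$. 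The triangle inequality then yields $\gamma_i \leq \epsilon_i + \epsilon_{i-1}$, well within the stated bound.

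For $\delta_i$, the central difficulty is that each $V^j$ is an Uhlmann isometry calibrated to $\rho^j$, so the approximate-independence bounds $\epsilon_j$ do not apply directly to $\sigma^j$. The remedy is a cut-and-paste argument: starting from $V^i V^{i-1} \sigma^i$, one iteratively swaps the input on one side between the $\sigma$- and $\rho$-marginals at each intermediate round $j \in \{1, \dots, i-2\}$. Each swap is enabled by the fact that $\sigma$ and $\rho$ share single-party marginals, so the swap leaves the decoupled side's marginal unchanged; the approximation error on the other side is controlled by the Uhlmann property of $V^j$ and costs at most $\epsilon_j$ on insertion and $\epsilon_j$ on removal, i.e., $2\epsilon_j$ per intermediate round. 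Summing over $j = 1, \dots, i-2$ and adding the $\epsilon_{i-1}+\epsilon_i$ contribution from the final two rounds (which proceed as in the $\gamma_i$ argument once the state has been placed in the appropriate form) yields the claimed bound on $\delta_i$.

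The main technical obstacle will be the bookkeeping of the proliferating tilde registers introduced at each swap: one must verify that after all cut-and-paste steps the residual protocol-state factor on the right-hand side matches $\rho^i$ restricted to exactly the register set $\tilde{X}_i \tilde{R}_{X_i} \tilde{Y}_{i-1} \tilde{R}_{Y_{i-1}} \tilde{A}_i \tilde{B}_i C_i$ prescribed in Definition~\ref{def:rdelim}. I would formalize each swap as an explicit isometry that commutes with all subsequent $V^{j'}$ and protocol operations, so that every relabeling telescopes cleanly and only the final-round tildes survive in the bound.
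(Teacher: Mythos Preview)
Your argument for $\gamma_i$ is correct and in fact cleaner than the paper's: by commuting $V^{i-1}$ past $U^i$ (they act on opposite sides) and then applying the Uhlmann property of $V^i$ to the relabeled copy of $\rho^i$, you obtain $\gamma_i \leq \epsilon_i + \epsilon_{i-1}$ directly, which is stronger than the stated bound. The paper instead proves the recursion $\gamma_i \leq \gamma_{i-1} + \epsilon_{i-2} + \epsilon_i$, presumably to keep the argument structurally parallel to the $\delta_i$ case where such a recursion is genuinely needed.

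For $\delta_i$, however, your outline has a real gap. You correctly identify the obstacle---each $V^j$ is calibrated to $\rho^j$, not $\sigma^j$---but your proposed remedy of ``iteratively swapping at each intermediate round $j \in \{1,\dots,i-2\}$, costing $2\epsilon_j$ via insertion and removal of $V^j$'' is not a proof: the isometries $V^j$ for $j < i-1$ do not appear in $V^i V^{i-1}\sigma^i$ at all, and you give no mechanism for bringing $\sigma^i$ into a position where the Uhlmann property of $V^j$ applies. The paper resolves this with a recursion $\delta_i \leq \delta_{i-1} + \epsilon_i + \epsilon_{i-2}$. The triangle inequality is taken at the intermediate point $V^i U^i (V^{i-2})^\dagger\bigl(\sigma_{XYR_XR_Y}\otimes \rho^{i-1}_{\tilde{\cdots}}\bigr)$; the first term rearranges (using that $U^i$ and $V^{i-1}$ commute) to exactly $\delta_{i-1}$. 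For the second term, the key move is to apply the control isometry $(V^{X|Y})^\dagger$ (respectively $(V^{Y|X})^\dagger$), defined so that $V^{X|Y}(\rho_{YR_Y}) = \sigma_{XYR_XR_Y}$, which strips the $\sigma$-correlation off the input factor and reduces the expression to one involving only $\rho$; that expression is then bounded by $\epsilon_i + \epsilon_{i-2}$ exactly as in the $\gamma_i$ analysis. Your ``explicit swap isometry'' is precisely this $V^{X|Y}$, so the ingredient is there, but the recursive structure linking $\delta_i$ to $\delta_{i-1}$---rather than a direct hybrid over all rounds $j$---is what makes the argument go through and is missing from your sketch.
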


The theorem follows by blending all of these ingredients together,  using a concavity argument, and also optimizing over the number of rounds $t$.

Also, note that the polynomial rather than linear dependence on $k$ is due to the last round-elimination argument, in Lemma~\ref{lem:rdelim}, which works in a round-by-round fashion and from which
a factor of $t$, the number of rounds, comes out and over which we must optimize. The other lemmata do not incur such blow-up, and if we take the corresponding lemmata in the classical setting, we could further use the Markov property of classical protocol run on product distributions along with the specific ``$x = y$'' event, as done in Lemma~5 in~\cite{RaoS:2015} in order to obtain a tight $\Omega (k)$ lower bound. Obtaining tight round elimination arguments in the quantum setting remains an important open question, and another interesting open question is whether one can avoid such a round-by-round argument, and the extra factor of $t$ coming out of it, to complete the proof in the quantum setting as well.


\section{Warm-up: Trade-off for Greater-Than}\label{sec:gtthan}

In this section, we investigate the trade-off between the communication from Alice to Bob and the one from Bob to Alice for \textsf{Greater-Than} function. For $x,y\in\set{0,1}^n$, we define $x\geq y$ if the integer with binary representation $x$ is at least as large as the integer with binary representation $y$. The \textsf{Greater-Than} function is defined as
\[\textsf{Greater-Than}\br{x,y}\defeq\begin{cases}
1~ \text{if $x\geq y$},\\
0~ \text{otherwise}.
\end{cases}.\]

Let us restate Theorem~\ref{th:gtinformal} more formally.

\begin{theorem}
	Given any constant $0<\epsilon<\frac{1}{2}$ and a quantum protocol that computes \textsf{Greater-Than}: $\set{0,1}^n\times\set{0,1}^n\rightarrow\set{0,1}$ with error at most $\epsilon$, if Bob communicates $b$ qubits to Alice, then Alice must communicate at least $\frac{n}{2^{\Omega\br{b + 1}}}$ qubits to Bob. Moreover, this trade-off is tight.
\end{theorem}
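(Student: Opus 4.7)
The plan is to instantiate Lemma~\ref{lem:multirddirectsum} on a hard distribution tailored to \textsf{Greater-Than}. I take $J$ uniform in $[n]$, $X$ uniform in $\{0,1\}^{n}$, and $Y$ defined by $Y_{<J}=X_{<J}$ with $Y_{\geq J}$ drawn uniformly at random and independently of $X$. Setting $Y_{2}^{J}:=Y_{<J}$ and $Y_{1}^{J}:=Y_{\geq J}$ makes the decomposition fit the lemma cleanly: $Y_{2}^{J}=X_{<J}$ is trivially a function of $X_{<J}$, and conditioned on $JX_{<J}$ both $X_{\geq J}$ and $Y_{1}^{J}=Y_{\geq J}$ are uniform and independent. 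Applying Lemma~\ref{lem:multirddirectsum} to the full protocol then gives
\[
\condmutinf{X_{J}}{C_{r}B_{r}Y_{1}^{J}R_{Y_{1}^{J}}}{JX_{<J}} \leq \frac{\ell_{A}\cdot 2^{2\ell_{B}+2}}{n},
\]
where $\ell_{A}$ and $\ell_{B}$ are the total qubits sent by Alice and Bob respectively.

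Next I would argue that correctness forces the left-hand side to be $\Omega(1)$. Conditioned on $JX_{<J}$, the strings $X_{\geq J}$ and $Y_{\geq J}$ are uniform and independent, so $\textsf{GT}(X,Y)=X_{K}$ where $K\geq J$ is the first index at which they disagree; a short calculation gives $P(\textsf{GT}(X,Y)=X_{J}\mid JX_{<J})=3/4$. After a constant-overhead amplification that drives the error of the assumed protocol below $1/4$, its output $\hat{Z}$ satisfies $P(\hat{Z}=X_{J}\mid JX_{<J})\geq 3/4-\epsilon>1/2$, and a conditional Fano bound yields $\condmutinf{X_{J}}{\hat{Z}}{JX_{<J}}=\Omega(1)$. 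Since $\hat{Z}$ is computed from $C_{r}B_{r}$, the chain rule and data processing give $\condmutinf{X_{J}}{\hat{Z}}{JX_{<J}}\leq\condmutinf{X_{J}}{C_{r}B_{r}Y_{1}^{J}R_{Y_{1}^{J}}}{JX_{<J}}$, so combining with the lemma yields $\ell_{A}\cdot 2^{2\ell_{B}+2}\geq\Omega(n)$, i.e.\ $\ell_{A}\geq n/2^{\Omega(\ell_{B}+1)}$.

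For tightness I would exhibit a simple classical (hence quantum) protocol. Alice and Bob perform $b$ rounds of interactive bisection for the first index of disagreement: in each round Alice sends an $O(\log b)$-bit random hash of her currently-active half-interval and Bob replies with one bit identifying the half consistent with his own hash. After $b$ rounds the candidate interval has length $n/2^{b}$; Alice then sends that $n/2^{b}$-bit substring of $X$ outright and Bob outputs the answer. In total Bob sends $O(b)$ bits and Alice sends $O(n/2^{b}+b\log b)$ bits, matching the lower bound.

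The main obstacle I anticipate is purely bookkeeping: verifying the hypotheses of Lemma~\ref{lem:multirddirectsum} with the canonical purifications of the classical inputs, and carefully tracking the Fano step under the conditioning on $JX_{<J}$ so that the constant remains bounded away from zero in expectation (rather than just on most $j,x_{<j}$). Beyond that the argument is a direct plug-in of the paper's main technical tool, which is precisely what makes \textsf{Greater-Than} a clean warm-up for the more intricate symmetric $k$-ary pointer jumping analysis.
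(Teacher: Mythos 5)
Your proposal is correct and rests on the same core tool and the same hard distribution as the paper: both apply Lemma~\ref{lem:multirddirectsum} (you use its statement directly at the final round; the paper uses its proof to first extract an explicit one-way protocol with communication $a\cdot 2^{\O\br{b}}$, which is the same bound in different clothing) to the distribution with $X_{<J}=Y_{<J}$ and independent uniform suffixes. Where you genuinely diverge is the correctness step: the paper invokes Claim~\ref{claim:gt} from Ramamoorthy--Sinha, decomposing $\condmutinf{\GT\br{X,Y}}{O}{X_{<J}Y_{<J}J}\geq 1-\O\br{\sqrt{\epsilon}\log\frac{1}{\epsilon}}$ against the residual bound $\condmutinf{\GT\br{X,Y}}{O}{X_{\leq J}Y_{<J}J}<0.84$, whereas you observe directly that $\Pr\Br{\GT\br{X,Y}=X_J\mid JX_{<J}}=3/4$ (up to the $2^{-(n-j+1)}$ tie event, negligible on average) so the output predicts $X_J$ with probability $>1/2$ after constant amplification, and conditional Fano plus concavity of binary entropy gives the $\Omega(1)$ lower bound on $\condmutinf{X_J}{C_TB_TY_1^JR_{Y_1^J}}{JX_{<J}}$. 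This is a valid and more self-contained substitute; the paper's route buys slightly cleaner constants and reuses an existing claim. Two small points to tighten on the upper-bound side: (i) your hash-based bisection is essentially the paper's protocol (which instead splits into $2^{\Omega\br{b}}$ blocks and runs the noisy-search protocol of Fact~\ref{fac:noisybinarysearch}), but your Alice-side cost $\O\br{b\log b + n/2^{b}}$ is not dominated by $n/2^{\O\br{b}}$ once $b$ is large; the paper handles this regime by falling back on Nisan's $\O\br{\log n}$ protocol and restricting attention to $n/2^{b}=n^{\Omega\br{1}}$, and you should do the same; (ii) you should dispose of the $x=y$ case (e.g., by an initial equality hash, as the paper does) so that ``first index of disagreement'' is well defined.
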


\begin{proof}
	By a standard repetition argument, we may assume without loss of generality that $\epsilon$ is a sufficiently small constant ; this can at most increase Alice's and Bob's respective communication by a constant multiplicative factor. Suppose Alice communicates $a \geq 1$ qubits. Then by the proof of Lemma~\ref{lem:multirddirectsum}, there exists a one-way quantum protocol that computes \textsf{Greater-Than} with communication $a\cdot 2^{\O\br{b}}$ and error at most $2\epsilon$. Thus it suffices to show that the quantum one-way communication complexity of \textsf{Greater-Than} is $\Omega\br{n}$. Our proof is close to the one in~\cite{RSinha:2015}, where Ramamoorthy and Sinha provided a tight lower bound on the RCC of \textsf{Greater-Than},  $\Omega\br{\log n}$.  We adopt the hard distribution of the inputs given in~\cite{RSinha:2015} (slightly adapted from \cite{BravermanWeinstein2011,Viola:2013}) and show that the distributional quantum one-way communication complexity of \textsf{Greater-Than} under this distribution is $\Omega\br{n}$. Then we further apply Yao's minimax theorem~\cite{Yao:1979:CQR:800135.804414} to get the desired lower bound.
	
	Let $J\in[\frac{n}{2}]$ be uniformly random. $X,Y\in\set{0,1}^n$ are sampled uniformly conditioned on the event that $X_{<J}=Y_{<J}$, where $X_{<J}\defeq X_1\ldots X_{J-1}$.
	Let $\Pi$ be a quantum one-way protocol that computes \textsf{Greater-Than} with communication at most $c$ and error at most $2\epsilon$. We use $ACB$ to represent the state shared between Alice and Bob after Alice sends the message, where $A$ is the remaining register with Alice; $C$ is the register sent to Bob and $B$ is the register owned by Bob in the beginning of the protocol ($B$ is independent of the inputs). $C$ contains at most $c$ qubits. Consider
	\begin{eqnarray}\label{eqn:1}
		&&\condmutinf{CBY}{X_J}{X_{<J}J}=\expec{j\leftarrow J}{\condmutinf{CB}{X_j}{X_{<j}j}}=\expec{j\leftarrow J}{\condmutinf{CB}{X_j}{X_{<j}}}=\frac{2}{n}\mutinf{CB}{X_{\leq\frac{n}{2}}}\nonumber\\
		&=&\frac{2}{n}\condmutinf{C}{X_{\leq\frac{n}{2}}}{B}\leq \frac{4c}{n};
	\end{eqnarray}
	where the second equality is from the fact that $J$ is independent of $CBX_J$ given $X_{<J}$; the third equality is by the chain rule; the fourth equality is from the fact that $B$ is independent of the inputs; the inequality is from Fact~\ref{lem:boundoncmi}.
	Let $O$ be the output of the protocol. The following claim is proved in~\cite{RSinha:2015}.
	\begin{claim}\cite{RSinha:2015}\label{claim:gt}
		Suppose $n>20$, it holds that
		\begin{equation}\label{eqn:gt1}
			\condmutinf{\textsf{Greater-Than}\br{X,Y}}{O}{X_{<J}Y_{<J}J}\geq1-\O\br{\sqrt{\epsilon}\log\frac{1}{\epsilon}},
		\end{equation}

and 
		\begin{equation}\label{eqn:gt2}
			\condmutinf{\textsf{Greater-Than}\br{X,Y}}{O}{X_{\leq J}Y_{<J}J}<0.84.
		\end{equation}
			\end{claim}
	Hence,
	\begin{eqnarray*}
		&&\condmutinf{\textsf{Greater-Than}\br{X,Y}}{O}{X_{<J}Y_{<J}J}\\
		&\leq&\condmutinf{X_J\textsf{Greater-Than}\br{X,Y}}{O}{X_{<J}Y_{<J}J}\\
		&\leq&\condmutinf{X_J}{O}{X_{<J}Y_{<J}J}+\condmutinf{\textsf{Greater-Than}\br{X,Y}}{O}{X_{\leq J}Y_{<J}J}\\
		&\leq&\condmutinf{X_J}{CBY_{\geq J}}{X_{<J}Y_{<J}J}+\condmutinf{\textsf{Greater-Than}\br{X,Y}}{O}{X_{\leq J}Y_{<J}J}\\
		&\leq&\frac{4c}{n}+0.84;
	\end{eqnarray*}
	where the third inequality is from Fact~\ref{fact:subsystem monotone} and the last inequality is from Eqs.~\eqref{eqn:1} and~\eqref{eqn:gt2} . Combining with Eq.~\eqref{eqn:gt1}, the result follows.
	
	To prove the tightness, let's assume without loss of generality that Alice sends more qubits to Bob than Bob sends to Alice. It is well-known that the RCC of $\textsf{Greater-Than}$ with bounded error is $\O\br{\log n}$ due to Nisan~\cite{Nisan:1993}. Thus it suffices to consider the case that $\frac{n}{2^b}=n^{\Omega\br{1}}$. To achieve such a bound, Alice and Bob first check whether $x=y$ using shared hashing function with $\O\br{1}$ bits. Then, they equally divide the inputs into $2^{\Omega\br{b}}$ intervals of $\frac{n}{2^{\Omega (b)}}$ bits before running the protocol in Fact~\ref{fac:noisybinarysearch} below, in order to find the interval containing the most significant bit for which $x$ and $y$ differ. Alice further sends the part of her input in that interval to Bob, which requires $\frac{n}{2^{\Omega\br{b}}}$ bits, larger than $b$. Hence the total communication from Alice to Bob is $\frac{n}{2^{\Omega\br{b + 1}}}$.
\end{proof}
\begin{fact}\label{fac:noisybinarysearch}\cite{FeigeRPU:1994}
	There exists a randomized public-coin protocol with communication complexity $\O\br{\log k/\epsilon}$ such that on input two strings $x,y\in\X^k$, where $\X$ is a finite set, it outputs the smallest index $i\in[k]$ such that $x_i\neq y_i$ with probability at least $1-\epsilon$, if such $i$ exists.
\end{fact}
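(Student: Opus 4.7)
The plan is to implement a \emph{noisy binary search} for the smallest disagreement index, using a public-coin equality test between substrings as the basic query, and then to verify the final candidate with an amplified equality test.

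First, I would set up the primitive. For any interval $[a,b] \subseteq [k]$, Alice and Bob can decide whether $x_{[a,b]} = y_{[a,b]}$ with one-sided error $\delta$ using $\O\br{\log\br{1/\delta}}$ bits of public-coin communication: treat each substring as a polynomial over a field of size $\operatorname{poly}\br{k/\delta, \abs{\X}}$, evaluate it at a shared random field element, and have Alice transmit this evaluation to Bob. Equal substrings always agree; unequal substrings disagree with probability $\geq 1-\delta$ by Schwartz--Zippel.

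Second, the binary search. With a perfect equality oracle, one locates the smallest disagreement index in $\lceil \log k \rceil$ queries: at each level, ask whether $x$ and $y$ agree on the left half of the current window, recurse on the left half if the answer is ``no'' and on the right half otherwise. To handle noise, the simplest route is direct amplification: set the per-query error to $\delta = \epsilon/(3 \log k)$ so that, by a union bound, all $\log k$ binary-search comparisons plus one final verification of the candidate $i^*$ (an equality test on $x_{<i^*}$ versus $y_{<i^*}$ and a single-character test on $x_{i^*}$ versus $y_{i^*}$) succeed with probability $\geq 1-\epsilon$. The total cost is $\O\br{\log k \cdot \log\br{\log k / \epsilon}}$, which is well within $\O\br{\log k / \epsilon}$. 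A tighter bound of $\O\br{\log\br{k/\epsilon}}$ can be obtained via the proper noisy binary search of Feige-Raghavan-Peleg-Upfal: maintain a Bayesian posterior over the target index and always query the split that most balances the posterior mass; after $\O\br{\log\br{k/\epsilon}}$ constant-error queries the posterior concentrates on the correct index with probability $\geq 1-\epsilon$.

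The main obstacle, should one insist on the tighter $\O\br{\log\br{k/\epsilon}}$ bound, is the analysis of the Feige-Raghavan-Peleg-Upfal noisy binary search, namely showing that the balanced-query rule drives the posterior entropy down at a geometric rate despite constant per-query error; this is the nontrivial ingredient one would lean on from the cited reference. The hashing primitive and the union-bound amplification route, by contrast, are routine and already suffice to achieve the stated $\O\br{\log k / \epsilon}$ bound, so the lossier proof is the one I would actually write out in full.
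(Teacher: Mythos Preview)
The paper does not prove this fact; it simply cites it from~\cite{FeigeRPU:1994}. So there is no ``paper's own proof'' to compare against, and your proposal is providing content the paper omits.

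Your overall approach is sound, but there is an arithmetic slip in the route you say you would actually write out. You claim that the direct-amplification cost $\O\br{\log k \cdot \log\br{\log k / \epsilon}}$ is ``well within $\O\br{\log k / \epsilon}$''. This is false: fix $\epsilon$ to be a constant and let $k \to \infty$; then $\log k \cdot \log\log k$ is not $\O\br{\log k}$. So the union-bound route, as stated, does \emph{not} achieve the bound in the fact (and in particular would give $\O\br{b \log b}$ rather than $\O\br{b}$ in the paper's application to the \textsf{Greater-Than} trade-off, spoiling tightness).

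Your second route---the genuine noisy binary search of Feige--Raghavan--Peleg--Upfal with constant-error public-coin equality tests as the comparison oracle---is the correct one. It yields $\O\br{\log\br{k/\epsilon}}$ queries, each costing $\O\br{1}$ bits, hence $\O\br{\log\br{k/\epsilon}}$ total communication, which is indeed $\O\br{\br{\log k}/\epsilon}$. That is the argument you should commit to, and it is exactly what the citation to~\cite{FeigeRPU:1994} is pointing at.
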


\section*{Acknowledgment}
We are grateful to Ashwin Nayak for helpful discussions, Henry Yuen and Makrand Sinha for helpful correspondence. A.A. would like to thank Rahul Jain for related discussions. We thank Robin Kothari for his valuable help in typesetting of equations.

A.A. is supported by the National Research Foundation, Prime Minister's Office, Singapore and the
Ministry of Education, Singapore under the Research Centres of Excellence programme. 
D.T. is supported in part by NSERC, CIFAR, Industry Canada and ARL CDQI program. IQC and PI are supported in part by the Government of Canada and the Province of Ontario.
P.Y. is supported by the Department of Defense. Part of this work was
done when A.A. was visiting Institute for Quantum Computing (IQC), University of Waterloo under Queen Elizabeth Scholarship and P.Y. and N.Y. were postdoctoral fellows at IQC supported by NSERC and CIFAR.


\section{Preliminaries}\label{sec:prelim}
\subsection{Information Theory}
For an integer $n \geq 1$, let $[n]$ represent the set $\{1,2, \ldots, n\}$.
Let $\X$ and $\Y$ be finite sets and $k$ be a natural number.
Let $\X^k$ be the set $\X\times\cdots\times\X$, the Cartesian product of
$\X$, $k$ times. Given $a=a_1,\ldots,a_k$, we write $a_{\leq i}$ to denote $a_1,\ldots,a_i$. We define $a_{<i},a_{\geq i},a_{>i}$ similarly. We write $a_S$ to represent the projection of $a$ to the coordinates specified in the set $S\subseteq[k]$. Let $\mu$ be a probability distribution on $\X$.
Let $\mu(x)$ represent the probability of $x\in\X$ according to $\mu$.
Let $X$ be a random variable distributed according to $\mu$.
We use the same symbol to represent a
random variable and its distribution whenever it is clear from the
context.
The expectation value of function $f$ on $\X$ is defined as
$\expec{x \leftarrow X}{f(x)} \defeq \sum_{x \in \X} \prob{X=x}
\cdot f(x)$, where $x\leftarrow X$ means that $x$ is drawn according to the distribution of $X$.

A quantum state (or just a state) $\rho$ is a positive semi-definite matrix with unit trace. It is called pure if its rank is $1$. For unit vector $\ket{\psi}$,
with slight abuse of notation, we use $\psi$ to represent the state
and also the density matrix  $\ketbra{\psi}$, associated with $\ket{\psi}$.
A classical distribution $\mu$ can be viewed as a diagonal quantum state with entries $\mu(x)$.
For two quantum states $\rho$ and $\sigma$, $\rho\otimes\sigma$ represents the tensor product (Kronecker product) of $\rho$ and $\sigma$.
A quantum super-operator $\E(\cdot)$ is a completely positive and trace preserving (CPTP) linear map from states to states.
Readers can refer to~\cite{CoverT91,NielsenC00,Watrouslecturenote, Wilde13} for more details.

\begin{definition}\label{def:tracedistance}
	For quantum states $\rho$ and $\sigma$, the  $\ell_1$-distance between them is given by $\onenorm{\rho-\sigma}$, where $\onenorm{X}\defeq\Tr\sqrt{X^{\dag}X}$ is the sum of the singular values of $X$. We say that $\rho$ is $\varepsilon$-close to $\sigma$ if $\|\rho-\sigma\|_1\leq\varepsilon$.
\end{definition}
\begin{definition}\label{def:fidelity}
	For quantum states $\rho$ and $\sigma$, the {\em fidelity} between them is given by $\fidelity{\rho}{\sigma}\defeq\onenorm{\sqrt{\rho}\sqrt{\sigma}}.$ The {\em Hellinger distance} between them is defined as $h (\rho,\sigma)=\sqrt{1-\fidelity{\rho}{\sigma}}$. We also use $h\br{\rho,\atop\sigma}$ for overlong expressions.
\end{definition}

The following fact relates the $\ell_1$-distance and the fidelity between two states.

\begin{fact}[Fuchs-van de Graaf inequalities~\cite{FuchsG99}]
	\label{fac:tracefidelityequi}
	For quantum states $\rho$ and $\sigma$, it holds that
	\[2(1-\fidelity{\rho}{\sigma})=2h^2\br{\rho,\sigma}\leq\onenorm{\rho-\sigma}\leq2\sqrt{1-\fidelity{\rho}{\sigma}^2}.\]
	For pure states $\ket{\phi}$ and $\ket{\psi}$, we have
	\begin{align*}
	\onenorm{\ketbra{\phi} - \ketbra{\psi}}
	&= \sqrt{1 - \fidelity{\ketbra{\phi}}{\ketbra{\psi}}^2} \\
	&= \sqrt{1 - |\langle\phi|\psi\rangle|^2}.
	\end{align*}
\end{fact}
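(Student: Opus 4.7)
The first equality, $2(1-F(\rho,\sigma)) = 2h^2(\rho,\sigma)$, is immediate from the definition $h(\rho,\sigma) = \sqrt{1 - F(\rho,\sigma)}$, so the substantive content is the two inequalities bounding $\onenorm{\rho-\sigma}$ in terms of $F$, together with the closed-form expression for pure states.

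I would start with the pure-state identity, since both bounds for mixed states reduce to it through purification. Writing $\Delta = \ketbra{\phi} - \ketbra{\psi}$, note that $\Delta$ is supported on the (at most) two-dimensional subspace $V = \mathrm{span}\{\ket\phi,\ket\psi\}$. Choose an orthonormal basis of $V$ adapted to $\ket\phi$ and the component of $\ket\psi$ orthogonal to $\ket\phi$; in this basis $\Delta$ is a traceless $2\times 2$ Hermitian matrix whose squared Frobenius norm is $2(1-|\langle\phi|\psi\rangle|^2)$. A traceless Hermitian $2\times 2$ matrix has eigenvalues $\pm\lambda$ where $2\lambda^2$ equals its squared Frobenius norm, so the eigenvalues of $\Delta$ are $\pm\sqrt{1-|\langle\phi|\psi\rangle|^2}$, which yields $\onenorm{\ketbra{\phi}-\ketbra{\psi}} = 2\sqrt{1-|\langle\phi|\psi\rangle|^2}$ and, since $F(\ketbra\phi,\ketbra\psi) = |\langle\phi|\psi\rangle|$, settles the pure-state case.

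For the upper bound $\onenorm{\rho-\sigma} \leq 2\sqrt{1-F(\rho,\sigma)^2}$, I would invoke Uhlmann's theorem (this is the only nontrivial input and is stated elsewhere in the paper as Fact~\ref{fac:Uhlmann}) to obtain purifications $\ket\Phi_{AR}, \ket\Psi_{AR}$ of $\rho$ and $\sigma$ with $|\langle\Phi|\Psi\rangle| = F(\rho,\sigma)$. Monotonicity of the trace distance under the partial trace $\Tr_R$, together with the pure-state identity above, gives
\begin{equation*}
\onenorm{\rho-\sigma} \;\leq\; \onenorm{\ketbra\Phi - \ketbra\Psi} \;=\; 2\sqrt{1-F(\rho,\sigma)^2}.
\end{equation*}

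For the lower bound $\onenorm{\rho-\sigma} \geq 2(1-F(\rho,\sigma))$, the plan is to pass to a classical measurement. Let $\{E_0,E_1\}$ be the Helstrom two-outcome POVM achieving $\onenorm{\rho-\sigma} = \Tr[E_0(\rho-\sigma)] - \Tr[E_1(\rho-\sigma)]$ (equivalently $\onenorm{\rho-\sigma} = 2\max_{0\leq E\leq I}\Tr[E(\rho-\sigma)]$), and let $p_i = \Tr[E_i\rho]$, $q_i = \Tr[E_i\sigma]$ be the induced classical distributions. Monotonicity of the fidelity under measurement (which also follows from Uhlmann plus the variational characterization) gives $F(\rho,\sigma) \leq \sum_i \sqrt{p_i q_i}$, and the classical inequality $\sum_i|p_i-q_i| \geq 2(1 - \sum_i \sqrt{p_i q_i})$, which is a one-line calculation from $(\sqrt{p_i}-\sqrt{q_i})^2 \leq |p_i - q_i|$, yields
\begin{equation*}
\onenorm{\rho-\sigma} \;\geq\; \sum_i |p_i - q_i| \;\geq\; 2\bigl(1 - \textstyle\sum_i\sqrt{p_iq_i}\bigr) \;\geq\; 2(1-F(\rho,\sigma)),
\end{equation*}
as desired.

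The one step I expect to be the most delicate is the monotonicity of fidelity under measurements used in the lower bound: proving it from scratch requires either Uhlmann's theorem again (to write $F$ as a maximum of inner products) or the variational formula $F(\rho,\sigma) = \min_{\{E_i\}} \sum_i \sqrt{\Tr[E_i\rho]\Tr[E_i\sigma]}$. Since Uhlmann's theorem is already invoked as a fact in this paper, I would use it for both directions so the whole argument leans on a single external ingredient. The pure-state formula then falls out of the same two-dimensional eigenvalue computation and completes the statement.
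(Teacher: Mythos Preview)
The paper does not supply a proof of this fact; it is stated with a citation to Fuchs--van de Graaf and used as a black box. Your argument is the standard one and is correct: the pure-state eigenvalue computation, the Uhlmann-plus-monotonicity route for the upper bound, and the measurement-plus-classical-inequality route for the lower bound are exactly how these inequalities are usually established.

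One remark worth making: your pure-state calculation yields $\onenorm{\ketbra{\phi}-\ketbra{\psi}} = 2\sqrt{1-|\langle\phi|\psi\rangle|^2}$, which is the correct value and is consistent with the general upper bound $\onenorm{\rho-\sigma}\leq 2\sqrt{1-F(\rho,\sigma)^2}$ being tight on pure states. The displayed pure-state formula in the paper's statement is missing this factor of~$2$; your computation is right and the statement as printed contains a typo.
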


We use capital letters $A,B,\ldots$  to represent the registers; $\H_A,\H_B,\ldots$ to represent the Hilbert spaces associated to them and $\D_A,\D_B,\ldots$ to represent the set of all quantum states in $\H_A,\H_B,\ldots$. For any register $A$, $\abs{A}$ represents the number of qubits it contains, or equivalently, $\log\dim\H_A$. For bipartite $\rho_{AB}$, we define
\[ \rho_B \defeq \Tr_A\br{\rho_{AB}}
\defeq \sum_i (\bra{i} \otimes \id_{B})
\rho^{AB} (\ket{i} \otimes \id_{B}) \]
where $\set{\ket{i}}_i$ is a basis for the Hilbert space $\H_A$
and $\id_B$ is the identity matrix in space $\H_B$. $\Tr_A$ is called the partial trace operation.
The state $\rho_B$ is referred to as the marginal state of $\rho_{AB}$ in register $B$.
The following fact states that the distance between
two states can't be increased by quantum operations.
\begin{fact}
	\label{fac:monotonequantumoperation}
	For states $\rho$, $\sigma$, and quantum operation $\E(\cdot)$, it holds that
	\begin{align*}
	\onenorm{\E(\rho) - \E(\sigma)} &\leq \onenorm{\rho - \sigma}
	\intertext{and}
	\fidelity{\E(\rho)}{\E(\sigma)}&\geq \fidelity{\rho}{\sigma} .
	\end{align*}
	In particular, for any bipartite states $\rho_{AB}$ and $\sigma_{AB}$, it holds that
	\[\onenorm{\rho_{AB}-\sigma_{AB}}\geq\onenorm{\rho_A-\sigma_A}, \fidelity{\rho_{AB}}{\sigma_{AB}}\leq\fidelity{\rho_A}{\sigma_A}~\mbox{and}~ h\br{\rho_{AB},\sigma_{AB}}\geq h\br{\rho_A,\sigma_A}.\]
\end{fact}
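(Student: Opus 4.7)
My plan is to establish the two general monotonicity statements first, and then derive the bipartite consequence by applying them with $\E$ taken to be the partial trace.

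For the trace-distance inequality, I would rely on the variational characterization $\onenorm{\rho - \sigma} = 2 \max_{0 \leq P \leq \id} \Tr\br{P\br{\rho - \sigma}}$. Letting $P^\star$ attain the maximum for $\E\br{\rho} - \E\br{\sigma}$ and $\E^*$ denote the Heisenberg-picture adjoint of $\E$, the duality $\Tr\br{P\,\E\br{X}} = \Tr\br{\E^*\br{P}\, X}$ yields
\[ \Tr\br{P^\star\br{\E\br{\rho} - \E\br{\sigma}}} = \Tr\br{\E^*\br{P^\star}\br{\rho - \sigma}}. \]
Because $\E$ is CPTP, its adjoint $\E^*$ is completely positive and unital, so $0 \leq \E^*\br{P^\star} \leq \id$, meaning $\E^*\br{P^\star}$ is admissible in the variational formula for $\onenorm{\rho - \sigma}$. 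This immediately gives $\onenorm{\E\br{\rho} - \E\br{\sigma}} \leq \onenorm{\rho - \sigma}$.

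For the fidelity inequality I would combine Uhlmann's theorem, stating that $\fidelity{\rho}{\sigma}$ equals the supremum of $\abs{\braket{\psi}{\phi}}$ over all purifications $\ket{\psi}, \ket{\phi}$ of $\rho, \sigma$ on a common ancillary system, with a Stinespring dilation $\E\br{X} = \Tr_E\br{V X V^*}$ for some isometry $V$ into a system-environment composite. Any purification $\ket{\psi}_{SR}$ of $\rho$ on an extra reference $R$ is mapped by $V \otimes \id_R$ to a vector which is itself a purification of $\E\br{\rho}$, and analogously for $\sigma$. Since $V \otimes \id_R$ is an isometry, it preserves the inner product between the two upstairs purifications. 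Hence the supremum over purifications achievable downstairs (for $\E\br{\rho}, \E\br{\sigma}$) is at least the supremum upstairs (for $\rho, \sigma$), giving $\fidelity{\E\br{\rho}}{\E\br{\sigma}} \geq \fidelity{\rho}{\sigma}$.

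The bipartite statements are then immediate: the partial trace $\Tr_A$ is CPTP, so applying the two inequalities with $\E = \Tr_A$ yields $\onenorm{\rho_{AB} - \sigma_{AB}} \geq \onenorm{\rho_A - \sigma_A}$ and $\fidelity{\rho_{AB}}{\sigma_{AB}} \leq \fidelity{\rho_A}{\sigma_A}$. The Hellinger bound $h\br{\rho_{AB}, \sigma_{AB}} \geq h\br{\rho_A, \sigma_A}$ then follows from the definition $h\br{\cdot, \cdot} = \sqrt{1 - \fidelity{\cdot}{\cdot}}$ together with the fact that $t \mapsto \sqrt{1 - t}$ is monotone decreasing on $[0, 1]$. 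None of the steps is a genuine obstacle; the only subtle points are that ``CPTP'' is precisely what ensures $\E^*$ is unital (preserving the constraint $P \leq \id$ in the variational argument) and that any Stinespring isometry preserves inner products of extended pure vectors, so neither Uhlmann optimum can exceed the one on the smaller space.
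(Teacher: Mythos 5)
Your argument is correct: the duality/adjoint argument for the trace norm, the Uhlmann--Stinespring argument for fidelity, and the specialization to $\E = \Tr_A$ are all sound, and the Hellinger claim does follow from monotonicity of $t \mapsto \sqrt{1-t}$. The paper states this as a standard Fact from the literature (cf.\ its cited references on quantum information) and gives no proof of its own, and your derivation is exactly the standard one.
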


\begin{fact}\label{fac:jointlylinearity}
	Given bipartite states $\rho_{AB}=\sum_{i}p_i\ketbra{i}\otimes\rho_i$ and $\sigma_{AB}=\sum_iq_i\ketbra{i}\otimes\sigma_i$, where $\set{p_i}_i$ and $\set{q_i}_i$ are distributions, it holds that
	\[\fidelity{\rho_{AB}}{\sigma_{AB}}=\sum_i\sqrt{p_iq_i}~\fidelity{\rho_i}{\sigma_i}.\]
\end{fact}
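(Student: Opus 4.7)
The plan is to compute $\fidelity{\rho_{AB}}{\sigma_{AB}} = \onenorm{\sqrt{\rho_{AB}}\sqrt{\sigma_{AB}}}$ directly, exploiting the block-diagonal structure of both classical-quantum states in the $A$-basis. Since the projectors $\ketbra{i}$ are mutually orthogonal, each of $\rho_{AB}$ and $\sigma_{AB}$ is block-diagonal, and hence so are their positive square roots:
\begin{align*}
\sqrt{\rho_{AB}} &= \sum_i \sqrt{p_i}\, \ketbra{i} \otimes \sqrt{\rho_i},\\
\sqrt{\sigma_{AB}} &= \sum_i \sqrt{q_i}\, \ketbra{i} \otimes \sqrt{\sigma_i}.
\end{align*}

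The next step is to multiply these two expressions. Orthogonality of the $\ketbra{i}$ collapses the double sum to a single sum, yielding
\[
\sqrt{\rho_{AB}}\sqrt{\sigma_{AB}} = \sum_i \sqrt{p_i q_i}\, \ketbra{i} \otimes \sqrt{\rho_i}\sqrt{\sigma_i},
\]
which is again block-diagonal, with $i$-th block equal to $\sqrt{p_i q_i}\sqrt{\rho_i}\sqrt{\sigma_i}$.

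Finally, I would use the fact that the trace norm (sum of singular values) of a block-diagonal operator is just the sum of the trace norms of its blocks. Applying this to the above, together with the definition $\fidelity{\rho_i}{\sigma_i} = \onenorm{\sqrt{\rho_i}\sqrt{\sigma_i}}$, gives
\[
\fidelity{\rho_{AB}}{\sigma_{AB}} = \onenorm{\sqrt{\rho_{AB}}\sqrt{\sigma_{AB}}} = \sum_i \sqrt{p_i q_i}\, \onenorm{\sqrt{\rho_i}\sqrt{\sigma_i}} = \sum_i \sqrt{p_i q_i}\, \fidelity{\rho_i}{\sigma_i},
\]
as claimed. There is no real obstacle here: the only subtlety to verify is that the square root of a block-diagonal positive operator is itself block-diagonal with the square roots of the blocks, which follows from uniqueness of the positive square root. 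The proof is a short, self-contained calculation.
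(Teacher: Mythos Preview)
Your argument is correct and is the standard direct computation for this well-known identity. The paper itself states this as a \emph{Fact} in the preliminaries without proof, so there is no alternative approach in the paper to compare against; your block-diagonal calculation is exactly the canonical justification.
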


\begin{definition}\label{def:purification}
	We say that a pure state $\ket{\psi} \in \H_A\otimes\H_B$
	is a purification of some state $\rho$
	if $\Tr_A(\ketbra{\psi})=\rho$. If $\rho=\sum_ip\br{i}\ketbra{i}$ is a classical state, we say the {\em canonical purification} of $\rho$ is $\sum_i\sqrt{p\br{i}}\ket{i}\ket{i}$.
\end{definition}

\begin{fact}[Uhlmann's theorem]
	\label{fac:Uhlmann}
	Given quantum states $\rho$, $\sigma$, and a purification $\ket{\psi}$ of $\rho$,
	it holds that $\fidelity{\rho}{\sigma}=\max_{\ket{\phi}} | \langle \phi | \psi \rangle | $,
	where the maximum is taken over all purifications of $\sigma$. Let $\rho=\sum_i\alpha_i\ketbra{u_i}$ and $\sigma=\sum_i\beta_i\ketbra{v_i}$ be spectral decompositions of $\rho$ and $\sigma$, respectively; $\ket{\phi}_{AB}$ and $\ket{\psi}_{AB}$ be purifications of $\rho$ and $\sigma$, respectively, with Schmidt decomposition $\ket{\phi}=\sum_i\sqrt{\alpha_i}\ket{u_i}_A\ket{u_i'}_B$ and $\ket{\psi}=\sum_i\sqrt{\beta_i}\ket{v_i}_A\ket{v'_i}_B$. Let $\tilde{\rho},\tilde{\sigma}$ be marginals of $\ket{\phi},\ket{\psi}$ on register $B$ respectively. Let $U$ be the unitary such that $\sqrt{\tilde{\rho}}\sqrt{\tilde{\sigma}}U$ is positive semidefinite (guaranteed by the polar decomposition). Then $\bra{\phi}\br{\id_A\otimes U}\ket{\psi}=\fidelity{\rho}{\sigma}$. In particular, if $\tilde{\rho},\tilde{\sigma}$ are classical-quantum states, then $U$ can be assumed to be a controlled isometry on classical register.
\end{fact}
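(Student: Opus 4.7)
The plan is to prove the standard Uhlmann statement first and then identify the particular optimizing unitary. First, I reduce the maximization over purifications of $\sigma$ on $\H_A \otimes \H_B$ to a maximization over unitaries on $\H_B$: since any two purifications of the same state on the same bipartite space differ by a partial isometry on the purifying register, every purification of $\sigma$ can be written as $(\id_A \otimes W)\ket{\psi}$ for a partial isometry $W$ on $\H_B$, and thus
\[
\max_{\ket{\chi}} |\braket{\phi}{\chi}| \;=\; \max_{W} |\bra{\phi}(\id_A \otimes W)\ket{\psi}|,
\]
the right-hand max being over partial isometries (equivalently, unitaries) on $\H_B$.

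Next, to evaluate this maximum I use the operator--state (``vectorization'') correspondence. Fixing a basis of $\H_B$ and letting $\ket{\Omega} \defeq \sum_i \ket{i}_A\ket{i}_B$ denote the unnormalized maximally entangled state, every pure state on $AB$ takes the form $(X \otimes \id_B)\ket{\Omega}$ with marginal $X X^\dagger$ on $A$. By polar decomposition the purifications $\ket{\phi}$ and $\ket{\psi}$ correspond respectively to $X = \sqrt{\rho}\,V_X$ and $Y = \sqrt{\sigma}\,V_Y$ for isometries $V_X, V_Y$. Using the standard identity $\bra{\Omega}(A\otimes B)\ket{\Omega} = \Tr(A B^T)$, a direct computation gives
\[
\bra{\phi}(\id_A \otimes W)\ket{\psi} \;=\; \Tr\br{V_X^\dagger \sqrt{\rho}\sqrt{\sigma}\, V_Y\, W^T}.
\]
Applying the variational characterization $\onenorm{M} = \max_{U \text{ unitary}} |\Tr(M U)|$ of the trace norm, and using that $V_X, V_Y$ are isometries, yields $\max_W |\bra{\phi}(\id_A\otimes W)\ket{\psi}| = \onenorm{\sqrt{\rho}\sqrt{\sigma}} = \fidelity{\rho}{\sigma}$, which establishes the first statement.

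For the specific form of $U$: the variational trace-norm bound is saturated precisely when the unitary we right-multiply by is the adjoint of the partial-isometric factor in the polar decomposition of $M \defeq V_X^\dagger \sqrt{\rho}\sqrt{\sigma}\,V_Y$, i.e.\ the unitary bringing $M$ to $|M|$. Translating back through the Schmidt decompositions of $\ket{\phi}$ and $\ket{\psi}$, which prescribe $V_X$ and $V_Y$ as the natural identifications $\ket{u_i}\!\leftrightarrow\!\ket{u_i'}$ and $\ket{v_j}\!\leftrightarrow\!\ket{v_j'}$ between eigenbases of $\rho,\sigma$ on $A$ and Schmidt bases of $\tilde{\rho},\tilde{\sigma}$ on $B$, I would then verify by direct computation that the optimizer on $\H_B$ is precisely the unitary $U$ for which $\sqrt{\tilde{\rho}}\sqrt{\tilde{\sigma}}\,U$ is positive semidefinite, as stated. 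I expect this bookkeeping---tracking the various unitary changes of basis so that the $B$-side polar factor of $\sqrt{\tilde{\rho}}\sqrt{\tilde{\sigma}}$ coincides, up to complex conjugation in the appropriate basis, with the $A$-side polar factor of $M$---to be the main technical step.

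Finally, for the classical--quantum refinement: if $\tilde{\rho}$ and $\tilde{\sigma}$ are both classical--quantum on a common classical register, they are block diagonal in the classical basis, and hence so is $\sqrt{\tilde{\rho}}\sqrt{\tilde{\sigma}}$. The polar decomposition of a block-diagonal operator factors block by block, so the optimizing $U$ may be chosen block diagonal in the classical basis, which is precisely a controlled isometry on the classical register.
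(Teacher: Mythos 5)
The paper states this as a background Fact in the preliminaries and gives no proof of it, so I am judging your argument on its own merits. Your first step --- reducing the maximization over purifications to a maximization over unitaries on $\H_B$, vectorizing, and invoking $\onenorm{M}=\max_U\abs{\Tr(MU)}$ --- is a complete and correct proof of the first sentence (textbook Uhlmann), and your final paragraph on block-diagonality correctly yields the classical--quantum refinement \emph{given} the middle claim.

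The gap is precisely the middle claim, which you explicitly defer as ``bookkeeping'': identifying the optimizer with the polar unitary of $\sqrt{\tilde{\rho}}\sqrt{\tilde{\sigma}}$. This is not a formality, and for arbitrary Schmidt decompositions it cannot be pushed through. Writing both sides out,
\[
\bra{\phi}\br{\id_A\otimes U}\ket{\psi}=\sum_{i,j}\sqrt{\alpha_i\beta_j}\,\braket{u_i}{v_j}\,\bra{u_i'}U\ket{v_j'},
\qquad
\sqrt{\tilde{\rho}}\sqrt{\tilde{\sigma}}=\sum_{i,j}\sqrt{\alpha_i\beta_j}\,\braket{u_i'}{v_j'}\,\ket{u_i'}\bra{v_j'},
\]
so the quantity to be evaluated depends on the $A$-side overlaps $\braket{u_i}{v_j}$, whereas the proposed $U$ only sees the $B$-side overlaps $\braket{u_i'}{v_j'}$; for general purifications these are unrelated. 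Concretely, take $\rho=\sigma=\id/2$ on a qubit, $\ket{\phi}=\tfrac{1}{\sqrt{2}}\br{\ket{0}\ket{0}+\ket{1}\ket{1}}$ and $\ket{\psi}=\tfrac{1}{\sqrt{2}}\br{\ket{0}\ket{+}+\ket{1}\ket{-}}$: both are legitimate Schmidt decompositions, $\sqrt{\tilde{\rho}}\sqrt{\tilde{\sigma}}=\id/2$ forces $U=\id$, yet $\braket{\phi}{\psi}=0$ while $\fidelity{\rho}{\sigma}=1$. So the ``direct computation'' you postpone is exactly where the content lies: one must either restrict to canonical purifications (as the paper does wherever it invokes this Fact, so that $\ket{u_i'},\ket{v_j'}$ are copies of $\ket{u_i},\ket{v_j}$ in a fixed basis and the $B$-side operator is the transpose of the $A$-side one) or build in the complex conjugation you allude to as an explicit hypothesis. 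With that compatibility between the two Schmidt bases made explicit, your plan goes through; as written, the key identity is asserted rather than proved, and is false at the stated level of generality.
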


\begin{definition}\label{def:entropy}
The {\em entropy} of a quantum state $\rho$ (in register $X$) is defined as $\mathrm{S}(\rho) \defeq - \Tr\rho\log\rho.$
We also let $\mathrm{S}\br{X}_{\rho}$ represent $\mathrm{S}(\rho)$.
\end{definition}
\begin{definition}\label{def:relative entropy}
The {\em relative entropy} between quantum states $\rho$ and $\sigma$ is defined as $\relent{\rho}{\sigma} \defeq \Tr\rho\log\rho-\Tr\rho\log\sigma.$
\end{definition}
\begin{definition}\label{def:mutual entropy}
Let $\rho_{XY}$ be a quantum state in space $\H_X\otimes\H_Y$. The {\em mutual information} between registers $X$ and $Y$ is defined to be
\begin{align*}
\mutinf{X}{Y}_{\rho} &\defeq
\mathrm{S}\br{X}_{\rho}+\mathrm{S}\br{Y}_{\rho}-\mathrm{S}\br{XY}_{\rho}.
\end{align*}
\end{definition}
It holds that $\mutinf{X}{Y}_{\rho}
= \relent{\rho_{XY}}{\rho_{X} \otimes \rho_{Y}}$.

If $X$ is a classical register, namely
$\rho_{XY} = \sum_x \mu(x) \ketbra{x} \otimes \rho^x_Y$,
where $\mu$ is a probability distribution over $X$, then
\begin{align*}
\mutinf{X}{Y}_{\rho}
&= \mathrm{S}\br{Y}_{\rho}-\mathrm{S}\br{Y|X}_{\rho} \\
&= \mathrm{S}\br{\sum_x\mu(x)\rho^x_Y}-\sum_x\mu(x)\mathrm{S}\br{\rho^x_Y}
\end{align*}
where the {\em conditional entropy} is defined as
\[\mathrm{S}(Y|X)_{\rho}\defeq\expec{x\leftarrow\mu}{\mathrm{S}(\rho^x_Y)}.\]
For bipartite quantum state $\rho_{XY}$, $\mathrm{S}\br{XY}_{\rho}-\mathrm{S}\br{X}_{\rho}$ is not always nonnegative. For instance, $\mathrm{S}\br{XY}_{\rho}-\mathrm{S}\br{X}_{\rho}=-|X|$ if $\rho_{XY}$ is an EPR-state.

\begin{fact}\label{fac:arakilieb}\cite{ArakiL:1970}
	Given a bipartite state $\rho_{AB}$, it holds that
	\[\abs{\mathrm{S}\br{A}_{\rho}-\mathrm{S}\br{B}_{\rho}}\leq\mathrm{S}\br{AB}_{\rho}\leq\mathrm{S}\br{A}_{\rho}+\mathrm{S}\br{B}_{\rho}.\]
\end{fact}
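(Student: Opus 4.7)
The plan is to establish the two inequalities separately, with the subadditivity bound (the right-hand inequality) coming first and then bootstrapping it via purification into the triangle inequality (the left-hand side).

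For the upper bound $\mathrm{S}(AB)_{\rho} \leq \mathrm{S}(A)_{\rho} + \mathrm{S}(B)_{\rho}$, I would invoke the non-negativity of quantum relative entropy (Klein's inequality), which should already be available either as a standard fact or as an immediate consequence of the operator Jensen inequality applied to $-\log$. Applying this to the pair $\rho_{AB}$ and $\rho_A \otimes \rho_B$ gives
\[
0 \le \mathrm{D}\br{\rho_{AB}\middle\|\rho_A \otimes \rho_B} = -\mathrm{S}(AB)_{\rho} + \mathrm{S}(A)_{\rho} + \mathrm{S}(B)_{\rho},
\]
using that $\Tr\br{\rho_{AB}(\log \rho_A \otimes \id_B + \id_A \otimes \log \rho_B)} = -\mathrm{S}(A)_{\rho} - \mathrm{S}(B)_{\rho}$ since the marginals of $\rho_{AB}$ are $\rho_A$ and $\rho_B$. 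This is just the statement $\mutinf{A}{B}_{\rho} \ge 0$, written out via Definition~\ref{def:mutual entropy}.

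For the lower bound, I would introduce a purification $\ket{\psi}_{ABC}$ of $\rho_{AB}$, whose existence follows from Definition~\ref{def:purification}. Writing a Schmidt decomposition of $\ket{\psi}$ across the bipartition $AB$ versus $C$, and similarly across $A$ versus $BC$ and $B$ versus $AC$, one reads off the identities
\[
\mathrm{S}(AB)_{\psi} = \mathrm{S}(C)_{\psi}, \qquad \mathrm{S}(AC)_{\psi} = \mathrm{S}(B)_{\psi}, \qquad \mathrm{S}(BC)_{\psi} = \mathrm{S}(A)_{\psi},
\]
because the two marginals of a pure state across any cut have the same nonzero spectrum. Applying the already-established subadditivity to the pair of registers $B,C$ gives $\mathrm{S}(BC)_{\psi} \le \mathrm{S}(B)_{\psi} + \mathrm{S}(C)_{\psi}$, i.e.\ $\mathrm{S}(A)_{\rho} \le \mathrm{S}(B)_{\rho} + \mathrm{S}(AB)_{\rho}$, which rearranges to $\mathrm{S}(AB)_{\rho} \ge \mathrm{S}(A)_{\rho} - \mathrm{S}(B)_{\rho}$. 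Swapping the roles of $A$ and $B$ yields the inequality with the sign reversed, and combining the two gives $\mathrm{S}(AB)_{\rho} \ge |\mathrm{S}(A)_{\rho} - \mathrm{S}(B)_{\rho}|$.

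There is no serious obstacle here: the whole argument rests on Klein's inequality plus the spectral symmetry of pure-state marginals under complementary cuts, both of which are standard. The only minor care needed is to make sure the purifying system $C$ is introduced cleanly, so that the three Schmidt identities can be stated without confusion between $\rho$-subscripts and $\psi$-subscripts; once that is set up, the proof is a two-line rearrangement.
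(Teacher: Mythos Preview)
Your proof is correct and is in fact the standard Araki--Lieb argument: subadditivity from non-negativity of relative entropy, then the triangle inequality by purifying and applying subadditivity across a complementary cut. The paper itself does not prove this statement at all---it is recorded as a cited fact from~\cite{ArakiL:1970}---so there is nothing to compare against beyond noting that your argument matches the original one.
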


Let $\rho_{XYZ}$ be a quantum state with $Y$ being a classical register.
The mutual information between $X$ and $Z$, conditioned on
$Y$, is defined as
\begin{align*}
\condmutinf{X}{Z}{Y}_{\rho} &\defeq
\expec{y \leftarrow Y}{\condmutinf{X}{Z}{Y=y}_{\rho}} \\
&= \mathrm{S}\br{X|Y}_{\rho} +
\mathrm{S}\br{Z|Y}_{\rho} -
\mathrm{S}\br{XZ|Y}_{\rho} .
\end{align*}
The following {\em chain rule} for mutual information follows
easily from the definitions, when $Y$ is a classical register.
\[ \mutinf{X}{YZ}_{\rho} = \mutinf{X}{Y}_{\rho} + \condmutinf{X}{Z}{Y}_{\rho} .\]
We will need the following basic facts.

\begin{fact}[\cite{Watrouslecturenote,Jain:2003:LBB:946243.946331}]
	\label{fact:pinkser}
	For quantum states $\rho$ and $\sigma$, it holds that
	\[ \onenorm{\rho-\sigma} \leq \sqrt{\relent{\rho}{\sigma}} \quad \text{ and } \quad  1-\fidelity{\rho}{\sigma}=h^2\br{\rho,\sigma}\leq\relent{\rho}{\sigma} . \]
\end{fact}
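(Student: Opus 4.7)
The plan is to prove both inequalities by reducing them to their classical analogs via the data-processing inequality for relative entropy, which is already implicit in Fact~\ref{fac:monotonequantumoperation}. The common template is: (i) choose a measurement under which the quantity on the left-hand side is preserved, (ii) use monotonicity of relative entropy under that measurement to upper bound the induced classical divergence by $\relent{\rho}{\sigma}$, and (iii) apply the corresponding classical inequality.

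For the trace-distance bound, I would use the projective measurement $\{P_+, P_-\}$ onto the positive and non-positive spectral subspaces of the Hermitian operator $\rho-\sigma$. Setting $p_{\pm} = \Tr(P_{\pm}\rho)$ and $q_{\pm} = \Tr(P_{\pm}\sigma)$, the Holevo--Helstrom identity gives $\onenorm{p-q} = \onenorm{\rho-\sigma}$, while data processing yields $\relent{p}{q} \leq \relent{\rho}{\sigma}$. The classical Pinsker inequality applied to the two-point distributions $p,q$ then delivers the claim (up to the constant convention built into the log base).

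For the fidelity/Hellinger bound, I would invoke the Petz Renyi-$1/2$ divergence $D_{1/2}(\rho\|\sigma) \defeq -2\log \Tr(\sqrt{\rho}\sqrt{\sigma})$ and chain
\[ h^2(\rho,\sigma) \;=\; 1 - \fidelity{\rho}{\sigma} \;\leq\; -\log \fidelity{\rho}{\sigma} \;\leq\; \tfrac{1}{2}D_{1/2}(\rho\|\sigma) \;\leq\; \tfrac{1}{2}\relent{\rho}{\sigma}, \]
using the elementary estimate $1-x \leq -\log x$ on $(0,1]$, the inequality $\Tr(\sqrt{\rho}\sqrt{\sigma}) \leq \onenorm{\sqrt{\rho}\sqrt{\sigma}} = \fidelity{\rho}{\sigma}$, and the monotonicity $D_{1/2} \leq D$ of Renyi divergences in the order. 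An alternative is to invoke the Fuchs--Caves theorem to obtain a POVM $\{M_x\}$ achieving $\fidelity{\rho}{\sigma} = \sum_x \sqrt{\Tr(M_x\rho)\Tr(M_x\sigma)}$, reducing directly to the classical statement $1-\sum_x\sqrt{p_xq_x} \leq \relent{p}{q}$, which follows from the log-sum inequality.

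The main obstacle in the Renyi route is justifying $D_{1/2} \leq D$ for non-commuting $\rho,\sigma$; this requires operator-convex function arguments (Jensen's inequality applied carefully to $-\log$ on positive operators) or an appeal to the Lieb--Ando concavity theorem. The Fuchs--Caves route sidesteps this by pushing all non-commutativity into the (known) existence of the optimal measurement and then only invoking the purely classical version of the statement. Either way, the sole non-trivial quantum ingredient is the monotonicity of relative entropy under CPTP maps, which is the Lindblad--Uhlmann inequality and is precisely the content the paper is already using as a black box.
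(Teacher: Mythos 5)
The paper offers no proof of this statement --- it is imported as a known fact from the cited references --- so there is nothing internal to compare against; your argument has to stand on its own, and it does. It is the standard derivation: for the trace-norm bound, the Holevo--Helstrom two-outcome measurement preserves the trace distance, monotonicity of relative entropy under measurement reduces everything to two-point distributions, and classical Pinsker finishes. For the Hellinger bound both of your routes work; the Fuchs--Caves route is the cleaner one, since it confines all non-commutativity to the existence of the fidelity-achieving POVM, whereas the R\'enyi route needs $D_{1/2}\le D$ for the Petz divergence, which is true (convexity of $\alpha\mapsto\log\Tr\br{\rho^{\alpha}\sigma^{1-\alpha}}$ plus a chord-slope argument) but is a nontrivial lemma you would have to prove rather than merely ``justify carefully.'' One point you should state more forcefully: your first chain yields $\onenorm{\rho-\sigma}\le\sqrt{\br{2\ln 2}\relent{\rho}{\sigma}}$, and the constant is not a matter of convention --- the constant-free form printed in the Fact is actually false as stated. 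For $\rho=\ketbra{0}$ and $\sigma=\frac14\ketbra{0}+\frac34\ketbra{1}$ one has $\onenorm{\rho-\sigma}=\frac32$ while $\sqrt{\relent{\rho}{\sigma}}=\sqrt{2}$, in either log base. This does not propagate into the paper, which only uses the Hellinger half of the Fact downstream, but your proof correctly cannot (and should not) recover the first inequality exactly as written.
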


\begin{fact} \label{fact:mutinf is min}
	For quantum states $\rho_{XY}$, $\sigma_{X}$,
	and $\tau_{Y}$,
	it holds that
	\[ \relent{\rho_{XY}}{\sigma_{X} \otimes \tau_{Y}}
	\geq \relent{\rho_{XY}}{\rho_{X}\otimes \rho_{Y}}=\mutinf{X}{Y}_{\rho}. \]
	Combing with Fact~\ref{fact:pinkser}, it holds that
	\[h\br{\rho_{XY},\rho_X\otimes\rho_Y}\leq\sqrt{\mutinf{X}{Y}_{\rho}}.\]
\end{fact}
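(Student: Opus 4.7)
The plan is to establish the sharper identity
\[
\relent{\rho_{XY}}{\sigma_X \otimes \tau_Y} \;=\; \mutinf{X}{Y}_\rho + \relent{\rho_X}{\sigma_X} + \relent{\rho_Y}{\tau_Y},
\]
from which the claimed inequality follows immediately by non-negativity of relative entropy (Klein's inequality). This is a direct computation: since $\log(\sigma_X \otimes \tau_Y) = \log \sigma_X \otimes \id_Y + \id_X \otimes \log \tau_Y$, expanding the definition of relative entropy gives
\[
\relent{\rho_{XY}}{\sigma_X \otimes \tau_Y} \;=\; \Tr(\rho_{XY}\log\rho_{XY}) - \Tr(\rho_X\log\sigma_X) - \Tr(\rho_Y\log\tau_Y),
\]
where I have used that taking the partial trace over the un-acted-upon register turns $\rho_{XY}$ into $\rho_X$ or $\rho_Y$ as appropriate. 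Applying the exact same computation with $\sigma_X = \rho_X$ and $\tau_Y = \rho_Y$ yields $\relent{\rho_{XY}}{\rho_X \otimes \rho_Y} = \Tr(\rho_{XY}\log\rho_{XY}) - \Tr(\rho_X\log\rho_X) - \Tr(\rho_Y\log\rho_Y)$, which by Definition \ref{def:mutual entropy} equals $\mutinf{X}{Y}_\rho$. Subtracting these two expressions cancels the $\Tr(\rho_{XY}\log\rho_{XY})$ term and leaves exactly $\relent{\rho_X}{\sigma_X} + \relent{\rho_Y}{\tau_Y}$, proving the identity.

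The inequality $\relent{\rho_{XY}}{\sigma_X \otimes \tau_Y} \geq \mutinf{X}{Y}_\rho$ then follows from non-negativity of the two extra relative-entropy terms. For the Hellinger-distance statement, I would simply specialize $\sigma_X = \rho_X$ and $\tau_Y = \rho_Y$ in Fact \ref{fact:pinkser}, which gives $h^2(\rho_{XY}, \rho_X \otimes \rho_Y) \leq \relent{\rho_{XY}}{\rho_X \otimes \rho_Y} = \mutinf{X}{Y}_\rho$, and take square roots.

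There is no real obstacle here: the only subtlety is making sure the partial traces on the $\log(\sigma_X\otimes\tau_Y)$ term are carried out correctly, which is the standard manipulation one would use for, e.g., proving the chain rule for relative entropy. The result is essentially the quantum analog of the classical fact that $\mu$ minimizes $D(\mu \| \cdot)$ among product distributions when those product marginals are free to vary, and the proof strategy is identical to the classical one, with traces replacing sums.
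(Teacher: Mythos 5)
Your proof is correct: the identity $\relent{\rho_{XY}}{\sigma_X\otimes\tau_Y}=\mutinf{X}{Y}_{\rho}+\relent{\rho_X}{\sigma_X}+\relent{\rho_Y}{\tau_Y}$ is the standard way to establish this fact, and the paper states it without proof, so your argument is exactly what is intended. The only point worth adding is the degenerate case $\mathrm{supp}(\rho_{XY})\not\subseteq\mathrm{supp}(\sigma_X\otimes\tau_Y)$, where the left-hand side is $+\infty$ and the inequality holds trivially, so the expansion of $\log(\sigma_X\otimes\tau_Y)$ is only needed on the support.
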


\begin{fact}
	\label{fact:subsystem monotone}
	Let $\rho$ and $\sigma$ be quantum states and $\E\br{\cdot}$ be a quantum channel. Then it holds that $$\relent{\rho}{\sigma} \geq \relent{\E\br{\rho}}{\E\br{\sigma}}.$$ Moreover, given a bipartite quantum state $\rho_{XY}$, let $\E^{Y\rightarrow Z}\br{\cdot}$ be a quantum operation on $Y$. Combining with Fact~\ref{fact:mutinf is min}, we find that
	\[\mutinf{X}{Y}_{\rho}\geq\mutinf{X}{Z}_{\E\br{\rho}}.\]
	If $\E\br{\cdot}$ is an isometry, then
	\[\mutinf{X}{Y}_{\rho}=\mutinf{X}{Z}_{\E\br{\rho}}.\]
\end{fact}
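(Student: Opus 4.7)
The plan is to establish the data processing inequality (DPI) for relative entropy via the standard Stinespring dilation argument, reducing the main inequality to monotonicity under partial trace, and then specialize to the mutual information statements. The bound $\relent{\rho}{\sigma} \geq \relent{\E\br{\rho}}{\E\br{\sigma}}$ is the Lindblad--Uhlmann theorem, and the cleanest derivation goes through strong subadditivity of the von Neumann entropy (Lieb--Ruskai), which I would take as a black box since its proof rests on Lieb's concavity theorem and is orthogonal to the rest of this work; alternatively, one could invoke joint convexity of the relative entropy together with unitary invariance.

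First I would establish the partial-trace special case: for any bipartite $\rho_{AB},\sigma_{AB}$, one has $\relent{\rho_{AB}}{\sigma_{AB}} \geq \relent{\rho_{A}}{\sigma_{A}}$. This is equivalent to strong subadditivity applied to a canonical purification of $\sigma_{AB}$ on a reference $R$ (a standard reduction). Second, I would extend to arbitrary CPTP maps via Stinespring's dilation: write $\E\br{\tau} = \Tr_{E}\br{V\tau V^{\dag}}$ for some isometry $V : \H_Y \to \H_Z \otimes \H_E$. Since $V^{\dag}V = \id$, conjugation by $V$ preserves both $\Tr\rho\log\rho$ and $\Tr\rho\log\sigma$ on the support, giving $\relent{\rho}{\sigma} = \relent{V\rho V^{\dag}}{V\sigma V^{\dag}}$. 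Tracing out $E$ and invoking the partial-trace case then yields $\relent{V\rho V^{\dag}}{V\sigma V^{\dag}} \geq \relent{\E\br{\rho}}{\E\br{\sigma}}$, completing the DPI.

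For the mutual information statement, I would apply DPI with the channel $\mathrm{id}_{X} \otimes \E^{Y \to Z}$ and second argument $\rho_{X} \otimes \rho_{Y}$, which is mapped to $\rho_{X} \otimes \E\br{\rho_{Y}}$. This yields
\[
\mutinf{X}{Y}_{\rho} = \relent{\rho_{XY}}{\rho_{X} \otimes \rho_{Y}} \geq \relent{\br{\mathrm{id} \otimes \E}\br{\rho_{XY}}}{\rho_{X} \otimes \E\br{\rho_{Y}}},
\]
and Fact~\ref{fact:mutinf is min} lower-bounds the right-hand side by $\mutinf{X}{Z}_{\E\br{\rho}}$. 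For the isometry case, $\E = V\br{\cdot}V^{\dag}$ admits a CPTP left-inverse (apply $V^{\dag}\br{\cdot}V$ followed by projection onto the image, which acts as the identity on $\E\br{\rho_{XY}}$), so applying the same argument in reverse produces the matching upper bound $\mutinf{X}{Z}_{\E\br{\rho}} \geq \mutinf{X}{Y}_{\rho}$.

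The main obstacle is really the partial-trace monotonicity itself, which is historically a deep result; everything downstream is bookkeeping and direct substitution into the definitions. Since this paper's contributions lie elsewhere, the natural presentation is simply to cite Lindblad--Uhlmann (or a textbook reference such as Wilde or Nielsen--Chuang) for the channel version and treat it as a black box, rather than reprove Lieb's concavity theorem here.
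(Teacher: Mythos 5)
Your proposal is correct and is consistent with how the paper treats this statement: the paper states it as a standard fact without proof (the Lindblad--Uhlmann monotonicity theorem), and your derivation --- black-boxing the deep core (joint convexity / strong subadditivity), reducing the channel case to partial trace via Stinespring, and obtaining the mutual-information consequences from Fact~\ref{fact:mutinf is min} plus a CPTP left-inverse for the isometry case --- is the standard textbook route. All the bookkeeping steps check out, so nothing further is needed.
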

\begin{fact}\label{fact:dataprocessing}(Data-processing inequality). Given a tripartite quantum state $\rho_{XYB}$, where $XY$ are classical registers, with the property that $X$ is determined by $Y$, that is , $\mathrm{S}\br{X|Y}_{\rho}=0$. Then
		\[\mutinf{X}{B}_{\rho}\leq\mutinf{Y}{B}_{\rho}.\]
\end{fact}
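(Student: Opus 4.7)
The plan is to prove the data-processing inequality by constructing an explicit quantum channel $\E^{Y \to X}$ under which $\rho_{YB}$ is sent to $\rho_{XB}$, and then invoking the monotonicity of mutual information under quantum operations from Fact~\ref{fact:subsystem monotone}. The starting point is to unpack the hypothesis $\mathrm{S}(X \mid Y)_\rho = 0$. Since $X$ and $Y$ are both classical, we may write
\begin{equation*}
\rho_{XYB} = \sum_{x,y} p(x,y)\, \ketbra{x}_X \otimes \ketbra{y}_Y \otimes \rho^{x,y}_B,
\end{equation*}
and the vanishing of the conditional entropy of a classical register given another forces $p(x \mid y) \in \{0,1\}$ for every $y$ with $p(y) > 0$. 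In particular, there exists a function $f$ on the support of $Y$ such that $X = f(Y)$ with probability one, and so $\rho_{XYB} = \sum_y p(y) \ketbra{f(y)}_X \otimes \ketbra{y}_Y \otimes \rho^y_B$.

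Next, I define the CPTP map $\E^{Y \to X}$ in the obvious way: on basis states it acts as $\ketbra{y}_Y \mapsto \ketbra{f(y)}_X$, extended to all of $\D_Y$ by first dephasing in the $\{\ket{y}\}$ basis (this is a legitimate channel since it can be realized by measuring in the computational basis and preparing $\ket{f(y)}$). Applying $\E^{Y\to X}$ tensored with the identity on $B$ to the marginal $\rho_{YB} = \sum_y p(y) \ketbra{y}_Y \otimes \rho^y_B$ yields exactly
\begin{equation*}
(\E^{Y\to X} \otimes \id_B)(\rho_{YB}) = \sum_y p(y)\, \ketbra{f(y)}_X \otimes \rho^y_B = \rho_{XB}.
\end{equation*}
The conclusion then follows immediately from Fact~\ref{fact:subsystem monotone}: since mutual information is monotone under the action of a quantum channel on one of the registers, we get $\mutinf{Y}{B}_\rho \geq \mutinf{X}{B}_{\E(\rho)} = \mutinf{X}{B}_\rho$.

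There is no real obstacle here; the only genuine content is the step extracting $X = f(Y)$ from $\mathrm{S}(X\mid Y)_\rho = 0$, which is a purely classical probabilistic fact. Everything else reduces to the observation that, for classical registers, ``applying a function'' is literally a quantum channel, so monotonicity of mutual information under quantum operations is the right hammer. One could alternatively prove it by a direct two-line calculation using the chain rule $\mutinf{XY}{B}_\rho = \mutinf{Y}{B}_\rho + \condmutinf{X}{B}{Y}_\rho = \mutinf{X}{B}_\rho + \condmutinf{Y}{B}{X}_\rho$ together with $\condmutinf{X}{B}{Y}_\rho = 0$ (since $X$ is determined by $Y$) and nonnegativity of $\condmutinf{Y}{B}{X}_\rho$, but the channel-based argument is cleaner and fits the framework already developed in Fact~\ref{fact:subsystem monotone}.
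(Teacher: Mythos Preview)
Your proof is correct. The paper states this as a standard fact without proof, so there is no argument to compare against; your channel-based derivation via Fact~\ref{fact:subsystem monotone} (and the alternative chain-rule argument you sketch) are both valid and standard ways to establish it.
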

\begin{fact}~\cite{Lieb:1973,LiebR:1973}(\textbf{Strong subadditivity theorem})\label{fact:strongsubadditivity}
	For any tripartite quantum state $\rho_{ABC}$, it holds that $\condmutinf{A}{C}{B}_{\rho}\geq0$.	
\end{fact}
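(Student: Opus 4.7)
The plan is to derive the strong subadditivity inequality $\condmutinf{A}{C}{B}_\rho \geq 0$ directly from the monotonicity of relative entropy under partial trace, which is one of the statements packaged into Fact~\ref{fact:subsystem monotone}. This is the standard Lindblad-style argument, and once that monotonicity is taken as given the deduction is essentially a single application plus an algebraic rewriting.

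First I would rewrite the conditional mutual information as the difference of two unconditional mutual informations. By Definition~\ref{def:mutual entropy} and direct algebraic cancellation of the $\mathrm{S}(A)_\rho$ terms,
\[
\condmutinf{A}{C}{B}_\rho = \mathrm{S}(AB)_\rho + \mathrm{S}(BC)_\rho - \mathrm{S}(B)_\rho - \mathrm{S}(ABC)_\rho = \mutinf{A}{BC}_\rho - \mutinf{A}{B}_\rho,
\]
so it suffices to establish $\mutinf{A}{BC}_\rho \geq \mutinf{A}{B}_\rho$. Using the identity $\mutinf{X}{Y}_\rho = \relent{\rho_{XY}}{\rho_X \otimes \rho_Y}$ stated just after Definition~\ref{def:mutual entropy}, this reduces to showing
\[
\relent{\rho_{ABC}}{\rho_A \otimes \rho_{BC}} \geq \relent{\rho_{AB}}{\rho_A \otimes \rho_B}.
\]
I would then apply Fact~\ref{fact:subsystem monotone} with the CPTP map $\mathcal{E}(\cdot) = \partrace{C}{\cdot}$, noting that $\mathcal{E}(\rho_{ABC}) = \rho_{AB}$ and $\mathcal{E}(\rho_A \otimes \rho_{BC}) = \rho_A \otimes \rho_B$ (the latter because the partial trace distributes over a tensor product). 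The monotonicity bound then yields exactly the inequality above, and hence $\condmutinf{A}{C}{B}_\rho \geq 0$.

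The main obstacle, in a broader sense, is that this proof is really an exercise in organizing definitions: the genuine work has been imported into Fact~\ref{fact:subsystem monotone}. Strong subadditivity and monotonicity of relative entropy under CPTP maps are well known to be equivalent, and proving either from scratch (for instance via the Lindblad--Uhlmann approach) ultimately relies on Lieb's concavity theorem for the trace function $(A,B)\mapsto \mathrm{Tr}(K^{*} A^{t} K B^{1-t})$. For the purposes of this paper, however, Fact~\ref{fact:subsystem monotone} is taken as a black box, and the short algebraic bridge above is all that remains.
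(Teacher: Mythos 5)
Your derivation is correct: the algebra rewriting $\condmutinf{A}{C}{B}_{\rho}$ as $\mutinf{A}{BC}_{\rho}-\mutinf{A}{B}_{\rho}$ checks out, and applying Fact~\ref{fact:subsystem monotone} with $\E = \partrace{C}{\cdot}$ to $\relent{\rho_{ABC}}{\rho_A\otimes\rho_{BC}}$ gives exactly $\relent{\rho_{AB}}{\rho_A\otimes\rho_B}$, since partial trace is CPTP and factors through the tensor product. Note, however, that the paper does not prove this fact at all --- it is stated as a citation to Lieb and Lieb--Ruskai, i.e.\ taken as a primitive alongside Fact~\ref{fact:subsystem monotone} --- so there is no ``paper's proof'' to compare against. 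What your argument buys is a clean reduction showing that, within the paper's list of black-boxed facts, strong subadditivity is redundant given monotonicity of relative entropy; what it does not buy is any independent evidence for either statement, since (as you correctly flag) the two are equivalent and the genuine analytic content (Lieb's concavity theorem, or an equivalent) lives inside whichever one is taken as the axiom. One minor point of care: the paper's displayed definition of $\condmutinf{X}{Z}{Y}$ is given for classical conditioning register $Y$, whereas Fact~\ref{fact:strongsubadditivity} is asserted for arbitrary tripartite states; your entropic formula $\mathrm{S}(AB)+\mathrm{S}(BC)-\mathrm{S}(B)-\mathrm{S}(ABC)$ is the right general definition and is what the fact implicitly uses, but it is worth saying explicitly that this is the definition being adopted in the fully quantum case.
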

\begin{fact}\label{lem:boundoncmi}
	Given a tripartite state $\rho_{ABC}$, it holds that
	$\condmutinf{A}{B}{C}_{\rho}\leq 2|B|.$
\end{fact}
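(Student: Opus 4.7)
The plan is to use the symmetric form of conditional mutual information together with the Araki--Lieb inequality (Fact~\ref{fac:arakilieb}) to bound each conditional entropy that appears in the expansion. The key observation is that while $\condmutinf{A}{B}{C}_{\rho}$ can be expanded in either an $A$-centric or $B$-centric way, we want a bound in terms of $|B|$, so we exploit the symmetry $\condmutinf{A}{B}{C}_{\rho} = \condmutinf{B}{A}{C}_{\rho}$ from the definition.

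First I would expand
\[
\condmutinf{A}{B}{C}_{\rho} = \mathrm{S}(B|C)_{\rho} - \mathrm{S}(B|AC)_{\rho},
\]
which is immediate from the definition together with the chain rule for conditional mutual information. Next I would invoke Fact~\ref{fac:arakilieb} twice: applied to the bipartite state $\rho_{BC}$ it yields
\[
|\mathrm{S}(B|C)_{\rho}| = |\mathrm{S}(BC)_{\rho} - \mathrm{S}(C)_{\rho}| \leq \mathrm{S}(B)_{\rho},
\]
and applied to the bipartite state $\rho_{B,AC}$ (grouping $AC$ as one system) it similarly gives $|\mathrm{S}(B|AC)_{\rho}| \leq \mathrm{S}(B)_{\rho}$. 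Combining these with the standard upper bound $\mathrm{S}(B)_{\rho} \leq \log \dim \H_B = |B|$ then yields
\[
\condmutinf{A}{B}{C}_{\rho} \leq \mathrm{S}(B|C)_{\rho} + |\mathrm{S}(B|AC)_{\rho}| \leq 2 \mathrm{S}(B)_{\rho} \leq 2|B|,
\]
which is the desired inequality.

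There is no real obstacle; this is essentially a two-step manipulation using only facts already recorded in the preliminaries. The only subtlety worth highlighting is that in the quantum setting individual conditional entropies may be negative for entangled states, which is precisely why Araki--Lieb (rather than a bare nonnegativity of $\mathrm{S}(B|C)$, as one would use classically) is the right tool, and why the resulting bound is $2|B|$ rather than the $|B|$ that would suffice over classical distributions.
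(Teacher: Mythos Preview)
Your proof is correct. The paper states this result as a Fact without supplying a proof, so there is no argument in the paper to compare against; your derivation via $\condmutinf{A}{B}{C}_{\rho} = \mathrm{S}(B|C)_{\rho} - \mathrm{S}(B|AC)_{\rho}$ together with both inequalities in Fact~\ref{fac:arakilieb} (subadditivity for the upper bound on $\mathrm{S}(B|C)$, Araki--Lieb proper for the lower bound on $\mathrm{S}(B|AC)$) is the standard one.
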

\begin{lemma}\label{lem:decoupling}
	Consider a tripartite pure state $\ket{\psi}_{ABC}$ which satisfies $\mutinf{A}{C}\leq\epsilon$. Then for any purifications $\ket{\psi_1}_{AB_1}$ and $\ket{\psi_2}_{B_2C}$ of $\psi_A$ and $\psi_C$,respectively, there exists an isometry $U$ mapping $\H_B$ to $\H_{B_1}\otimes\H_{B_2}$ such that
	\[\abs{\bra{\psi_1}\bra{\psi_2}\br{\id_{AC}\otimes U_B}\ket{\psi}}\geq1-\epsilon.\]
	Combining with Fact~\ref{fac:tracefidelityequi}
	we have
	\[h\br{\br{\id_{AC}\otimes U_B}\psi\br{\id_{AC}\otimes U^{\dagger}_B},\psi_1\otimes\psi_2}\leq\sqrt{\epsilon}.\]
\end{lemma}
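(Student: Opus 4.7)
The plan is to invoke two standard tools in succession: Pinsker's inequality to convert the mutual information bound into a fidelity bound between $\psi_{AC}$ and $\psi_A \otimes \psi_C$, and then Uhlmann's theorem to produce the isometry $U$ that optimally aligns the two given purifications.

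Concretely, I would first apply Fact~\ref{fact:pinkser} to the state $\psi_{AC}$, using $\mutinf{A}{C}_{\psi} = \relent{\psi_{AC}}{\psi_A \otimes \psi_C}$, to obtain
\[
1 - \fidelity{\psi_{AC}}{\psi_A \otimes \psi_C} \;=\; h^2(\psi_{AC},\psi_A \otimes \psi_C) \;\leq\; \mutinf{A}{C}_{\psi} \;\leq\; \epsilon,
\]
so that $\fidelity{\psi_{AC}}{\psi_A \otimes \psi_C} \geq 1 - \epsilon$. Next, I would observe that $\ket{\psi}_{ABC}$ is a purification of $\psi_{AC}$ with purifying system $B$, while $\ket{\psi_1}_{AB_1}\otimes\ket{\psi_2}_{B_2 C}$ is a purification of $\psi_A \otimes \psi_C$ with purifying system $B_1 B_2$ (its marginal on $AC$ is manifestly $\psi_A \otimes \psi_C$). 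Treating $AC$ as the common ``physical'' base, Uhlmann's theorem (Fact~\ref{fac:Uhlmann}) then supplies an isometry $U : \H_B \to \H_{B_1} \otimes \H_{B_2}$ achieving the fidelity:
\[
\abs{\bra{\psi_1}\bra{\psi_2}(\id_{AC} \otimes U_B)\ket{\psi}} \;=\; \fidelity{\psi_{AC}}{\psi_A \otimes \psi_C} \;\geq\; 1 - \epsilon.
\]

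The Hellinger-distance statement is then immediate: both $(\id_{AC}\otimes U_B)\ket{\psi}$ and $\ket{\psi_1}\otimes\ket{\psi_2}$ are pure, so by the pure-state form of Fact~\ref{fac:tracefidelityequi} the fidelity between them equals the modulus of their inner product, hence is at least $1-\epsilon$, and by definition $h = \sqrt{1 - F} \leq \sqrt{\epsilon}$ as required.

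The one mild subtlety I would flag is the dimension compatibility needed to cast Uhlmann's optimizer as an honest isometry from $\H_B$ into $\H_{B_1}\otimes\H_{B_2}$: if $\dim B > \dim B_1 \cdot \dim B_2$, one first enlarges $B_1$ (say) by a trivial ancilla in a pure state, which alters neither the marginal $\psi_A$ nor the fidelity, and then applies Uhlmann. This is a purely bookkeeping step and does not affect the bound. No other obstacle arises since each ingredient (Pinsker, Uhlmann, Fuchs--van de Graaf for pure states) is already recorded in the preliminaries.
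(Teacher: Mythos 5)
Your proof is correct and follows essentially the same route as the paper's: Pinsker's inequality (Fact~\ref{fact:pinkser}, together with $\mutinf{A}{C}=\relent{\psi_{AC}}{\psi_A\otimes\psi_C}$) gives $\fidelity{\psi_{AC}}{\psi_A\otimes\psi_C}\geq 1-\epsilon$, and then Uhlmann's theorem and the pure-state Fuchs--van de Graaf identity finish the argument. The dimension-padding remark is a harmless extra; the paper leaves it implicit.
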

\begin{proof}
	From Fact~\ref{fact:pinkser} and Fact~\ref{fact:mutinf is min}, we have
	\[\fidelity{\psi_{AC}}{\psi_A\otimes\psi_C}\geq1-\epsilon.\]
	The conclusion now follows from Uhlmann's theorem and the Fuchs-van de Graaf inequalities (Facts~\ref{fac:tracefidelityequi} and~\ref{fac:Uhlmann}).
	\end{proof}

We need the following fact for state distribution.

\begin{fact}\label{fac:onewaycompression}\cite{Jain:2005:PEM:1068502.1068658}
	Given a target quantum state $\rho_{XAB}=\sum_xp\br{x}\ketbra{x}_X\otimes\rho^x_{AB}$, where the input register $X$ is held by Alice. There exists a one-way quantum protocol where Alice sends $\O\br{\br{\mutinf{X}{B}_{\rho}+1}/\delta^2}$ qubits to Bob such that $\expec{x\leftarrow p}{h^2\br{\rho^x_{AB},\tilde{\rho}^x_{AB}}}\leq\delta^2,$ where $\tilde{\rho}^x_{AB}$ is the state shared between Alice and Bob at the end of the protocol when the input is $x$. \footnote{In~\cite{Jain:2005:PEM:1068502.1068658}, the theorem is stated in terms of $\ell_1$ distance and the proof uses the quantum substate theorem~\cite{Jain2002}. Later, Jain and Nayak~\cite{JainNayak:2012} provided a simpler proof for the quantum substate theorem with better dependence on the parameters. With the strengthened quantum substate theorem, it is easy to verify that the compression in~\cite{{Jain:2005:PEM:1068502.1068658}} also has better dependence on the parameters in terms of Hellinger distance as stated in Fact~\ref{fac:onewaycompression}.}
\end{fact}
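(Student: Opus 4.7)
The plan is to use a rejection-sampling compression protocol driven by the quantum substate theorem, following the Jain-Radhakrishnan-Sen paradigm. Let $\rho_B \defeq \sum_x p\br{x}\rho^x_B$ denote Bob's marginal state. I would have Alice and Bob share as prior entanglement $K$ purifications of $\rho_B$: Bob holds $K$ independent copies of $\rho_B$, and Alice holds the matching purifying registers. Intuitively, these copies play the role of Alice's candidate samples from the prior $\rho_B$, and she must identify the one that looks like the posterior $\rho^x_B$ corresponding to her input $x$.

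Given $x$, Alice performs a measurement on her purifying registers yielding an index $i \in [K]$, and sends $i$ to Bob at a cost of $\log K$ qubits; Bob keeps his $i$-th register and discards the rest. The quantum substate theorem in the sharpened form of~\cite{JainNayak:2012} guarantees that $\rho^x_B$ can be embedded as a sub-normalized state inside $\rho_B$ with multiplicative factor $2^{O\br{\relent{\rho^x_B}{\rho_B}/\delta^2}}$ and squared-Hellinger approximation error $O\br{\delta^2}$. Taking $K$ on this order, the rejection-sampling measurement succeeds with constant probability, and conditioned on success, after an Uhlmann isometry (Fact~\ref{fac:Uhlmann}) applied by Alice on her retained register, the joint state on Alice's retained register and Bob's $i$-th register is within squared Hellinger distance $O\br{\delta^2}$ of the target $\rho^x_{AB}$.

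To obtain a single fixed-length protocol, I would exploit the identity $\bigE_{x \leftarrow p}\Br{\relent{\rho^x_B}{\rho_B}} = \mutinf{X}{B}_\rho$ and set $K = 2^{O\br{\br{\mutinf{X}{B}_\rho + 1}/\delta^2}}$ uniformly in $x$. The inputs split into a typical set, on which $\relent{\rho^x_B}{\rho_B}$ is not much larger than its mean and the substate bound directly yields squared-Hellinger error $O\br{\delta^2}$, and an atypical set of $p$-measure $O\br{\delta^2}$ (identified by Markov's inequality) on which the protocol may fail entirely but contributes at most $\delta^2$ to the expected error. Averaging the per-input closeness via joint concavity of fidelity (Fact~\ref{fac:jointlylinearity}) yields $\expec{x \leftarrow p}{h^2\br{\rho^x_{AB},\tilde{\rho}^x_{AB}}} \leq \delta^2$.

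The main obstacle is extracting the $1/\delta^2$ scaling directly in squared Hellinger distance rather than in trace distance or via max-divergence, since most formulations of the quantum substate theorem are stated in $\ell_1$ or relative-entropy terms. This is exactly why the sharpened Hellinger-distance version from~\cite{JainNayak:2012} (noted in the footnote to the statement) is needed; careful bookkeeping of the Hellinger error through the success-conditioned post-measurement state, together with the tight multiplicative-overhead-versus-number-of-copies tradeoff, is what ties the argument together and produces exactly $O\br{\br{\mutinf{X}{B}_\rho + 1}/\delta^2}$ communicated qubits.
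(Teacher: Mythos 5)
The paper itself offers no proof of this fact: it is imported verbatim from the work of Jain, Radhakrishnan and Sen, with the footnote delegating the improved Hellinger-distance dependence to the strengthened substate theorem of Jain and Nayak. Your reconstruction follows exactly that route — shared purifications of $\rho_B$ as prior entanglement, a rejection-sampling measurement on Alice's purifying registers powered by the substate theorem, an Uhlmann isometry (Fact~\ref{fac:Uhlmann}) to recover the $A$ part, and the identity $\bigE_{x\leftarrow p}\Br{\relent{\rho^x_B}{\rho_B}}=\mutinf{X}{B}_\rho$ — so the architecture is the right one.

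The gap is in the averaging step, which as written does not deliver the claimed $\O\br{\br{\mutinf{X}{B}_\rho+1}/\delta^2}$. A single Markov threshold cannot simultaneously make the atypical set have measure $O\br{\delta^2}$ and keep $\relent{\rho^x_B}{\rho_B}$ ``not much larger than its mean'' on the typical set: Markov forces the threshold up to $\mutinf{X}{B}_\rho/\delta^2$, and then achieving per-input squared-Hellinger error $O\br{\delta^2}$ on the typical set via the substate theorem costs $\O\br{\br{\mutinf{X}{B}_\rho/\delta^2+1}/\delta^2}=\O\br{\mutinf{X}{B}_\rho/\delta^4+1/\delta^2}$ qubits, losing a factor of $1/\delta^2$. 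The standard repair is to drop the uniform error parameter: fix the communication $c$ and, for each $x$, invoke the substate theorem with its own error parameter $\epsilon_x=\Theta\br{\br{\relent{\rho^x_B}{\rho_B}+1}/c}$ (capped at $1$, where the per-input error is trivially at most $\epsilon_x$ anyway); then $\bigE_{x}\Br{h^2\br{\rho^x_{AB},\tilde{\rho}^x_{AB}}}\leq O\br{\bigE_x\Br{\epsilon_x}}=O\br{\br{\mutinf{X}{B}_\rho+1}/c}$ by linearity of expectation, and $c=\O\br{\br{\mutinf{X}{B}_\rho+1}/\delta^2}$ suffices. Relatedly, ``succeeds with constant probability'' is too weak: a constant failure probability would by itself contribute a constant to the expected error. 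In the actual argument the good-index event has probability $1-O\br{\epsilon_x}$ once the number of shared copies exceeds the substate factor, and this failure probability must be folded into the same per-input error budget rather than conditioned away.
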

\subsection{Models of Quantum Communication Complexity}\label{sec:model}

Quantum communication complexity was introduced by Yao in~\cite{Yao:1993}. It studies the advantages and limitations of the players who are allowed to exchange quantum messages to accomplish a communication task. Here we describe two models of quantum communication complexity as follows.

\subsection*{Yao's Model}

The model we use here is slightly different from the original one defined in Yao~\cite{Yao:1993}. It is closer to the one of Cleve and Buhrman~\cite{CleveB:1997}, with pre-shared entanglement, but we allow the players to communicate with quantum messages.
In this model, an $r$-round protocol $\Pi$ for a given classical task from input registers $A_{in} = X$, $B_{in} = Y$ to output registers $A_{out}$, $B_{out}$ is
defined by a sequence of isometries $U_1$, $\cdots$, $U_{r + 1}$ along with a
pure state $\psi \in \D (T_A^{in}  T_B^{in})$ shared between Alice and Bob,
for arbitrary finite dimensional registers $T_A^{in}$, $T_B^{in}$: the pre-shared entanglement.
We need $r+1$ isometries in order to have $r$ messages since a first isometry is applied before the first message
is sent and a last one after the final message is received.
In  the case of even $r$, for appropriate finite
dimensional quantum memory registers $A_1$, $A_3$, $\cdots$, $A_{r - 1}$, $A^\prime$ held by Alice, $B_2$, $B_4$, $\cdots$, $B_{r - 2}$, $B^\prime$
held by Bob, and quantum communication registers $C_1$, $C_2$, $C_3$, $\cdots$, $C_r$
exchanged by Alice and Bob, we have
$U_1 \in \U(A_{in}  T_A^{in}, A_1  C_1)$,
$U_2 \in \U(B_{in}  T_B^{in}  C_1, B_2  C_2)$,
$U_3 \in \U(A_1  C_2, A_3  C_3)$,
$U_4 \in \U(B_2  C_3, B_4  C_4)$, $\cdots$ ,
$U_{r} \in \U(B_{r - 2}  C_{r - 1},   B_{out}  B^\prime  C_{r})$,
$U_{r  + 1} \in \U(A_{r - 1}  C_r, A_{out}  A^\prime)$, where $\U(A, B)$ is the set of unitary channels from $\H_A$ to $\H_B$ :
see Figure~\ref{fig:int_mod}.
We adopt the convention that, at the outset, $A_0 = A_{in} T_A^{in}$, $B_0 = B_{in} T_B^{in}$,
for odd $i$ with $1 \leq i < r$, $B_i = B_{i-1}$, for even $i$ with $1 < i \leq r$, $A_i = A_{i-1}$ and also $B_r = B_{r + 1} = B_{out} B^\prime$, and $A_{r+1} = A_{out} A^\prime$. In this way, after application of $U_i$, Alice holds register $A_i$, Bob holds register $B_i$ and the communication register is $C_i$.
In the case of an odd number of messages $r$, the registers corresponding to $U_r$, $U_{r+1}$ are changed accordingly.
We slightly abuse notation and also write $\Pi$ to denote the channel from registers $ A_{in} B_{in}$ to $A_{out} B_{out}$ implemented by the protocol, i.e.~for any
input distribution $\mu$ on $XY$ and $\rho_\mu$ encoding $\mu$ on input registers $A_{in} B_{in}$,
\begin{align}
\Pi (\rho_\mu) =
\Tr_{A^\prime B^\prime }{U_{r  + 1} U_r \cdots U_2 U_1 (\rho_\mu \otimes \psi)}.
\end{align}

Note that the $ A^\prime $ and $ B^\prime $ registers are the final memory registers that are being discarded at the end of the protocol by Alice and Bob, respectively.

Recall that for a given state,  all purifications are related by isometries on the purification registers. For classical input registers $XY$ distributed according to $\mu$, we consider a canonical purification $\ket{\rho_\mu}^{X R_X Y R_Y}$ of $\rho_\mu^{A_{in} B_{in}}$, with
\begin{align}
\ket{\rho_\mu}^{X R_X Y R_Y} = \sum_{x, y} \sqrt{\mu (x, y)} \ket{xxyy}^{X R_X Y R_Y}.
\end{align}
We then say that the purifying registers $R_X R_Y$ contain \emph{quantum copies} of $XY$.
We define the {\em global state} at round $i$ to be the state on $XR_XYR_YA_iB_iC_i$, which is a pure state. Then the global state at round $i$ is
\begin{align}
\rho_i^{X R_X Y R_Y A_i B_i C_i} = U_i \cdots U_1 (\rho^{X R_X Y R_Y} \otimes \psi^{T_A^{in} T_B^{in}})
\end{align}
Also, we require that the final marginal state $\Pi (\rho^{A_{in} B_{in} R_X R_Y})$
on $R_X R_Y A_{out} B_{out}$ is classical. We say that a protocol $\Pi$ solves a function $f$ with error $\epsilon$ with respect to input distribution $\mu$ if $\Pr_\mu [\Pi (x, y) \not= f(x, y)] \leq \epsilon$, and we say $\Pi$ solves $f$ with error $\epsilon$ if $\max_{(x, y)} \Pr[\Pi (x, y) \not= f(x, y)] \leq \epsilon$.

We also make use of the notion of a {\em control-isometry}: it is an isometry acting on a classical-quantum state that leaves the content of the classical register unchanged. Such a classical register is called a control-register. In Yao's model, we assume that all the isometries $U_1,\ldots, U_{r+1}$ are control-isometries with control-register being the inputs.

\begin{figure}
\begin{overpic}[width=1\textwidth]{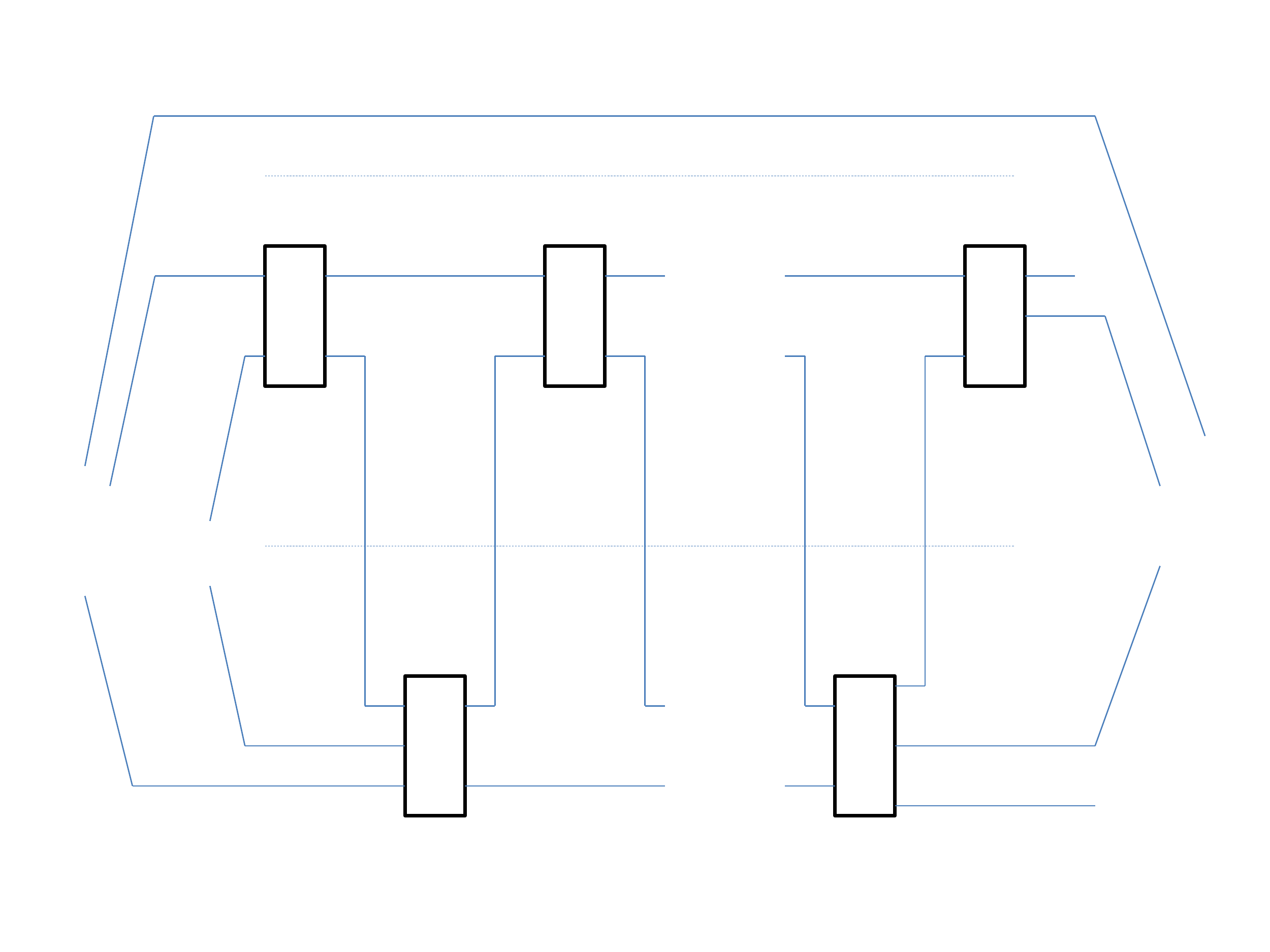}
  \put(0,65){Ray}
  \put(0,45){Alice}
  \put(0,22){Bob}
  \put(6,31){\footnotesize $\ket{\rho}$}
  \put(15,66.5){\footnotesize$R$}
  \put(15,54){\footnotesize$A_{in}$}
  \put(15,13.5){\footnotesize$B_{in}$}
  \put(15,44){\footnotesize$T_A^{in}$}
  \put(15,19){\footnotesize$T_B^{in}$}
  \put(22.1,49.5){\footnotesize$U_1$}
  \put(15,31){\footnotesize$\ket{\phi_1}$}
  \put(26.2,54){\footnotesize$A_1$}
  \put(26.2,48){\footnotesize$C_1$}
  \put(33,15.5){\footnotesize$U_2$}
  \put(37.2,54){\footnotesize$A_2$}
  \put(37.2,17.4){\footnotesize$C_2$}
  \put(37.2,13.7){\footnotesize$B_2$}
  \put(44.2,49.5){\footnotesize$U_3$}
  \put(48.2,54){\footnotesize$A_3$}
  \put(48.2,48){\footnotesize$C_3$}
  \put(48.2,13.7){\footnotesize$B_3$}
  \put(55,33){\footnotesize$\cdots$}
  \put(59.5,54){\footnotesize$A_{r-1}$}
  \put(59.5,48){\footnotesize$C_{r-1}$}
  \put(59.5,13.7){\footnotesize$B_{r-1}$}
  \put(66.5,15.5){\footnotesize$U_{r}$}
  \put(72.3,54){\footnotesize$A_r$}
  \put(73.5,23){\footnotesize$C_r$}
  \put(73.5,16.4){\footnotesize$B_{out}$}
  \put(73.5,11.7){\footnotesize$B^{\prime}$}
  \put(77.3,49.5){\footnotesize$U_{f}$}
  \put(81.3,54){\footnotesize$A^\prime$}
  \put(81.3,50.3){\footnotesize$A_{out}$}
  \put(92,33){\footnotesize$\Pi (\rho)$}
\end{overpic}
  \caption{Depiction of a quantum protocol in the interactive model, adapted from the long version of~\cite[Figure 1]{Touchette:2015}.}
  \label{fig:int_mod}
\end{figure}

\subsection*{Cleve-Buhrman model}

In 1997, Cleve and Buhrman~\cite{CleveB:1997}
defined an alternative model for communication complexity in a quantum setting, in which
the players are allowed to pre-share an arbitrary entangled state but transmit classical rather
than quantum bits. This model is equivalent to Yao's model (with entanglement, up to a factor of 2), since
entanglement can be used to teleport~\cite{BennettBCJPW:1993} the qubits with twice as many classical
bits.

\subsection*{Quantum Communication Complexity and Quantum Information Complexity}

Since Yao's model (augmented with entanglement) and coherent Cleve-Buhrman model are equivalent up to factor 2, in this paper, we do not differentiate between these two models unless particularly specified.
\begin{definition}
For a protocol $\Pi$ and an input distribution $\mu$,
we define the {\em quantum communication cost} (QCC) and \emph{quantum information cost} (QIC) of $\Pi$ on input $\mu $ as
\begin{align*}
QCC (\Pi, \mu) &\defeq \sum_i |C_i|,
\end{align*}
and
\begin{align*}
	QIC (\Pi, \rho) &\defeq \sum_{i \geq 1,\ odd} \condmutinf{C_i}{ R_X R_Y}{ B_{i}} + \sum_{i \geq 1,\ even}  \condmutinf{C_i}{ R_X R_Y}{A_{i}},
\end{align*}
respectively.
For any function $f$, any input distribution $\mu$,  and any $\epsilon>0$,
\begin{align}\label{eq:defQCC}
QCC (f, \mu, \epsilon) \defeq \inf_{\Pi} QCC (\Pi, \mu),
\end{align}
and
	\begin{align}\label{eq:defQIComplexity}
		QIC (f, \mu, \epsilon) \defeq \inf_{\Pi} QIC (\Pi, \mu),
	\end{align}
	where the infimum is over the protocols $\Pi$ computing $f$ with error $\epsilon$ w.r.t $\mu$.
\end{definition}



\section{Lower bound on quantum communication complexity}
\label{sec:proofs}

In this section, we prove Theorem \ref{theo:qcclowerbound} by following the high-level proof sketch given in Section~\ref{sec:highlevel}.
We assume throughout this section that the protocol runs for $T$ rounds.
We first prove the Theorem assuming the results of the Lemmata in Section~\ref{sec:highlevel}, before proving these Lemmata.

\subsection{Proof of main Theorem}

\begin{proof}[\textbf{Proof of Theorem \ref{theo:qcclowerbound}}.]
Let $\Pi$ be a $T$-round quantum protocol with communication cost $c$. We assume without loss of generality that $t$ is odd and in the end of the protocol, Bob outputs the correct answer with probability at least $1-\epsilon>\frac{1}{2}$.

We first consider running protocol $\Pi$ on inputs given according to $p$.
We assume that the protocol is well-defined even outside the support of $\mu$, otherwise, adding an error flag as a potential output can only increase the distance of the output depending on whether $\Pi$ is run on $p$ or on $\mu$.
Let inputs to Alice and Bob be given in registers $XF$ and $YG$ in the state $$\sum_{x,y}p(x,y,f,g)\ketbra{x}_X\otimes\ketbra{y}_Y\otimes\ketbra{f}_F\otimes\ketbra{g}_G.$$ Let these registers be purified by $R_XR_F$ and $R_YR_G$ respectively, which are not accessible to either players. Let Alice and Bob initially hold registers $A_0,B_0$ with shared entanglement $\Theta^0_{T_AT_B}$. Then the initial state is
\begin{align*}\label{eq:initialstate}
\ket{\Psi^0}_{XYFGR_XR_YR_FR_GT_AT_B} \defeq \sum_{x,y,f,g}\sqrt{p(x,y,f,g)}\ket{xxyyffgg}_{XR_XYR_YFR_FGR_G}\ket{\Theta^0}_{T_AT_B}.
\end{align*}

Alice applies a control unitary $U^1: XFT_A\rightarrow XFA_1C_1$ such that the unitary acts on $T_A$ controlled by $XF$, then sends $C_1$ to Bob. Let $B_1\equiv T_B$ be a relabelling of Bob's register $B_0$. He applies $U^2: YGC_1B_1\rightarrow YGC_2B_2$ such that the unitary acts on $C_1B_0$ conditioned on $YG$. He sends $C_2$ to Alice. Players proceed in this fashion until the  end of the protocol. At round $r$, let the registers be $A_rC_rB_r$, where $C_r$ is the message register, $A_r$ is with Alice and $B_r$ is with Bob. If $r$ is odd, then $B_r \equiv B_{r-1}$ and if $r$ is even, then $A_r\equiv A_{r-1}$.
Then the global state at round $r$ is

\begin{equation*}
\label{eq:roundrstate}
\ket{\Psi^r}_{XYFGR_XR_YR_FR_GA_rC_rB_r} \defeq \sum_{x,y,f,g}\sqrt{p(x,y,f,g)}\ket{xxyyffgg}_{XR_XYR_YFR_FGR_G}\ket{\Theta^{r, xfyg}}_{A_rC_rB_r}.
\end{equation*}

Set $c_i\defeq|C_i|; \ell_{A,r}\defeq\sum_{i\leq r, i~\text{odd}}c_i$; $\ell_{B,r}\defeq\sum_{i\leq r, i~\text{even}}c_i$.
\begin{eqnarray*}
	\epsilon_{r,x_{\leq j}y_jj}&\defeq&h\br{\Psi^{r, x_{\leq j} jy_j}_{X_SF_SB_rY_{\geq j}\br{R_Y}_{\geq j}GR_G},\Psi^{r, x_{\leq j} jy_j}_{X_SF_S}\otimes\Psi^{r, x_{\leq j} jy_j}_{B_rY_{\geq j}\br{R_Y}_{\geq j}GR_G}}\\
	&=&	\sqrt{\expec{x_sf_s\leftarrow X_SF_S}{h^2\br{\Psi^{r, x_{\leq j} jy_jx_sf_s}_{B_rY_{\geq j}\br{R_Y}_{\geq j}GR_G},\Psi^{r, x_{\leq j} jy_j}_{B_rY_{\geq j}\br{R_Y}_{\geq j}GR_G}}}}	
\end{eqnarray*}
when $r$ is odd,
\begin{eqnarray*}
	\epsilon_{r,x_{\leq j}y_jj}&\defeq&h\br{\Psi^{r, x_{\leq j} jy_j}_{Y_SG_SA_rX_{\geq j}\br{R_X}_{\geq j}FR_F},\Psi^{r, x_{\leq j} jy_j}_{Y_SG_S}\otimes\Psi^{r, x_{\leq j} jy_j}_{A_rX_{\geq j}\br{R_X}_{\geq j}FR_F}}\\
	&=&\sqrt{\expec{y_sg_s\leftarrow Y_SG_S}{h^2\br{\Psi^{r, x_{\leq j} jy_jy_sg_s}_{A_rX_{\geq j}\br{R_X}_{\geq j}FR_F},\Psi^{r, x_{\leq j} jy_j}_{A_rX_{\geq j}\br{R_X}_{\geq j}FR_F}}}},
\end{eqnarray*}
when $r$ is even, where the equalities are from Fact~\ref{fac:jointlylinearity}.
By taking appropriate choices of input into protocol $\Pi$, we can combine Lemmata~\ref{lem:shearer},~\ref{lem:coherentJRS},~\ref{lem:multirddirectsum}, and prove  that on average under $p$, Bob's state is almost independent of $x_S f_S$, and Alice's state is almost independent of $y_S g_S$. We get the following claim.

\begin{claim}
	\label{lem:raolemma2}
	It holds that for all $r\leq t, 0<\delta<1$,
\begin{equation}\label{eqn:eps}
	\expec{x_{\leq j}y_jj}{\epsilon^2_{r,x_{\leq j}y_jj}}\leq\O\br{\frac{\ell_{A,r}}{k\delta^2}}+12\delta^2 + 18\frac{\ell_{B,r}2^{2\ell_{A,r}+4}}{n},
\end{equation}
when $r$ is odd, and
\begin{equation}\label{eqn:eps2}
	\expec{x_{\leq j}y_jj}{\epsilon^2_{r,x_{\leq j}y_jj}}\leq\O\br{\frac{\ell_{B,r}}{k\delta^2}}+12\delta^2+18\frac{\ell_{A,r}2^{2\ell_{B,r}+4}}{n},
\end{equation}
when $r$ is even.
\end{claim}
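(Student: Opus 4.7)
The plan is to treat the odd-$r$ case (the even case follows symmetrically by swapping the roles of Alice and Bob) by comparing the true round-$r$ state $\Psi^{r, x_{\leq j} j y_j}$ to an approximation $\tilde\Psi^r$ in which the multi-round interaction up to round $r$ is replaced by a single one-way Alice-to-Bob transmission via Lemma~\ref{lem:coherentJRS}. Once the effective protocol is one-way, the near-independence of $X_S F_S$ from Bob's registers will follow from the quantum Shearer-type bound (Lemma~\ref{lem:shearer}) applied to the short message, and I will assemble the final claim using the triangle inequality for Hellinger distance, monotonicity under partial trace, and $(a+b+c)^2 \leq 3(a^2+b^2+c^2)$ after averaging over $x_{\leq j}, y_j, j$.

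The first concrete ingredient is a bound on Alice's information about $Y_j$ at round $r$. Applying Lemma~\ref{lem:multirddirectsum} with the roles of Alice and Bob interchanged---legitimate because, under $p$ conditioned on $J=j$, we have $X_{<j}=Y_{<j}$, and, conditional on $J X_{<J}$, the variables $X_{\geq J} F$ and $Y_{\geq J} G$ are independent and uniform---yields
\begin{align*}
\eta \;:=\; \condmutinf{Y_J}{A_r C_r X_{\geq J} F R_{X_{\geq J}} R_F}{J\, Y_{<J}} \;\leq\; \frac{\ell_{B,r}\cdot 2^{2\ell_{A,r}+2}}{n}\,.
\end{align*}
Fixing typical $j$ and $x_{\leq j}$, I would then recategorize Bob's registers $Y_{>j} R_{Y_{>j}} G R_G$ as part of his local working state rather than his input---legitimate because, under $p$ conditioned on $j, x_{<j}$, they are independent of all of Alice's side---leaving $Y_j$ as Bob's only input and $(X_{\geq j}, F)\otimes Y_j$ as the product input distribution required by Lemma~\ref{lem:coherentJRS}. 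With precondition $\mutinf{Y_j}{R_{X'} X' A_r}\leq\eta$ (in expectation) supplied by the previous display, that lemma delivers a one-way Alice-to-Bob protocol of cost $L=O((\ell_{A,r}+1)/\delta^2)$ qubits (bounding $\mutinf{X'}{Y_j R_{Y_j} B_r}\leq O(\ell_{A,r})$ via Fact~\ref{lem:boundoncmi}) and a state $\tilde\Psi^r$ with $h^2(\tilde\Psi^r, \Psi^r)\leq 4\delta^2+6\eta$ on average. In $\tilde\Psi^r$ Bob receives at most $L$ qubits from Alice, so $\mutinf{X_{\geq j} F}{\mathrm{Bob}}_{\tilde\Psi^r}\leq 2L$. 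Decomposing $X_{\geq j} F$ into its i.i.d.\ uniform node-level components (indexed by nodes $w$ with $|w|>j$ for $X_w$ and by leaves $z$ for $F_z$) and noting that $S$---as a function of $y_j$---contains each such component with probability $1/k$ conditional on $x_{\leq j}, j$ while remaining independent of the tensor-product marginals of Alice's input, Lemma~\ref{lem:shearer} gives
\begin{align*}
\condmutinf{X_S F_S}{\mathrm{Bob}}{S}_{\tilde\Psi^r} \;\leq\; \frac{2L}{k} \;=\; O\!\left(\frac{\ell_{A,r}}{k\delta^2}\right).
\end{align*}
Converting to squared Hellinger distance via Facts~\ref{fact:pinkser} and~\ref{fact:mutinf is min}, using the triangle inequality together with monotonicity to transfer this bound from $\tilde\Psi^r$ to $\Psi^r$, and averaging over $x_{\leq j}, y_j, j$ combine the three contributions into the stated inequality.

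The hardest technical step will be reconciling Lemma~\ref{lem:coherentJRS}'s precondition---which asks for low mutual information between Alice and \emph{all} of Bob's input $Y$---with the Lemma~\ref{lem:multirddirectsum} conclusion, which controls only Alice's information about the single coordinate $Y_j$. The recategorization above is my proposed fix: because $(Y_{>j}, G)$ is independent of everything else under $p$ conditioned on $j, x_{<j}$, it may be treated as local randomness on Bob's side (absorbed into his work register) so that $Y_j$ alone plays the role of Bob's ``unknown'' input in the one-way simulation.
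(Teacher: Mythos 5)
Your proposal is correct and follows essentially the same route as the paper's proof: fix $x_{\leq j}j$, absorb $Y_{>j}R_{Y_{>j}}GR_G$ into Bob's local state so that $Y_j$ is his only input, invoke Lemma~\ref{lem:multirddirectsum} (the paper packages this as Claim~\ref{lem:raolemma1}) to supply the $\mutinf{Y_j}{\cdot}$ precondition of Lemma~\ref{lem:coherentJRS}, apply quantum Shearer to the resulting short one-way message, and combine by triangle inequality and averaging. The only cosmetic difference is that the paper derives the $2\ell_{A,r}$ bound on $\mutinf{X_{>J}F}{B_rY_{>J}(R_Y)_{>J}GR_G}$ via a separate inductive claim (Claim~\ref{claim:boundoninformationloss}) rather than citing Fact~\ref{lem:boundoncmi} directly, but the underlying argument is the same.
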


To go from distribution $p$ to distributions $\mu_0$ and $\mu_1$, we make yet another appropriate choice of the input into protocol $\Pi$, so that Lemma~\ref{lem:rdelim} can be used.
Let
\begin{align*}
	&\br{\Phi_b^{0,x_{\leq j}y_jj}}_{X_SR_{X_S}F_SR_{F_S}Y_SR_{Y_S}G_SR_{G_S}}\\
	&=\sum_{x_sf_sy_sg_s}\sqrt{\mu_b\br{x_s,f_s,y_s,g_s~|~x_{\leq j},y_j,j}}\ket{x_sx_sf_sf_sy_sy_sg_sg_s}_{X_SR_{X_S}F_SR_{F_S}Y_SR_{Y_S}G_SR_{G_S}}
\end{align*}
be canonical purifications, for $b\in\set{0,1}$, of the inputs $(x_sf_s,y_sg_s)$ restricted to $S$ and drawn under distribution $\mu_0$ and $\mu_1$, respectively. Also let
 $\Phi_b^{T,x_{\leq j}y_jj}$
 be the final states after running protocol $\Pi$ on inputs distributed according to $\mu_0$ and $\mu_1$, respectively. According to our assumption in the beginning of the proof, $T$ is odd and Bob outputs the answer.
We get the following claim.

\begin{claim}
	\label{lem:ptomu}

There exist registers $\hat{A},\hat{B}$, control isometries
\begin{align*}
V^T_{x_{\leq j}y_jj} & \in\U\br{X_SF_SX_{S^c}F_{S^c}R_{X_{S^c}}R_{F_{S^c}}A_T,X_SF_S\hat{A}}, \\
  V^{T-1}_{x_{\leq j}y_jj} & \in\U\br{Y_SG_SY_{S^c}G_{S^c}R_{Y_{S^c}}R_{G_{S^c}}B_T,Y_SG_S\hat{B}}
\end{align*}
 controlled by $X_SF_S$ and $Y_SG_S$, respectively, and a quantum state $\hat{\Psi}  \in\D_{\hat{A}\hat{B}}$ satisfying that
\begin{align*}
	&h\br{V^T_{x_{\leq j}y_jj}V^{T-1}_{x_{\leq j}y_jj}\br{\Phi_0^{T,x_{\leq j}y_jj}}, \br{\Phi_0^{0,x_{\leq j}y_jj}}_{X_SR_{X_S}F_SR_{F_S}Y_SR_{Y_S}G_SR_{G_S}}\otimes \hat{\Psi}_{\hat{A}\hat{B}}}\\
	&\leq\epsilon_{T,x_{\leq j}y_jj} + \epsilon_{T-1,x_{\leq j}y_jj} + 2 \sum_{r=1}^{T-2} \epsilon_{r,x_{\leq j}y_jj},
\end{align*}
and
\begin{align*}
	&h\br{V^T_{x_{\leq j}y_jj}V^{T-1}_{x_{\leq j}y_jj}\br{\Phi_1^{T,x_{\leq j}y_jj}}, \br{\Phi_1^{0,x_{\leq j}y_jj}}_{X_SR_{X_S}F_SR_{F_S}Y_SR_{Y_S}G_SR_{G_S}}\otimes \hat{\Psi} _{\hat{A}\hat{B}}}\\
	&\leq\epsilon_{T,x_{\leq j}y_jj} + \epsilon_{T-1,x_{\leq j}y_jj} + 2 \sum_{r=1}^{T-2} \epsilon_{r,x_{\leq j}y_jj}.
\end{align*}
\end{claim}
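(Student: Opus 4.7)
The claim is set up for a direct application of Lemma~\ref{lem:rdelim}, via the identifications $X \leftrightarrow X_S F_S$, $Y \leftrightarrow Y_S G_S$, $R_X \leftrightarrow R_{X_S} R_{F_S}$, $R_Y \leftrightarrow R_{Y_S} R_{G_S}$, product distribution $\rho \leftrightarrow p\br{\cdot \mid x_{\leq j} y_j j}$, and non-product distribution $\sigma_b \leftrightarrow \mu_b\br{\cdot \mid x_{\leq j} y_j j}$ for $b \in \set{0,1}$. Throughout the argument we fix $x_{\leq j}, y_j, j$; this also fixes the set $S$, since under $p$ we have $y_{<j} = x_{<j}$, so $y_{\leq j}$ is determined and $S$ depends only on $x_{\leq j}, y_{\leq j}, j$.

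To invoke the lemma we first verify its hypotheses. The product condition $\rho_{XY} = \rho_X \otimes \rho_Y$ holds because, conditioned on $x_{\leq j} y_j j$, the distribution $p$ is uniform and independent on Alice's remaining input $\br{x_{>j}, f}$ and Bob's remaining input $\br{y_{>j}, g}$, and this structure survives restriction to $\br{X_S F_S, Y_S G_S}$. The matching-marginal condition $\br{\sigma_b}_X = \rho_X$ and $\br{\sigma_b}_Y = \rho_Y$ holds for each $b$ because the events $\E_0, \E_1$ defining $\mu_b$ impose only joint constraints of the form $x_S = y_S$ together with $f_S = g_S$ (resp.\ $f_S \neq g_S$), leaving each side's marginal uniform. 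The remaining auxiliary registers $X_{S^c} F_{S^c} R_{X_{S^c}} R_{F_{S^c}}$ and their Bob-side analogues are identically distributed and product-independent across $p, \mu_0, \mu_1$, so they can be absorbed into Alice's and Bob's local workspace and will ultimately be folded into $\hat{A}, \hat{B}$ by $V^T, V^{T-1}$.

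The plan is then to apply Lemma~\ref{lem:rdelim} once with $\sigma = \sigma_0$ and once with $\sigma = \sigma_1$. Two observations ensure that a single choice of $V^T, V^{T-1}, \hat{\Psi}$ works for both applications. First, the isometries $V^i$ of Definition~\ref{def:rdelim} arise from Uhlmann's theorem (Fact~\ref{fac:Uhlmann}) applied to the $\rho$-execution alone and are therefore independent of $b$; likewise we take $\hat{\Psi}$ to be the corresponding marginal of $\rho^T$ on the non-input registers. Second, since $\rho_{XY}$ is classical-quantum, the last sentence of Fact~\ref{fac:Uhlmann} allows us to choose $V^T, V^{T-1}$ as control-isometries on $X_S F_S$ and $Y_S G_S$ respectively, matching the form required by the claim. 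Lemma~\ref{lem:rdelim}'s $\delta_T$ bound is then exactly $\epsilon_T + \epsilon_{T-1} + 2 \sum_{r=1}^{T-2} \epsilon_r$, which coincides with the claim's bound after identification. The main bookkeeping step to watch is checking that the lemma's $\epsilon_r$ agrees with our $\epsilon_{r, x_{\leq j} y_j j}$; this follows by viewing the auxiliary registers as part of the protocol's workspace and invoking monotonicity of Hellinger distance under partial trace (Fact~\ref{fac:monotonequantumoperation}).
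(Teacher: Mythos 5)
Your proposal is correct and follows essentially the same route as the paper: the paper's proof also fixes $x_{\leq j}y_jj$ as shared information, treats $(x_Sf_S,y_Sg_S)$ as the inputs with the remaining registers locally sampled (possible since $XF$ and $YG$ are independent under $p$ given $x_{\leq j}y_jj$), and then invokes Lemma~\ref{lem:rdelim}. The paper states this reduction in two lines, whereas you additionally spell out the hypothesis checks (product structure, matching marginals, $b$-independence of the Uhlmann isometries, and the control-isometry form), which are exactly the details the paper leaves implicit.
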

\bigskip

Using this claim, we proceed as follows. Note that $\br{\Phi_0^{0,x_{\leq j}y_jj}}_{Y_S G_S} = \br{\Phi_1^{0,x_{\leq j}y_jj}}_{Y_S G_S}$, so
\begin{align*}
&\Tr_{X_SR_{X_S}F_SR_{F_S}R_{Y_S}R_{G_S}}\br{\br{\Phi_0^{0,x_{\leq j}y_jj}}_{X_SR_{X_S}F_SR_{F_S}Y_SR_{Y_S}G_SR_{G_S}}\otimes \hat{\Psi} _{\hat{A}\hat{B}}}\\
&=\Tr_{X_SR_{X_S}F_SR_{F_S}R_{Y_S}R_{G_S}}\br{\br{\Phi_1^{0,x_{\leq j}y_jj}}_{X_SR_{X_S}F_SR_{F_S}Y_SR_{Y_S}G_SR_{G_S}}\otimes \hat{\Psi} _{\hat{A}\hat{B}}}.
\end{align*}
By triangle inequality and Fact~\ref{fac:monotonequantumoperation}, we have
\begin{eqnarray*}
	&&h\br{\Tr_{X_SR_{X_S}F_SR_{F_S}R_{Y_S}R_{G_S}}\br{V^T_{x_{\leq j}y_jj}V^{T-1}_{x_{\leq j}y_jj}\br{\Phi_0^{T,x_{\leq j}y_jj}}},\atop\Tr_{X_SR_{X_S}F_SR_{F_S}R_{Y_S}R_{G_S}}\br{V^T_{x_{\leq j}y_jj}V^{T-1}_{x_{\leq j}y_jj}\br{\Phi_1^{T,x_{\leq j}y_jj}}}}\\
	&\leq& 2\br{\epsilon_{T,x_{\leq j}y_jj} + \epsilon_{T-1,x_{\leq j}y_jj} + 2 \sum_{r=1}^{T-2} \epsilon_{r,x_{\leq j}y_jj}}
\end{eqnarray*}

Further taking expectation over $x_{\leq j}y_jj$, we have
\begin{align*}
	&\expec{x_{\leq j}y_jj}{h\br{\Tr_{X_SR_{X_S}F_SR_{F_S}R_{Y_S}R_{G_S}}\br{V^T_{x_{\leq j}y_jj}V^{T-1}_{x_{\leq j}y_jj}\br{\Phi_0^{T,x_{\leq j}y_jj}}},\atop\Tr_{X_SR_{X_S}F_SR_{F_S}R_{Y_S}R_{G_S}}\br{V^T_{x_{\leq j}y_jj}V^{T-1}_{x_{\leq j}y_jj}\br{\Phi_1^{T,x_{\leq j}y_jj}}}}} & \hspace{-10em}\\
	&\leq\expec{x_{\leq j}y_jj}{2\br{\epsilon_{T,x_{\leq j}y_jj} + \epsilon_{T-1,x_{\leq j}y_jj} + 2 \sum_{r=1}^{T-2} \epsilon_{r,x_{\leq j}y_jj}}} & \hspace{-10em}\\
	&\leq 4\sum_{r=1}^T\expec{x_{\leq j}y_jj}{\epsilon_{r,x_{\leq j}y_jj}} & \hspace{-10em}\\
	&\leq 4\expec{x_{\leq j}y_jj}{\sqrt{T\sum_{r=1}^T\epsilon_{r,x_{\leq j}y_jj}^2}}& \hspace{-10em}~\mbox{(Cauchy-Schwarz inequality)}\\
	&\leq 4\sqrt{T\sum_{r=1}^T\expec{x_{\leq j}y_jj}{\epsilon_{r,x_{\leq j}y_jj}^2}}& \hspace{-10em} ~\mbox{(Concavity of $\sqrt{x}$)}\\
	&\leq 4\sqrt{T\sum_{r=1}\br{\br{\O\br{\frac{\ell_{A,r}}{k\delta^2}}+12\delta^2 + 18\frac{\ell_{B,r}2^{2\ell_{A,r}+4}}{n}}+\br{\O\br{\frac{\ell_{B,r}}{k\delta^2}}+12\delta^2+18\frac{\ell_{A,r}2^{2\ell_{B,r}+4}}{n}}}}& \hspace{-10em} \\ & ~ \mbox{(by Eq.~\eqref{eqn:eps}.)} & \hspace{-10em}\\
	&\leq 4\sqrt{\O\br{\frac{T^2c}{k\delta^2}}+24\delta^2T^2+\frac{c2^{2c+10}}{n}} & \hspace{-10em}\\
	&\leq 4\sqrt{\O\br{\frac{c^3}{k\delta^2}}+24\delta^2c^2+\frac{c2^{2c+10}}{n}}& \hspace{-10em} ~\mbox{(because $T\leq c$)},	
\end{align*}
If $\frac{c}{24k}\geq 1$, then $c\geq\Omega\br{k}$. Otherwise, choose $\delta\defeq\br{\frac{c}{24k}}^{1/4}$. Then we have
\begin{eqnarray}
	&&\expec{x_{\leq j}y_jj}{h\br{\Tr_{X_SR_{X_S}F_SR_{F_S}R_{Y_S}R_{G_S}}\br{V^T_{x_{\leq j}y_jj}V^{T-1}_{x_{\leq j}y_jj}\br{\Phi_0^{T,x_{\leq j}y_jj}}},\atop\Tr_{X_SR_{X_S}F_SR_{F_S}R_{Y_S}R_{G_S}}\br{V^T_{x_{\leq j}y_jj}V^{T-1}_{x_{\leq j}y_jj}\br{\Phi_1^{T,x_{\leq j}y_jj}}}}}\nonumber\\
	&\leq&4\br{\O\br{\frac{c^5/2}{\sqrt{k}}}+\frac{c2^{2c+10}}{n}}.\label{eqn:distance}
\end{eqnarray}

On the other hand, we have
	\begin{eqnarray*}
		&&\expec{x_{\leq j}y_jj}{h\br{\Tr_{X_SR_{X_S}F_SR_{F_S}R_{Y_S}R_{G_S}}\br{V^T_{x_{\leq j}y_jj}V^{T-1}_{x_{\leq j}y_jj}\br{\Phi_0^{T,x_{\leq j}y_jj}}},\atop\Tr_{X_SR_{X_S}F_SR_{F_S}R_{Y_S}R_{G_S}}\br{V^T_{x_{\leq j}y_jj}V^{T-1}_{x_{\leq j}y_jj}\br{\Phi_1^{T,x_{\leq j}y_jj}}}}}\\
		&\geq&\expec{x_{\leq j}y_jj}{h\br{\Tr_{X_SR_{X_S}F_SR_{F_S}R_{Y_S}R_{G_S}\hat{A}}\br{V^T_{x_{\leq j}y_jj}V^{T-1}_{x_{\leq j}y_jj}\br{\Phi_0^{T,x_{\leq j}y_jj}}},\atop\Tr_{X_SR_{X_S}F_SR_{F_S}R_{Y_S}R_{G_S}\hat{A}}\br{V^T_{x_{\leq j}y_jj}V^{T-1}_{x_{\leq j}y_jj}\br{\Phi_1^{T,x_{\leq j}y_jj}}}}}\\
		&=&\expec{x_{\leq j}y_jj}{h\br{\Tr_{F_SR_{F_S}R_{G_S}A_T}\br{\br{\Phi_0^{T,x_{\leq j}y_jj}}}},\Tr_{F_SR_{F_S}R_{G_S}A_T}\br{\br{\Phi_1^{T,x_{\leq j}y_jj}}}}\geq \Omega\br{1}.
	\end{eqnarray*}	
 The equality is because $V^t_{x_{\leq j}y_jj}$and $V^{t-1}_{x_{\leq j}y_jj}$ are all control isometries controlled by $X_SF_S$ and $Y_SG_S$, respectively. The last inequality is because we assume that Bob outputs incorrect answers with probability at most constant $\varepsilon<\frac{1}{2}$.

Combining with~\eqref{eqn:distance}, we have
\begin{eqnarray*}
4\br{\O\br{\frac{c^{5/2}}{\sqrt{k}}}+\frac{c2^{2c+10}}{n}}\geq \Omega\br{1}.
\end{eqnarray*}
Therefore, quantum communication complexity for the communication task from Definition \ref{def:commtask} is at least
$\min\set{\Omega\br{k^{1/5}},\Omega\br{\log n}}$.
\end{proof}
	
\subsection{Proofs of the Claims}

In order to prove Claims~\ref{lem:raolemma2},~\ref{lem:ptomu}, we need the following two additional claims.

\begin{claim}\label{claim:boundoninformationloss}
	For any $r\leq t$, it holds that
	\begin{eqnarray*}
		&\condmutinf{XF}{C_rB_rYR_YGR_G}{JX_{<J}}\leq2\ell_{A,r}\\
		&\condmutinf{YG}{C_rA_rXR_XFR_F}{JY_{<J}}\leq2\ell_{B,r}
	\end{eqnarray*}
\end{claim}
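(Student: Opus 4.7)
The plan is to establish both inequalities by induction on $r$, tracking how the mutual information between $XF$ and the ``Bob side'' of the protocol evolves round by round. The key preliminary observation about the distribution $p$ is that, conditional on $JX_{<J}$ (which under $p$ also determines $Y_{<J} = X_{<J}$), the inputs $XF$ and $YG$ are independent: given $J=j$ and $X_{<j}$, the remaining coordinates $X_{\geq j}$, $Y_{\geq j}$, $F$, and $G$ are independent and uniform. Since $R_Y, R_G$ are the canonical classical copies of $YG$, this yields $\condmutinf{XF}{YR_YGR_G}{JX_{<J}}_{p} = 0$.

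For the base case $r=0$, Bob's register $B_0$ equals his share $T_B$ of the pre-shared entanglement, which is independent of all classical inputs; combined with the independence observation above, $\condmutinf{XF}{B_0 Y R_Y G R_G}{JX_{<J}}_{\Psi^0} = 0 = 2\ell_{A,0}$. For the inductive step, suppose the bound holds at round $r$. When $r+1$ is odd, Alice applies the $XF$-controlled isometry $U^{r+1}$ on $XFA_rC_r$ producing $XFA_{r+1}C_{r+1}$, then sends $C_{r+1}$ to Bob. Since this operation acts only on Alice's registers and $B_{r+1} = B_r$, the marginal on $\{XF, JX_{<J}, B_{r+1}, Y, R_Y, G, R_G\}$ is unchanged from round $r$. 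By monotonicity (adding $C_r$ to the right-hand side of the mutual information only increases it) and the induction hypothesis,
\[ \condmutinf{XF}{B_{r+1} Y R_Y G R_G}{JX_{<J}}_{\Psi^{r+1}} \leq \condmutinf{XF}{C_r B_r Y R_Y G R_G}{JX_{<J}}_{\Psi^r} \leq 2\ell_{A,r}, \]
and the chain rule together with Fact~\ref{lem:boundoncmi} then yields
\[ \condmutinf{XF}{C_{r+1}B_{r+1}YR_YGR_G}{JX_{<J}}_{\Psi^{r+1}} \leq 2\ell_{A,r} + 2|C_{r+1}| = 2\ell_{A,r+1}. \]
When $r+1$ is even, Bob's $YG$-controlled isometry $U^{r+1}$ is a reversible isometry on $\{C_r B_r\}$ conditioned on the classical register $YG$, and it leaves $\{XF, JX_{<J}, Y, R_Y, G, R_G\}$ untouched; by Fact~\ref{fact:subsystem monotone} the conditional mutual information is preserved, so the quantity at round $r+1$ equals that at round $r$ and is bounded by $2\ell_{A,r} = 2\ell_{A,r+1}$, since Bob's round-$(r+1)$ message does not contribute to $\ell_{A,\cdot}$.

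The second inequality follows by an entirely symmetric argument with the roles of Alice and Bob interchanged, relying on the same structural observation that $YG$ and $XF$ are also independent conditional on $JY_{<J}$ under $p$. The main (and only) subtlety is bookkeeping: when $r$ is even, the in-transit register $C_r$ is physically held by Alice, so its inclusion in the ``Bob side'' on the left-hand side of the claim merely strengthens the inequality, and the inductive argument still goes through. No essential obstacle beyond careful accounting and the standard quantum information-theoretic facts already invoked in the paper arises.
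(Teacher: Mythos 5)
Your proposal is correct and follows essentially the same route as the paper's proof: induction on the round index, using that $XF$ and $YG$ are independent given $JX_{<J}$ under $p$ and that $T_B$ is input-independent for the base case, the chain rule together with the $2|C_i|$ bound on conditional mutual information for Alice's rounds, and invariance under Bob's local isometries for his rounds. The only cosmetic difference is that you anchor the induction at $r=0$ while the paper starts at $r=1$; the content is identical.
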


\begin{proof}
	Let's prove the first inequality. The second one follows by symmetry. We prove it by induction on $r$.
	When $r=1$, any local operation on Bob's side does not increase $\condmutinf{XF}{C_1B_1YR_YGR_G}{JX_{<J}}$. We consider the state when Bob has received the first message and does not perform any operation. Then the left hand side is
	\[\condmutinf{XF}{C_1T_BYR_YGR_G}{JX_{<J}}=\condmutinf{XF}{C_1T_B}{JX_{<J}}=\condmutinf{XF}{C_1}{T_BJX_{<J}}\leq 2|C_1|=2\ell_{A,1}.\]
	The first equality is because $XF$ and $YG$ are independent conditioning on $JX_{<J}$. The second inequality is because $T_B$ is the part of the pre-shared entanglement, which is independent of the input. The last inequality is by Lemma~\ref{lem:boundoncmi}. If $r$ is odd, we have
	\begin{eqnarray*}
		&&\condmutinf{XF}{C_rB_rYR_YGR_G}{JX_{<J}}\\
		&\leq&\condmutinf{XF}{B_rYR_YGR_G}{JX_{<J}}+\condmutinf{XF}{C_r}{JX_{<J}B_rYR_YGR_G}\\
		&=&\condmutinf{XF}{B_{r-1}YR_YGR_G}{JX_{<J}}+2c_r\\
		&\leq&2\ell_{A,r},
	\end{eqnarray*}
	where the first inequality is  Lemma~\ref{lem:boundoncmi} and the second inequality is from the induction.
	If $r$ is even, we have
	\[\condmutinf{XF}{C_rB_rYR_YGR_G}{JX_{<J}}=\condmutinf{XF}{C_{r-1}B_{r-1}YR_YGR_G}{JX_{<J}},\]
	by Fact~\ref{fact:subsystem monotone}.
\end{proof}

\begin{claim}
	\label{lem:raolemma1}
	It holds that for all $r\leq t$,
	\begin{align}
		\condmutinf{X_J}{C_rB_rY_{>J}\br{R_Y}_{>J}GR_G}{Y_{\leq J}J} &
		\leq\frac{\ell_{A,r}2^{2\ell_{B,r}+2}}{n},\label{eqn:claim6.2 eq1}\\
		\condmutinf{Y_J}{C_rA_rX_{>J}\br{R_X}_{>J}FR_F}{X_{\leq J}J} &
		\leq\frac{\ell_{B,r}2^{2\ell_{A,r}+2}}{n}\label{eqn:claim6.2 eq2}.
	\end{align}

\end{claim}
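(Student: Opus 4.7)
The plan is to derive both inequalities of Claim~\ref{lem:raolemma1} as direct applications of Lemma~\ref{lem:multirddirectsum} to the protocol $\Pi$ run on distribution $p$; since the two bounds are symmetric under the exchange $(X,F)\leftrightarrow(Y,G)$ together with $\ell_{A,r}\leftrightarrow\ell_{B,r}$, it suffices to prove the first.

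To instantiate the lemma, I take Alice's decomposed input to be the layers $X=(X_0,X_1,\ldots,X_{n-1})$ of the tree-function, with the leaf-function $F$ grouped into (say) $X_{n-1}$ so that Alice's total input fits the form $X=X_1\cdots X_n$ demanded by the lemma; the bound I actually want concerns only $X_J$, which is recovered from the grouped bound by data processing (Fact~\ref{fact:subsystem monotone}). The hidden index is the same $J$ as in $p$, uniform on $\{0,\ldots,n-1\}$. Decompose Bob's input $(Y,G)$ as $Y_2^J := Y_{<J}$ and $Y_1^J := (Y_J, Y_{>J}, G)$. The hypotheses of the lemma are immediate from the definition of $p$: the defining constraint $X_{<J}=Y_{<J}$ makes $Y_2^J$ a deterministic function of $X_{<J}$, and conditional on $JX_{<J}$ the components $X_{\geq J}$, $Y_J$, $Y_{>J}$, $F$, $G$ are mutually independent and uniform. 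The lemma then yields
$$\condmutinf{X_J}{C_rB_r\, Y_J R_{Y_J}\, Y_{>J} R_{Y_{>J}}\, G R_G}{J X_{<J}}\ \leq\ \frac{\ell_{A,r}\,2^{2\ell_{B,r}+2}}{n}.$$

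It remains to translate this into the conditioning form required by Claim~\ref{lem:raolemma1}. By the chain rule the left-hand side splits as $\condmutinf{X_J}{Y_J R_{Y_J}}{J X_{<J}}+\condmutinf{X_J}{C_rB_r Y_{>J} R_{Y_{>J}} G R_G}{J X_{<J} Y_J R_{Y_J}}$, and the first summand vanishes by the independence already noted. Since $Y_J$ is a classical register and $R_{Y_J}$ is its canonical (diagonal) purification, conditioning additionally on $R_{Y_J}$ is redundant and can be dropped. Finally, the equality $X_{<J}=Y_{<J}$ under $p$ makes conditioning on $Y_{\leq J}J$ equivalent to conditioning on $JX_{<J}Y_J$, delivering the first inequality; the second follows by the symmetric argument with Alice and Bob swapped.

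The main obstacle, more bookkeeping than conceptual, is tracking the purification registers $R_{Y_J}$, $R_{Y_{<J}}$ correctly through the chain-rule manipulation and confirming that the data-processing absorption of $F$ into $X_{n-1}$ does not cost any factor in the $1/n$ bound. Both steps are routine once one observes that $F$ is independent of the remaining variables under $p$ and that canonical purifications of classical registers merely duplicate the classical value.
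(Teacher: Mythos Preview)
Your proof is correct and follows essentially the same route as the paper's: instantiate Lemma~\ref{lem:multirddirectsum} with $Y_1^J=Y_{\geq J}G$, $Y_2^J=Y_{<J}=X_{<J}$, then use the chain rule together with the independence of $Y_J$ from $X_J$ given $JX_{<J}$ and the identity $X_{<J}=Y_{<J}$ to convert the conditioning on $JX_{<J}$ into conditioning on $Y_{\leq J}J$. Your treatment is in fact slightly more careful than the paper's in one respect: you explicitly absorb $F$ into the last block $X_{n-1}$ and recover the desired bound on $X_J$ alone via monotonicity, whereas the paper simply writes ``$X\leftarrow X$'' and leaves the handling of $F$ implicit.
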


\begin{proof}
	 With the notation from Lemma~\ref{lem:multirddirectsum}, taking $X \leftarrow X, J\leftarrow J, C\leftarrow C_r, B\leftarrow B_r, Y_1^J\leftarrow Y_{\geq J}G, Y_2^J\leftarrow Y_{<J}=X_{<J}$, we have

\begin{equation}\label{eqn:6.2}
   \condmutinf{X_J}{C_rB_rY_{\geq J}R_{Y_{\geq J}}GR_G}{JX_{<J}}\leq\frac{\ell_{A,r}2^{2\ell_{B,r}+2}}{n}.
\end{equation}
    Then
    \begin{align*}
    	&\condmutinf{X_J}{C_rB_rY_{>J}\br{R_Y}_{>J}GR_G}{Y_{\leq J}J} & \hspace{-2em}\\
    	&=\condmutinf{X_J}{C_rB_rY_{>J}\br{R_Y}_{>J}GR_G}{Y_JX_{<J}J} & \hspace{-2em} ~\mbox{(because $X_{<J}=Y_{<J}$)}\\
    	&=\condmutinf{X_J}{C_rB_rY_{\geq J}\br{R_Y}_{\geq J}GR_G}{X_{<J}J}& \hspace{-2em}  ~\mbox{($Y_J$ is independent of $X$ given $X_{<J}J$)}
    \end{align*}
    Together with Eq.~\eqref{eqn:6.2}, we have Eq.~\eqref{eqn:claim6.2 eq1}. Eq.~\eqref{eqn:claim6.2 eq2}. follows by the symmetric argument.

\end{proof}

\begin{proof}[\textbf{Proof of Claim \ref{lem:raolemma2}}]

	Let $r$ be odd, the case of even $r$ is proved similarly. We consider a new protocol $\Pi'$, where we have fixed $x_{\leq j}j$, and it is known to both Alice and Bob. The input to Alice is $XF$ ($X_{\leq j}$ is fixed) and  the input to Bob is $Y_j$. Note that $Y_j$ and $S$ determine each other given $x_{\leq j}j$.  Bob locally generates the registers $Y_{>j}\br{R_Y}_{>j}GR_G$, which are independent of $XF$ whenever $X_{\leq J}J$ is fixed. After that, Alice and Bob together simulate the original protocol $\Pi$ till round $r$. The global joint state is $\Psi^{r, x_{\leq j}j}$, which is the state $\Psi^r$ conditioned on fixing $x_{\leq j} j$.
	
	 $\mutinf{X_{>J}F}{B_rY_{>J}\br{R_Y}_{>J}GR_G}_{\Psi^{r, x_{\leq j}j}}\leq 2\ell_{A,r}$ by Claim~\ref{claim:boundoninformationloss}. As $\ell_{A,r}\geq 1$, from Lemma \ref{lem:coherentJRS}, there exists a one-way entanglement assisted protocol $\Pi''$ with communication cost $\O(\ell_{A,r}/\delta^2)$ and the final global state $\tilde{\Psi}^{x_{\leq j} j}_{X_{>j}\br{R_X}_{>j}FR_FA_rB_rY_{\geq j}\br{R_Y}_{\geq j}GR_G}$ such that
	\begin{equation}\label{eqn:oneway}
	h^2\br{\tilde{\Psi}^{x_{\leq j} j}_{X_{>j}FA_rB_rY_{\geq j}\br{R_Y}_{\geq} jGR_G},\Psi^{r,x_{\leq j} j}_{X_{>j}FA_rB_rY_{\geq j}\br{R_Y}_{\geq j}GR_G}}\leq\atop 4\delta^2+6\mutinf{Y_J}{A_rX_{>J}\br{R_X}_{>J}GR_G}_{\Psi^{r,x_{\leq j} j}}.
	\end{equation}

	As $Y_j$ and $S$ determine each other and $XF$ and $YG$ are independent for fixed $x_{\leq j}y_{\leq j}j$, we can further apply Lemma~\ref{lem:shearer} to obtain
	\begin{equation}\label{eqn:6.3}
		\condmutinf{X_SF_S}{B_rY_{\geq j}\br{R_Y}_{\geq j}GR_G}{S}_{\tilde{\Psi}^{x_{\leq j}y_{\leq j}j}}\leq\O\br{\frac{\ell_{A,r}}{k\delta^2}}.
	\end{equation}
	The reason is that $\tilde{\Psi}^{x_{\leq j}y_{\leq j}j}$ is obtained from the protocol $\Pi''$ with communication cost $\O\br{\ell_{A,r}/\delta^2}$. Combined with Claim~\ref{claim:boundoninformationloss},~\eqref{eqn:6.3} follows. By Fact~\ref{fact:mutinf is min}, we have
	\[\expec{s\leftarrow S}{h^2\br{\tilde{\Psi}^{x_{\leq j}jy_js}_{B_rY_{\geq j}\br{R_Y}_{\geq j}GR_GX_SF_S},\tilde{\Psi}^{x_{\leq j}jy_js}_{B_rY_{\geq j}\br{R_Y}_{\geq j}GR_G}\otimes\tilde{\Psi}^{x_{\leq j}jy_js}_{X_SF_S}}}\leq\O\br{\frac{\ell_{A,r}}{k\delta^2}}.\]
	It implies
	\begin{equation}\label{eqn:simlation}
		\expec{y_jsx_sf_s\leftarrow Y_jSX_SF_S}{h^2\br{\tilde{\Psi}^{x_{\leq j} jy_jsx_sf_s}_{B_rY_{\geq j}\br{R_Y}_{\geq j}GR_G},\tilde{\Psi}^{x_{\leq j} jy_js}_{B_rY_{\geq j}\br{R_Y}_{\geq j}GR_G}}}\leq\O\br{\frac{\ell_{A,r}}{k\delta^2}}.
	\end{equation}

	Combining~\eqref{eqn:oneway}~\eqref{eqn:simlation} and triangle inequality, we have
	\[\expec{y_jsx_sf_s\leftarrow Y_jSX_SF_S}{h^2\br{\Psi^{r,x_{\leq j} jy_jsx_sf_s}_{B_rY_{\geq j}\br{R_Y}_{\geq j}GR_G},\Psi^{r,x_{\leq j} jy_js}_{B_rY_{\geq j}\br{R_Y}_{\geq j}GR_G}}}\leq\atop\O\br{\frac{\ell_{A,r}}{k\delta^2}}+ 12 \delta^2 + 18\mutinf{Y_J}{A_rX_{>J}\br{R_X}_{>J}GR_G}_{\Psi^{r,x_{\leq j} j}}.\]
	Taking expectation over $x_{\leq j}j$, we have
	\begin{eqnarray*}
		&&\expec{jx_{\leq j}y_jsx_sf_s\leftarrow JX_{\leq J}Y_JSX_SF_S}{h^2\br{\Psi^{r,x_{\leq j} jy_jsx_sf_s}_{B_rY_{\geq j}\br{R_Y}_{\geq j}GR_G}-\Psi^{r,x_{\leq j} jy_js}_{B_rY_{\geq j}\br{R_Y}_{\geq j}GR_G}}}\\
		&\leq&\O\br{\frac{\ell_{A,r}}{k\delta^2}}+12\delta^2+18\expec{x_{\leq j}j\leftarrow X_{\leq J}J}{\mutinf{Y_J}{A_rX_{>J}\br{R_X}_{>J}GR_G}_{\Psi^{r,x_{\leq j} j}}}\\
		&=&\O\br{\frac{\ell_{A,r}}{k\delta^2}}+12\delta^2+18\condmutinf{Y_J}{A_rX_{>J}\br{R_X}_{>J}GR_G}{X_{\leq J}J}_{\Psi^r}\\
		&\leq&\O\br{\frac{\ell_{A,r}}{k\delta^2}}+12\delta^2+18\frac{\ell_{B,r}2^{2\ell_{A,r}+4}}{n},
	\end{eqnarray*}
	where the first equality is from the definition of $h\br{\cdot}$ and Fact~\ref{fac:jointlylinearity} and the last inequality is from Claim~\ref{lem:raolemma1}.

\end{proof}

\begin{proof}[\textbf{Proof of Claim \ref{lem:ptomu}}]

Consider the following communication task. Alice and Bob share $x_{\leq j}y_jj\leftarrow X_{\leq J}Y_JJ$. They are given $(x_sf_s,y_sg_s)$ as input. Note that $S$ is determined by $x_{\leq j}y_jj$. Moreover, since we are considering for now the fooling distribution $p$, $XF$ and $YG$ are independent when $x_{\leq j}y_jj$ is fixed. Alice and Bob locally sample the missing part of $XF$ and $YG$, respectively, and execute the protocol $\Pi$.
 Applying Lemma~\ref{lem:rdelim}, we get the claim.
\end{proof}

\subsection{Proofs of the Lemmata}

We now provide the proof of our lemmas.

\begin{proof}[\textbf{Proof of Lemma \ref{lem:shearer}}]
The result follows from the following chain of inequalities:

\begin{align*}
&\condmutinf{U_S}{V}{S} & \hspace{-20em}\\
&=\expec{s}{\sum_{i\in s}\condmutinf{U_i}{V}{U_{[<i]\cap s},S=s}} & \hspace{-20em}\\
&=\expec{s}{\sum_{i\in s}\condmutinf{U_i}{VU_{[<i]\setminus s}}{U_{[<i]\cap s}, S=s}-\condmutinf{U_i}{U_{[<i]\setminus s}}{VU_{[<i]\cap s},S=s}}& \hspace{-20em}\mbox{(Chain rule)}\\
&\leq\expec{s}{\sum_{i\in s}\condmutinf{U_i}{VU_{[<i]\setminus s}}{U_{[<i]\cap s}, S=s}} & \hspace{-20em}\mbox{(Fact~\ref{fact:strongsubadditivity})}\\
&=\expec{s}{\sum_{i\in s}\condmutinf{U_i}{V}{U_{<i},S=s}} & \hspace{-20em}\mbox{(Chain rule and independence of $U=U_1\otimes\ldots\otimes U_m$ and $S$)}\\
&=\expec{s}{\sum_{i\in s}\condmutinf{U_i}{V}{U_{<i}}}\hspace{6cm}& \hspace{-20em} \mbox{($UV$ is independent of $S$)}\\
&=\sum_i\prob{i\in S}\condmutinf{U_i}{V}{U_{<i}} & \hspace{-20em}\\
&\leq \frac{1}{k}\mutinf{U}{V}. & \hspace{-20em} \mbox{(Chain rule).}
\end{align*}

\end{proof}

\begin{proof}[\textbf{Proof of Lemma \ref{lem:coherentJRS}}]
	Note that $\mutinf{R_Y}{R_XXA}_{\Psi}=\mutinf{Y}{R_XXA}_{\Psi}$. By Lemma~\ref{lem:decoupling}, there exists a register $B'$ and an isometry $U_{YB}$ mapping $\H_Y\otimes\H_B$ to $\H_Y\otimes\H_{B'}$ such that
	\begin{equation}\label{eqn:4.2}
		h^2\br{U_{YB}\Psi\br{U_{YB}}^{\dag},\br{\sum_{y}\sqrt{\mu_Y\br{y}}\ket{yy}}\otimes\br{\sum_{y}\sqrt{\mu_Y\br{y}}\bra{yy}}_{YR_Y}\otimes\ketbra{\Phi}_{XR_XAB'}}\leq\epsilon,
	\end{equation}
	where $\ket{\Phi}$ is a purification of $\Psi_{XR_XYA}$.

Note that
	\[\mutinf{X}{B'}_{U_{YB}\Psi\br{U_{YB}}^{\dag}}\leq\mutinf{X}{YR_YB'}_{U_{YB}\Psi\br{U_{YB}}^{\dag}}=\mutinf{X}{YR_YB}_{\Psi},\] where the inequality is from Fact~\ref{fact:strongsubadditivity}. By Fact~\ref{fac:onewaycompression}, there exists a one-way quantum protocol, where Alice is given $x\sim\mu_X(x)$ and she sends $\O\br{\br{\mutinf{X}{YR_YB}_{\Psi}+1}/\delta^2}$ qubits to Bob such that
	\[h^2\br{\sum_x\mu_X\br{x}\ketbra{x}_X\otimes\tilde{\psi}^x_{AB'},\br{U_{YB}\Psi\br{U_{YB}}^{\dag}}_{XAB'}}\leq\delta^2,\]
	where $\tilde{\psi_x}$ is the shared state between Alice and Bob in the end of the protocol given input $x$.
	Combining with the previous inequality and Fact~\ref{fac:monotonequantumoperation}, we have
		\[h^2\br{\sum_x\mu_X\br{x}\ketbra{x}_X\otimes\tilde{\psi}^x_{AB'},\Phi_{XAB'}}\leq 2 \delta^2+ 2\epsilon.\]
	Hence
	\begin{equation}\label{eqn:4.22}
		h^2\br{\sum_x\mu_X\br{x}\ketbra{x}_X\otimes\br{\sum_{y}\sqrt{\mu_Y\br{y}}\ket{yy}}\otimes\br{\sum_{y}\sqrt{\mu_Y\br{y}}\bra{yy}}_{YR_Y}\otimes\tilde{\psi}^x_{AB'},\atop \br{\sum_{y}\sqrt{\mu_Y\br{y}}\ket{yy}}\otimes\br{\sum_{y}\sqrt{\mu_Y\br{y}}\bra{yy}}_{YR_Y}\otimes\Phi_{XAB'}}\leq 2 \delta^2+2 \epsilon.
	\end{equation}
	Combining ~\eqref{eqn:4.2}~\eqref{eqn:4.22}, Fact~\ref{fac:monotonequantumoperation} and triangle inequality, we have
	\[h^2 \br{\sum_x\mu_X\br{x}\ketbra{x}_X\otimes\br{\sum_{y}\sqrt{\mu_Y\br{y}}\ket{yy}}\otimes\br{\sum_{y}\sqrt{\mu_Y\br{y}}\bra{yy}}_{YR_Y}\otimes\tilde{\psi}^x_{AB'},\atop\br{U_{YB}\Psi\br{U_{YB}}^{\dag}}_{XYR_YAB'}}\leq 4 \delta^2 + 6\epsilon.\]
	Bob further applies $\br{U_{YB}}^{-1}$ on the registers $YB'$. He may need to extend the space $\H_B$ by adding the ancilla and trace out after applying $\br{U_{YB}}^{-1}$. By Fact~\ref{fac:monotonequantumoperation}, the Hellinger distance does not increase. We reach the desired conclusion.

\end{proof}

\begin{proof}[\textbf{Proof of Lemma \ref{lem:onerddirectsum}}]
Note that any quantum operation Bob performs does not increase\\ $\condmutinf{X_J}{C B Y_1^J R_{Y_1^J}}{ J X_{<J}}$ because of Fact~\ref{fact:subsystem monotone} and the assumption that $Y_2^J$ is a function of $X_{<J}$. It suffices to consider the global state when Bob receives $C$ and does not perform any operation.
Denote the global state by $\hat{\rho}$. It follows that
\begin{align*}
&\condmutinf{X_J}{C_1 B_1 Y_1^J R_{Y_1^J}}{ J X_{<J})}_{\rho} & \hspace{-12em} \\ &\leq\condmutinf{X_J}{C_1 T_B Y_1^J R_{Y_1^J}}{ J X_{<J})}_{\hat{\rho}} & \hspace{-12em} \mbox{($T_B$ is the marginal of the pre-shared states on Bob's side)} \\
& = \expec{j, x_{<j}}{ I (X_J ; C_1 T_B Y_1^J R_{Y_1^J} | J = j,  X_{<J} = x_{<j})}_{\hat{\rho}}  \\
& = \expec{j, x_{<j}}{ I (X_J ; C_1 T_B  | J = j,  X_{<J} = x_{<j})}_{\hat{\rho}} & \hspace{-12em}~\mbox{($Y_1^JR_{Y_1^J}$ is independent of $X_JC_1T_B$ given $jx_{<j}$.)} \\
	& = \expec{j, x_{<j}}{\condmutinf{X_j }{ C_1 T_B }{  X_{<j} = x_{<j})}}_{\hat{\rho}} & \hspace{-12em}~\mbox{$(XC_1T_B$ is independent of $J$.)} \\
	& = \frac{1}{n} \sum_j  \condmutinf {X_j }{ C_1 T_B }{  X_{<j} }_{\hat{\rho}} & \hspace{-12em} \\
	& = \frac{1}{n}  \mutinf {X }{ C_1 T_B  }_{\hat{\rho}}&~\mbox{(The chain rule of mutual information.)} & \hspace{-12em} \\
	& = \frac{1}{n}  \condmutinf {X }{ C_1 }{ T_B  }_{\hat{\rho}} & \hspace{-12em}~\mbox{($T_B$ is part of the pre-shared entangled state, independent of the input.)} \\
	& \leq\frac{2|C_1|}{n}\leq \frac{2\ell}{n}. & \hspace{-12em}~\mbox{(Lemma~\ref{lem:boundoncmi}.)}
\end{align*}
\end{proof}

\begin{proof}[\textbf{Proof of Lemma \ref{lem:multirddirectsum}}]
We assume the protocol $\Pi$ is in the Yao model defined in Section~\ref{sec:model}. Note that any local operation on Bob's side does not increase $\condmutinf{X_J }{C_rB_r Y_1^J R_{Y_1^J} }{ J X_{<J}}$. It suffices to consider the case that $r$ is odd.
We first convert $\Pi$ to a new protocol $\Pi'$ in a Cleve-Buhrman model using quantum teleportation, where the communication cost doubles. We construct a one-way protocol $\Pi''$ simulating the first $r$ rounds of $\Pi'$ as follows. For this, we use an additional register $P$ on Bob's side. It informs whether Bob aborts the protocol or not. In $\Pi''$, Alice and Bob share the entanglement state as in $\Pi$ and $c_2+c_4+\ldots+c_{r-1}$ copies of EPR states additionally.

For each odd round $t$, Alice measures $(c_{\frac{t-1}{2}}+1)$-th, $\ldots, c_{\frac{t+1}{2}}$-th copies of the EPR pairs on her side in computational basis and treats the outcome as the message Bob sent in round $t-1$. Alice performs exactly same as in $\Pi$. For each even round $t$, Bob measures the register $P$. If it is $1$, he does not perform any further operation. If it is $0$, Bob performs and prepares the message same as he is supposed to send in round $t$ of  $\Pi'$, denoted by $M_t$. Meanwhile, he also measures $(c_{\frac{t}{2}-1}+1)$-th, $\ldots, c_{\frac{t}{2}}$-th copies of the EPR pairs in computational basis. If the outcome is not same as $M_t$, he flips the bit in $P$ to $1$. Otherwise, he proceeds to the next round directly.  For protocol $\Pi''$, we define
$\Gamma_{XR_XYR_YACB}$ to be the global state when Bob has received message $C$ and does not perform any quantum operation; and $\tilde{\Theta}_{XYR_XR_Y\tilde{A}\tilde{B}P}$ to be the global state after Bob performs his quantum operation. Here we drop the superscript $r$ to simplify the notations. It is easy to see that $\prob{P=0}=2^{-2\ell_{B,r}}$.
Set
$$\Theta_{XYR_XR_Y\tilde{A}\tilde{B}}\defeq\tilde{\Theta}^{P=0}_{XYR_XR_Y\tilde{A}\tilde{B}},$$
$\Psi^r_{XYR_XR_YA_rB_rC_r}$ to be the global state of protocol $\Pi$ in round $r$.
We have
\begin{eqnarray}
&&\condmutinf{X_J }{B_rC_r Y_1^J R_{Y_1^J} }{ J X_{<J}}_{\Psi^r}\nonumber\\
&\leq&\condmutinf{X_J }{\tilde{B}Y_1^J R_{Y_1^J} }{ J X_{<J}}_{\Theta}\nonumber\\ &\leq&2^{2\ell_{B,r}} \condmutinf{X_J }{\tilde{B} Y_1^J R_{Y_1^J} }{ J X_{<J}P}_{\tilde{\Theta}}, \label{eqn:3}
\end{eqnarray}
where the first inequality is from the fact that the Cleve-Buhrman model obtained by quantum teleportation can be converted back to Yao's model via local quantum operations and Fact~\ref{fac:monotonequantumoperation}.
As $\Pi''$ is a one-way protocol, we have
\begin{align*}
&\condmutinf{X_J }{\tilde{B} Y_1^J R_{Y_1^J} }{ J X_{<J}P}_{\tilde{\Theta}} & \hspace{-5em}\\
&\leq \condmutinf{X_J }{\tilde{B} Y_1^J R_{Y_1^J}P }{ J X_{<J}}_{\tilde{\Theta}}& \hspace{-5em} ~\mbox{(Chain rule of the mutual information)}\\
&\leq \condmutinf{X_J}{CBY_1^J R_{Y_1^J}}{JX_{<J}}_{\Gamma}& \hspace{-5em} ~\mbox{(Fact~\ref{fact:subsystem monotone} and the fact that $Y_2^J$ is a function of $X_{<J}$)}\\
&=\condmutinf{X_J}{CT_B}{JX_{<J}}_{\Gamma} & \hspace{-5em} ~\mbox{(From the assumption $Y_1^J$ is independent of $X_{\geq J}$ given $JX_{<J}$)}
\end{align*}

By Lemma~\ref{lem:onerddirectsum}, we have
\[\condmutinf{X_J}{CT_B}{JX_{<J}}_{\Gamma}\leq\frac{4\ell_{A,r}}{n}\]

Combining with~\eqref{eqn:3}, we obtain that for $r=$ odd,
$$\condmutinf{X_J }{B_rC_r Y_1^J R_{Y_1^J} }{ J X_{<J}}_{\Psi^r}\leq  \frac{\ell_{A,r}2^{{2\ell_{B,r}+2}}}{n}.$$

\end{proof}

\begin{proof}[\textbf{Proof of Lemma \ref{lem:rdelim}}]

	We will show that for $i=1$, $\gamma_1 = \delta_1 = \epsilon_1$,
	and for $i > 1$,
	\begin{align*}
	\gamma_i &\leq \gamma_{i-1} + \epsilon_{i-2}+ \epsilon_i, \\
	\delta & \leq \delta_{i-1} + \epsilon_{i-2}+ \epsilon_i,
	\end{align*}
	from which the result follows.
	
	First, notice that $\rho^i_{R_X Y R_Y B_i C_i}~,~\rho^i_{R_X} \otimes \rho^i_{Y R_Y B_i C_i}, \rho^i_{R_Y X R_X A_i C_i}, \rho^i_{R_Y} \otimes \rho^i_{X R_X A_i C_i}$ are all classical-quantum states. By Fact~\ref{fac:Uhlmann}, we can assume that the $V^i$'s are control isometries controlled by $X$  for odd $i$ and controlled by $Y$ for even $i$ (note that $X=R_X$ and $Y=R_Y$).
	For $i=1$, we can rewrite
	\begin{align*}
	\epsilon_1 & = h \br{\rho^1_{R_X Y R_Y B_1 C_1}, \rho^1_{R_X} \otimes \rho^1_{Y R_Y B_1 C_1}} \\
	& = h \br{V^1 \br{ \rho^1_{X R_X Y R_Y A_1 B_1 C_1}},
		\rho_{X R_X} \otimes \rho^1_{ \tilde{X}_1 \tilde{R}_{X_1} Y R_Y \tilde{A}_1 B_1 C_1}} \\
	& = h \br{V^1 \br{ \rho^1_{X R_X A_1 B_1 C_1}} \otimes \rho_{Y R_Y},
		\rho_{X R_X} \otimes \rho_{Y R_Y} \otimes \rho^1_{ \tilde{X}_1 \tilde{R}_{X_1} \tilde{A}_1 B_1 C_1} }.
	\end{align*}
	
	For $\gamma_1$, we can further apply $V_0$ on both sides. Note that $V^0$ and $V^1$ commute, we have
	\begin{align*}
	\epsilon_1 & = h \br{V^0\br{V^1  \br{ \rho^1_{X R_X A_1 B_1 C_1}} \otimes \rho_{Y R_Y}}, 
		V^0\br{\rho_{X R_X} \otimes \rho_{Y R_Y} \otimes \rho^1_{ \tilde{X}_1 \tilde{R}_{X_1} \tilde{A}_1 B_1 C_1}}} \\
	& = h \br{V^1 V^0 \br{ \rho^1_{X R_X A_1 B_1 C_1} \otimes \rho_{Y R_Y}}, 
		\rho_{X R_X} \otimes \rho_{Y R_Y} \otimes \rho^1_{ \tilde{X}_1 \tilde{R}_{X_1} \tilde{Y}_0 \tilde{R}_{Y_0} \tilde{A}_1 \tilde{B}_1 C_1}} \\
	& = \gamma_1,
	\end{align*}
	where the second equality follows from the fact that $\rho^1_{ \tilde{X}_1 \tilde{R}_{X_1} \tilde{Y}_0 \tilde{R}_{Y_0} \tilde{A}_1 \tilde{B}_1 C_1}=\rho^1_{\tilde{Y}_0 \tilde{R}_{Y_0}}\otimes\rho^1_{ \tilde{X}_1 \tilde{R}_{X_1}\tilde{A}_1 \tilde{B}_1 C_1}$.
	
	For $\delta_1$, we instead get rid of the uncorrelated state $\rho_Y$
	before applying $V^{Y|X} = V^{Y|X}_{X \rightarrow X Y R_Y}$ acting as a
	control unitary on $X$ and such that
	$V^{Y|X} \br{ \rho_{X R_X} }
	= \sigma_{XY R_X R_Y}$,
	as well as $V^{Y|X} \br{ \rho^1_{X R_X A_1 B_1 C_1} }
	= \sigma^1_{X R_X Y R_Y A_1 B_1 C_1}$,
	and get by then applying $V_0$,
	\begin{align*}
	\epsilon_1 & = h \br{V^1 \br{ \rho^1_{X R_X A_1 B_1 C_1}},
		\rho_{X R_X} \otimes  \rho^1_{ \tilde{X}_1 \tilde{R}_{X_1} \tilde{A}_1 B_1 C_1} } \\
	& = h \br{V^{Y|X} V^1 \br{\rho^1_{X R_X A_1 B_1 C_1}},
		V^{Y|X} \br{ \rho_{X R_X}} \otimes \rho^1_{ \tilde{X}_1 \tilde{R}_{X_1} \tilde{A}_1 B_1 C_1} } \\
	& = h \br{ V^1  \br{ \sigma^1_{X R_X Y R_Y A_1 B_1 C_1}},
		\sigma_{X R_X Y R_Y}  \otimes \rho^1_{ \tilde{X}_1 R_{X_1} \tilde{A}_1 B_1 C_1} } \\
	&=  h \br{V^0\br{ V^1  \br{ \sigma^1_{X R_X Y R_Y A_1 B_1 C_1}}},
		V^0\br{\sigma_{X R_X Y R_Y}  \otimes \rho^1_{ \tilde{X}_1 R_{X_1} \tilde{A}_1 B_1 C_1}}}\\
	& = h \br{ V^1 V^0 \br{ \sigma^1_{X R_X Y R_Y A_1 B_1 C_1} },
		\sigma_{X R_X Y R_Y}  \otimes \rho^1_{ \tilde{X}_1 \tilde{R}_{X_1} \tilde{Y}_0 \tilde{R}_{Y_0} \tilde{A}_1 \tilde{B}_1 C_1} } \\
	& = \delta_1,
	\end{align*}
	where the third equality is from the fact that $V^1$ and $V^{Y|X}$ commute.
	
	For $i > 1$, we focus on even $i$; the case odd $i$ is proven similarly.
	Denote
	\begin{align*}
	U^i & = U^i_{Y B_{i-1} C_{i-1} \rightarrow Y B_i C_i}, \\
	U^{i-1} & = U^{i-1}_{X A_{i-2} C_{i-2} \rightarrow X A_{i-1} C_{i-1}}, \\
	\end{align*}
	the protocol unitaries.
	Then
	\begin{align}
	U^i \br{ \rho^{i-1}_{X R_X Y R_Y A_{i-1} B_{i-1} C_{i-1}}} & = \rho^i_{X R_X Y R_Y A_i B_i C_i}, \label{eqn:rhoi-1toi}\\
	U^{i-1} \br{ \rho^{i-2}_{X R_X Y R_Y A_{i-2} B_{i-2} C_{i-2}}} & = \rho^{i-1}_{X R_X Y R_Y A_{i-1} B_{i-1} C_{i-1}},\label{eqn:rhoi-2toi-1} \\
	U^i \br{\sigma^{i-1}_{X R_X Y R_Y A_{i-1} B_{i-1} C_{i-1}}} & = \sigma^i_{X R_X Y R_Y A_i B_i C_i}, \label{eqn:sigmai-1toi}\\
	U^{i-1} \br{ \sigma^{i-2}_{X \bar{R}_X Y \bar{R}_Y A_{i-2} B_{i-2} C_{i-2}} } & = \sigma^{i-1}_{X R_X Y R_Y A_{i-1} B_{i-1} C_{i-1}}\label{eqn:sigmai-2toi-1}.
	\end{align}
	
	For $\gamma_i$, we first reduce to $\gamma_{i-1}$ using the triangle inequality:
	\begin{align*}
	\gamma_i & = h \br{V^i V^{i-1} \br{ \rho^i_{X R_X Y R_Y A_i B_i C_i}},
		\rho_{X R_X} \otimes \rho_{Y R_Y} \otimes \rho^i_{ \tilde{X}_{i-1} \tilde{R}_{X_{i-1}} \tilde{Y}_{i} \tilde{R}_{Y_{i}} \tilde{A}_i \tilde{B}_i C_i}} \\
	& \leq h \br{V^i V^{i-1} \br{ \rho^i_{X R_X Y R_Y A_i B_i C_i} },\atop
		V^i U^i \br{V^{i-2}}^{\dagger}
		\br{ \rho_{X R_X} \otimes \rho_{Y R_Y} \otimes \rho^{i-1}_{\tilde{X}_{i-1} \tilde{R}_{X_{i-1}} \tilde{Y}_{i-2} \tilde{R}_{Y_{i-2}} \tilde{A}_{i-1} \tilde{B}_{i-1} C_{i-1}}}} \\
	& + h\br{V^i U^i \br{V^{i-2}}^{\dagger}
		\br{\rho_{X R_X} \otimes \rho_{Y R_Y} \otimes \rho^{i-1}_{\tilde{X}_{i-1} \tilde{R}_{X_{i-1}} \tilde{Y}_{i-2} \tilde{R}_{Y_{i-2}} \tilde{A}_{i-1} \tilde{B}_{i-1} C_{i-1}}},  \atop
		\rho_{X R_X} \otimes \rho_{Y R_Y} \otimes \rho^i_{ \tilde{X}_{i-1} \tilde{R}_{X_{i-1}} \tilde{Y}_{i} \tilde{R}_{Y_{i}} \tilde{A}_i \tilde{B}_i C_i}}.
	\end{align*}
	Indeed, by rearranging and using that $\br{U^i}^\dagger$, acting on Bob's side, and $V_{i-1}$, acting on Alice's side, commute, we get that the first term is equal to $\gamma_{i-1}$:
	\begin{align*}
	&  h \br{V^i V^{i-1} \br{ \rho^i_{X R_X Y R_Y A_i B_i C_i} },\atop
		V^i U^i \br{V^{i-2}}^{\dagger}
		\br{ \rho^X_{X R_X} \otimes \rho^Y_{Y R_Y} \otimes \rho^{i-1}_{\tilde{X}_{i-1} \tilde{R}_{X_{i-1}} \tilde{Y}_{i-2} \tilde{R}_{Y_{i-2}} \tilde{A}_{i-1} \tilde{B}_{i-1} C_{i-1}}}}\\
	&= h \br{V^{i-2} \br{U^i}^{\dagger} V^{i-1} \br{ \rho^i_{X R_X Y R_Y A_i B_i C_i} },\atop
	 \rho^X_{X R_X} \otimes \rho^Y_{Y R_Y} \otimes \rho^{i-1}_{\tilde{X}_{i-1} \tilde{R}_{X_{i-1}} \tilde{Y}_{i-2} \tilde{R}_{Y_{i-2}} \tilde{A}_{i-1} \tilde{B}_{i-1} C_{i-1}}}\\
	& = h \br{V^{i-2}V^{i-1}\br{ \rho^i_{X R_X Y R_Y A_i B_i C_i} },\atop
		 \rho^X_{X R_X} \otimes \rho^Y_{Y R_Y} \otimes \rho^{i-1}_{\tilde{X}_{i-1} \tilde{R}_{X_{i-1}} \tilde{Y}_{i-2} \tilde{R}_{Y_{i-2}} \tilde{A}_{i-1} \tilde{B}_{i-1} C_{i-1}}}\\
	&=h \br{V^{i-1}V^{i-2}\br{U^i}^{\dagger}\br{ \rho^{i-1}_{X R_X Y R_Y A_{i-1} B_{i-1} C_{i-1}} },\atop
		\rho^X_{X R_X} \otimes \rho^Y_{Y R_Y} \otimes \rho^{i-1}_{\tilde{X}_{i-1} \tilde{R}_{X_{i-1}} \tilde{Y}_{i-2} \tilde{R}_{Y_{i-2}} \tilde{A}_{i-1} \tilde{B}_{i-1} C_{i-1}}}=\gamma_{i-1},
	\end{align*}
	where the last equality is from Eq.~\eqref{eqn:rhoi-1toi} and the commutativity of $V^{i-1}$ and $V^{i-2}$.
	For the second term, we again use the triangle inequality to reduce to $\epsilon_i$:
	\begin{align}
	& h\br{V^i U^i \br{V^{i-2}}^{\dagger}
		\br{\rho_{X R_X} \otimes \rho_{Y R_Y} \otimes \rho^{i-1}_{\tilde{X}_{i-1} \tilde{R}_{X_{i-1}} \tilde{Y}_{i-2} \tilde{R}_{Y_{i-2}} \tilde{A}_{i-1} \tilde{B}_{i-1} C_{i-1}}}, \atop
		\quad \quad \quad \quad
		\rho_{X R_X} \otimes \rho_{Y R_Y} \otimes \rho^i_{ \tilde{X}_{i-1} \tilde{R}_{X_{i-1}} \tilde{Y}_{i} \tilde{R}_{Y_{i}} \tilde{A}_i \tilde{B}_i C_i}} \nonumber\\
	& \leq  h\br{V^i U^i \br{V^{i-2}}^{\dagger}
		\br{ \rho_{Y R_Y} \otimes \rho^{i-1}_{\tilde{X}_{i-1} \tilde{R}_{X_{i-1}} \tilde{Y}_{i-2} \tilde{R}_{Y_{i-2}} \tilde{A}_{i-1} \tilde{B}_{i-1} C_{i-1}}},\atop
		V^i
		\br{ \rho^{i}_{\tilde{X}_{i-1} \tilde{R}_{X_{i-1}} Y R_Y \tilde{A}_{i} B_{i} C_{i}}} } \nonumber\\
	& + h\br{ V^i
		\br{  \rho^{i}_{\tilde{X}_{i-1} \tilde{R}_{X_{i-1}} Y R_Y \tilde{A}_{i} B_{i} C_{i}} }, \nonumber
		\rho_{Y R_Y} \otimes \rho^i_{ \tilde{X}_{i-1} \tilde{R}_{X_{i-1}} \tilde{Y}_{i} \tilde{R}_{Y_{i}} \tilde{A}_i \tilde{B}_i C_i}}\nonumber\\
	&=h\br{V^i U^i \br{V^{i-2}}^{\dagger}
		\br{ \rho_{Y R_Y} \otimes \rho^{i-1}_{\tilde{X}_{i-1} \tilde{R}_{X_{i-1}} \tilde{Y}_{i-2} \tilde{R}_{Y_{i-2}} \tilde{A}_{i-1} \tilde{B}_{i-1} C_{i-1}}},\atop
		V^i
		\br{ \rho^{i}_{\tilde{X}_{i-1} \tilde{R}_{X_{i-1}} Y R_Y \tilde{A}_{i} B_{i} C_{i}}} }+\epsilon_i\label{eqn:VUV},
	\end{align}
	in which we also use the fact that $U^i$, $V^i$ and $V^{i-2}$ all act on Bob's side to get rid of the uncorrelated state $\rho_{X R_X}$. Notice that $\tilde{X}_{i-1} \tilde{R}_{X_{i-1}}=\tilde{X}_i \tilde{R}_{X_i}$ as $i$ is even.  For the first term, we use the fact that $V^{i-2}$, acting on Bob's side,  and $U^{i-1}$, acting on Alice's side, commute to go from $\rho^{i-1}$ to $\rho^{i-2}$, and find that it equals $\epsilon_{i-2}$:
	\begin{align}
 &  h\br{V^i U^i \br{V^{i-2}}^{\dagger}
 	\br{ \rho_{Y R_Y} \otimes \rho^{i-1}_{\tilde{X}_{i-1} \tilde{R}_{X_{i-1}} \tilde{Y}_{i-2} \tilde{R}_{Y_{i-2}} \tilde{A}_{i-1} \tilde{B}_{i-1} C_{i-1}}},\atop
 	V^i
 	\br{ \rho^{i}_{\tilde{X}_{i-1} \tilde{R}_{X_{i-1}} Y R_Y \tilde{A}_{i} B_{i} C_{i}}} } \nonumber\\
	& = h\br{ \rho_{Y R_Y} \otimes \rho^{i-1}_{\tilde{X}_{i-1} \tilde{R}_{X_{i-1}} \tilde{Y}_{i-2} \tilde{R}_{Y_{i-2}} \tilde{A}_{i-1} \tilde{B}_{i-1} C_{i-1}},\atop
		V^{i-2}\br{U^i}^{\dagger}
		\br{ \rho^{i}_{\tilde{X}_{i-1} \tilde{R}_{X_{i-1}} Y R_Y \tilde{A}_{i} B_{i} C_{i}}} }\nonumber\\
	& = h\br{ \rho_{Y R_Y} \otimes \rho^{i-1}_{\tilde{X}_{i-1} \tilde{R}_{X_{i-1}} \tilde{Y}_{i-2} \tilde{R}_{Y_{i-2}} \tilde{A}_{i-1} \tilde{B}_{i-1} C_{i-1}},\atop
		V^{i-2}
		\br{ \rho^{i-1}_{\tilde{X}_{i-1} \tilde{R}_{X_{i-1}} Y R_Y \tilde{A}_{i-1} B_{i-1} C_{i-1}}} }\nonumber\\
	& = h\br{ U^{i-1}\br{\rho_{Y R_Y} \otimes \rho^{i-2}_{\tilde{X}_{i-2} \tilde{R}_{X_{i-2}} \tilde{Y}_{i-2} \tilde{R}_{Y_{i-2}} \tilde{A}_{i-2} \tilde{B}_{i-2} C_{i-2}}},\atop
		V^{i-2}U^{i-1}
		\br{ \rho^{i-2}_{\tilde{X}_{i-2} \tilde{R}_{X_{i-2}} Y R_Y \tilde{A}_{i-2} B_{i-2} C_{i-2}}} }\nonumber\\
	&=h\br{ \rho_{Y R_Y} \otimes \rho^{i-2}_{\tilde{X}_{i-2} \tilde{R}_{X_{i-2}} \tilde{Y}_{i-2} \tilde{R}_{Y_{i-2}} \tilde{A}_{i-2} \tilde{B}_{i-2} C_{i-2}},\atop
		V^{i-2}
		\br{ \rho^{i-2}_{\tilde{X}_{i-2} \tilde{R}_{X_{i-2}} Y R_Y \tilde{A}_{i-2} B_{i-2} C_{i-2}}} }=\epsilon_{i-2}.\label{eqn:VUV2}
	\end{align}
	The bound on $\gamma_i$ follows by combining these.

	To handle $\delta_i$,  similarly to $V^{Y|X}$, we define
	$V^{X|Y} = V^{X|Y}_{Y \rightarrow Y X R_X}$ acting as a
	control unitary on $Y$ and such that
	$V^{X|Y} \br{ \rho_{Y R_Y} }
	= \sigma_{XY R_X R_Y}$.
	We first reduce $\delta_i$ to $\delta_{i-1}$ using the triangle inequality:
	\begin{align*}
	\delta_i & = h \br{V^i V^{i-1} \br{ \sigma^i_{X R_X Y R_Y A_i B_i C_i}},
		\sigma_{X R_X Y R_Y} \otimes \rho^i_{ \tilde{X}_{i-1} \tilde{R}_{X_{i-1}} \tilde{Y}_{i} \tilde{R}_{Y_{i}} \tilde{A}_i \tilde{B}_i C_i}} \\
	& \leq h \br{V^i V^{i-1} \br{ \sigma^i_{X R_X Y R_Y A_i B_i C_i} },\atop
		V^i U^i \br{V^{i-2}}^{\dagger}
		\br{ \sigma_{X R_X Y R_Y} \otimes \rho^{i-1}_{\tilde{X}_{i-1} \tilde{R}_{X_{i-1}} \tilde{Y}_{i-2} \tilde{R}_{Y_{i-2}} \tilde{A}_{i-1} \tilde{B}_{i-1} C_{i-1}}}} \\
	& + h\br{V^i U^i \br{V^{i-2}}^{\dagger}
		\br{\sigma_{X R_X Y R_Y} \otimes \rho^{i-1}_{\tilde{X}_{i-1} \tilde{R}_{X_{i-1}} \tilde{Y}_{i-2} \tilde{R}_{Y_{i-2}} \tilde{A}_{i-1} \tilde{B}_{i-1} C_{i-1}}}, \atop
		\sigma_{X R_X Y R_Y} \otimes \rho^i_{ \tilde{X}_{i-1} \tilde{R}_{X_{i-1}} \tilde{Y}_{i} \tilde{R}_{Y_{i}} \tilde{A}_i \tilde{B}_i C_i}}. \\
	\end{align*}
	Similarly to $\gamma_i$, we get that the first term is equal to $\delta_{i-1}$:
	\begin{align*}
	\delta_{i-1} & = h \br{V^i V^{i-1} \br{ \sigma^i_{X R_X Y R_Y A_i B_i C_i} },
		\br{ \sigma_{X R_X Y R_Y} \otimes \rho^{i-1}_{\tilde{X}_{i-1} \tilde{R}_{X_{i-1}} \tilde{Y}_{i-2} \tilde{R}_{Y_{i-2}} \tilde{A}_{i-1} \tilde{B}_{i-1} C_{i-1}}}}.
	\end{align*}
	For the second term, since $U_i$, $V_i$ and $V_{i-2}$ all act on Bob's side,  we apply $\br{V^{X | Y}}^\dagger$ on both side to get the same term as for $\gamma_i$, which was proved to be at most  $\epsilon_i + \epsilon_{i-2}$:
	\begin{align*}
	&h\br{ V^i U^i \br{V^{i-2}}^{\dagger}
		\br{\sigma_{X R_X Y R_Y} \otimes \rho^{i-1}_{\tilde{X}_{i-1} \tilde{R}_{X_{i-1}} \tilde{Y}_{i-2} \tilde{R}_{Y_{i-2}} \tilde{A}_{i-1} \tilde{B}_{i-1} C_{i-1}}}, \atop
		\sigma_{X R_X Y R_Y} \otimes \rho^i_{ \tilde{X}_{i-1} \tilde{R}_{X_{i-1}} \tilde{Y}_{i} \tilde{R}_{Y_{i}} \tilde{A}_i \tilde{B}_i C_i}}. \\
	& = h\br{\br{V^{Y | X}}^\dagger V^i U^i \br{V^{i-2}}^{\dagger}
		\br{\sigma_{X R_X Y R_Y} \otimes \rho^{i-1}_{\tilde{X}_{i-1} \tilde{R}_{X_{i-1}} \tilde{Y}_{i-2} \tilde{R}_{Y_{i-2}} \tilde{A}_{i-1} \tilde{B}_{i-1} C_{i-1}}}, \atop
		\br{V^{Y | X}}^\dagger \sigma_{X R_X Y R_Y} \otimes \rho^i_{ \tilde{X}_{i-1} \tilde{R}_{X_{i-1}} \tilde{Y}_{i} \tilde{R}_{Y_{i}} \tilde{A}_i \tilde{B}_i C_i}}. \\
	& =  h\br{V^i U^i \br{V^{i-2}}^{\dagger} 
		\br{\rho_{Y R_Y} \otimes \rho^{i-1}_{\tilde{X}_{i-1} \tilde{R}_{X_{i-1}} \tilde{Y}_{i-2} \tilde{R}_{Y_{i-2}} \tilde{A}_{i-1} \tilde{B}_{i-1} C_{i-1}}},
		\rho_{Y R_Y} \otimes \rho^i_{ \tilde{X}_{i-1} \tilde{R}_{X_{i-1}} \tilde{Y}_{i} \tilde{R}_{Y_{i}} \tilde{A}_i \tilde{B}_i C_i}} \\
	& =  h\br{V^i U^i \br{V^{i-2}}^{\dagger} 
		\br{\rho_{X R_X} \otimes \rho_{Y R_Y} \otimes \rho^{i-1}_{\tilde{X}_{i-1} \tilde{R}_{X_{i-1}} \tilde{Y}_{i-2} \tilde{R}_{Y_{i-2}} \tilde{A}_{i-1} \tilde{B}_{i-1} C_{i-1}}}, \atop
		\rho_{X R_X} \otimes \rho_{Y R_Y} \otimes \rho^i_{ \tilde{X}_{i-1} \tilde{R}_{X_{i-1}} \tilde{Y}_{i} \tilde{R}_{Y_{i}} \tilde{A}_i \tilde{B}_i C_i}} \\
	& \leq \epsilon_i + \epsilon_{i-2}~\hspace{4cm}\mbox{\br{\text{Eqs.\br{~\eqref{eqn:VUV}\eqref{eqn:VUV2}}}}}
	\end{align*}
	The bound on $\delta_i$ follows by combining these.

\end{proof}

\bibliographystyle{alpha}
\bibliography{reference}
\end{document}